\newcommand{\Section}[1]%
{\section{#1}\setcounter{equation}{0}%
\setcounter{theorem}{0}}
\newtheorem{theorem}{Theorem}
\newtheorem{lemma}[theorem]{Lemma}
\newtheorem{pro}[theorem]{Proposition}
\newtheorem{assumption}[theorem]{Assumption}
\def\co{\mathbb{C}}
\def\ze{\mathbb{Z}}
\newenvironment{proof}[1]%
{\par\noindent{\em #1:\ }}%
{~\rule{2mm}{2mm}\par\bigskip}
\begin{document}
\newpage\thispagestyle{empty}
{\topskip 2cm
\begin{center}
{\Large\bf Stability of the Spectral Gap for Lattice Fermions\\} 
\bigskip\bigskip
{\Large Tohru Koma\footnote{\small \it Department of Physics, Gakushuin University, Mejiro, Toshima-ku, Tokyo 171-8588, JAPAN,
{\small\tt e-mail: tohru.koma@gakushuin.ac.jp}}
\\}
\end{center}
\vfil
\noindent
{\bf Abstract:} It is widely believed that the spectral gap 
above the Fermi sea ground state of lattice free fermions is stable against generic weak interactions.  
Quite recently, Hastings presented a new interesting idea to prove the stability for Majorana free fermions. 
He ingeniously used Majorana fermions so that the unperturbed Hamiltonian of the free fermion part 
is mapped into a frustration-free form. 
In this paper, we refine Hastings's argument, and give a complete proof of the stability of 
the spectral gap above the Fermi sea ground state against generic weak interactions 
for generic lattice free fermions.
\par
\noindent
\vfil}


\Section{Introduction} 

Quite recently, Hastings \cite{Hastings} presented a new interesting idea to prove the stability of the spectral gap 
above the Fermi sea ground state of lattice free fermions against generic weak interactions. 
It ingeniously uses Majorana fermions so that the unperturbed Hamiltonian of the free fermion part 
is mapped into a frustration-free form. 
Although the stability is widely believed to hold, there has been no proof so far. 
In this paper, we refine Hastings's argument, and give a complete proof of the statement. 
Although Hastings started from a Hamiltonian written in terms of Majorana operators, 
we will begin with a lattice fermion system which conserves the charge. 
In particular, we do not require the translational invariance of the system.

We organize the present paper as follows. In Sec.~\ref{Sec:ModelResult}, we describe the main result 
for the present fermion system. In Sec.~\ref{Sec:TransFree}, the unperturbed Hamiltonian 
of the free fermion part is mapped into a frustration-free form. 
Under the transformation of the fermion operators, the interactions between the fermions 
is also transformed into a certain form which satisfies local boundedness. 
In Sec.~\ref{Sec:conneGS}, following the idea of Hastings \cite{Hastings,{BravHast}}, 
we will further transform the total Hamiltonian with the interactions so that 
all the local interactions annihilate the ground state of the unperturbed Hamiltonian. 
This property enables us to obtain a relative bound for the perturbation in Sec.~\ref{Sec:relativeBound}.
The proof of the main Theorem~\ref{mainTheorem} below is given in Sec.~\ref{Sec:ProofMainTheorem}. 
Appendices~\ref{append:boundedtildeV}--\ref{Sec:BoundtildecalWZ3s} are devoted to technical estimates. 

\Section{Model and result} 
\label{Sec:ModelResult}

Let $\Lambda$ be a finite sublattice of the $d$-dimensional hypercubic lattice $\ze^d$. 
Firstly, we introduce a metric on the lattice $\Lambda$. 
Let $x_0,x_1,x_2,\ldots,x_n$ be a sequence of sites in the lattice $\Lambda$. 
If the two sites, $x_{j-1}$ and $x_j$, are nearest neighbours for all $j=1,2,\ldots,n$, 
then we call the sequence the path. We also say that the path, $\{x_0,x_1,\ldots,x_n\}$, has the length $n$ 
and connects $x_0$ to $x_n$.  For $x,y\in\Lambda$, we denote by ${\rm dist}(x,y)$ 
the distance which is defined to be the shortest path length 
that is needed to connect $x$ to $y$. When the periodic boundary condition is imposed for the finite lattice $\Lambda$, 
the pairs of nearest neighbours must be defined to be compatible with the periodic boundary condition. 

Consider a discrete Schr\"odinger operator $h$ on the lattice $\Lambda$. 
The Hamiltonian operator $h$ acts on a complex-valued wavefunction $\varphi$ by  
\begin{equation}
(h\varphi)(x)=\sum_{y\in\Lambda}t_{x,y}\varphi(y)\quad \mbox{for \ } x\in \Lambda,
\end{equation}
where the wavefunction $\varphi$ is defined on the finite lattice $\Lambda$, 
and the hopping amplitudes, $t_{x,y}\in\co$, 
satisfy $\overline{t_{y,x}}=t_{x,y}$ for $x,y\in\Lambda$ so that $h$ is self-adjoint.  
Here, $\overline{\cdots}$ denotes the complex conjugate. 

\begin{assumption}
\label{assump:texpdecay}
We assume that there exist positive constants, $C_t$ and $m_t$, which are independent of the size $|\Lambda|$ of 
the lattice such that 
\begin{equation}
\label{texpdecay}
|t_{x,y}|\le C_te^{-m_t{\rm dist}(x,y)}.
\end{equation}
For the spectrum $\sigma(h)$ of the Hamiltonian $h$, we assume that there exists a lower bound $\Delta E>0$ for the spectral gap 
at a Fermi energy $\mathcal{E}_{\rm F}$,  
i.e., the distance between them satisfies ${\rm dist}(\sigma(h),\mathcal{E}_{\rm F})\ge \Delta E$ 
with a positive constant $\Delta E$ which is independent of the size $|\Lambda|$ of the lattice. 
\end{assumption}

In order to describe the corresponding system of many fermions, we introduce the creation and annihilation operators 
for the fermions, which obey the canonical anticommutation relations, 
\begin{equation}
\label{CACR}
\{a_x,a_y^\dagger\}=\delta_{x,y},\quad \{a_x,a_y\}=0, \quad \mbox{and}\quad \{a_x^\dagger,a_y^\dagger\}=0,
\end{equation}
where $\delta_{x,y}$ is Kronecker's delta. The Hamiltonian on the fermion Fock space is given by 
\begin{equation}
\label{calHam}
{H}_0=\sum_{x,y\in\Lambda}a_x^\dagger (t_{x,y}-\mathcal{E}_{\rm F}\delta_{x,y})a_y.
\end{equation}
The ground state is given by filling the states below the Fermi energy $\mathcal{E}_{\rm F}$ with fermions, 
and choosing the empty states above the Fermi energy. 

The total Hamiltonian is given by 
\begin{equation}
H:=H_0+V,
\end{equation}
where the interaction Hamiltonian $V$ is written 
\begin{equation}
V=\sum_{X\subset \Lambda} V_X.
\end{equation}
Each term $V_X$ has the support $X$ and the even parity for the fermion operators. 
Namely, the local interaction $V_X$ is a polynomial which consists of monomials with an even number of 
the fermion operators.  

\begin{assumption}
\label{AssumptionVXfinite}
We assume that $V_X=0$ if $|X|> n^{\rm max}$ for a positive integer $n^{\rm max}$. 
Here, $|X|$ is the cardinality of the set $X$. 
We also assume that there exist positive constants, $C_{V}$ and $m_{V}$, 
which are independent of the size $|\Lambda|$ of the lattice such that 
the local interactions $V_X$ satisfy 
\begin{equation}
\label{assumpVfinite}
\sum_{X\subset\Lambda :\; x,y\in X}\Vert V_X\Vert<C_{V}e^{-m_{V}{\rm dist}(x,y)}.
\end{equation}
\end{assumption}

When the condition that $V_X=0$ if $|X|> n^{\rm max}$ for a positive integer $n^{\rm max}$ does not holds,  
we need to introduce a positive constant $K_h$ as in Assumption~\ref{AssumptionVX} below. 
The role of the constant $K_h$ is to suppress 
the strength of the coupling constants of the many-fermion interactions whose number of the fermions infinitely increases. 

\begin{assumption}
\label{AssumptionVX}
We assume that there exist positive constants, $C_{h,V}$ and $m_{h,V}$, 
which are independent of the size $|\Lambda|$ of the lattice such that 
the local interactions $V_X$ satisfy 
\begin{equation}
\label{assumpV}
\sum_{X\subset\Lambda :\; x,y\in X}(K_h)^{|X|}\Vert V_X\Vert<C_{h,V}e^{-m_{h,V}{\rm dist}(x,y)}
\end{equation}
with a positive constant $K_h>1$.  
\end{assumption}

The constant $K_h>1$ is determined by the single fermion Hamiltonian $h$ so that the statement of Theorem~\ref{mainTheorem} 
below holds. 
Under the assumption about the spectral gap of the single fermion Hamiltonian $h$, 
the constant $K_h$ can be chosen to be finite and independent of the size $|\Lambda|$ of the lattice. More precisely, 
the constant $K_h$ is determined by the decay of the matrix elements of the Fermi-sea projection with large distance. 
(See Appendix~\ref{append:boundedtildeV} for the details.) 
Roughly speaking, the upper bound of the right-hand side of (\ref{assumpV}) implies that 
the interactions $V_X$ decay exponentially 
with the large size of the diameter of the support of $V_X$. The factor $(K_h)^{|X|}$ implies that 
the interactions $V_X$ decay exponentially with the large number $|X|$ of the fermion operators. 

In order to control the strength of the interaction $V$, we introduce a real parameter $s$, and write 
\begin{equation}
H_s:=H_0+sV
\end{equation}
for the Hamiltonian. 

\begin{theorem}
\label{mainTheorem}
Under Assumption~\ref{assump:texpdecay} and Assumption~\ref{AssumptionVX} with a certain positive constant $K_h$,  
there exists a positive constant $s_0$ such that for $|s|\le s_0$, the following holds: 
(i) The ground state of the Hamiltonian $H_s$ is non-degenerate. (ii) 
The spectral gap between the ground-state energy and the first-excited-state energy is greater than 
a positive constant which is independent of the size $|\Lambda|$ of the lattice. 
Namely, the spectral gap above the ground state does not close if the strength of the interaction is weak. 
\end{theorem}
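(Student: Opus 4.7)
The plan is to reduce the proof to a Kato-type relative operator bound for the interaction after a sequence of unitary transformations that (a) turns the free part into a frustration-free Hamiltonian and (b) arranges that the local interaction terms annihilate the unperturbed ground state. From such a bound $\Vert sV\psi\Vert\le \alpha(s)\Vert\widetilde H_0\psi\Vert+\beta(s)\Vert\psi\Vert$ with $\alpha(s)<1$ for $|s|\le s_0$, both non-degeneracy of the ground state and a uniform lower bound on the spectral gap follow from standard min--max arguments applied to the unitarily equivalent Hamiltonian $\widehat H_s$.

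The first step, carried out in Sec.~\ref{Sec:TransFree} along the lines of Hastings's Majorana construction (refined here for a charge-conserving model), is to bring $H_0$ into frustration-free form $\widetilde H_0=\sum_j \tilde n_j$, where the $\tilde n_j$ are mutually commuting non-negative local operators each bounded below by $\Delta E$ on the orthogonal complement of the transformed vacuum $\Omega$. The key analytic input is the Fermi-sea projection $P_-=\chi_{(-\infty,\mathcal{E}_{\rm F})}(h)$, whose matrix elements decay exponentially in ${\rm dist}(x,y)$ at a rate $m_P>0$ by the Combes--Thomas estimate applied to Assumption~\ref{assump:texpdecay}. Under the same canonical transformation every $V_X$ is pushed to a non-local operator $\widetilde V_X$; regrouping its monomials by support produces a decomposition $\widetilde V=\sum_Y\widetilde V_Y$ with exponential decay in the diameter of $Y$, provided the combinatorial growth (proportional to $|X|$) in the number of generated monomials is absorbed into the weight $(K_h)^{|X|}$ of Assumption~\ref{AssumptionVX}; this is the content of Appendix~\ref{append:boundedtildeV} and fixes $K_h$ in terms of $m_P$.

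Next, following Bravyi--Hastings, I would apply the quasi-adiabatic spectral-flow unitary $W_s$ generated by $\widetilde H_0$, producing $\widehat H_s=W_s\widetilde H_sW_s^\ast=\widetilde H_0+s\widehat V$ in which every local term $\widehat V_Y$ of the new interaction satisfies $\widehat V_Y\,\Omega=0$. Because the $\widehat V_Y$ kill $\Omega$ while $\widetilde H_0$ has gap $\Delta E$ above $\Omega$, for each $Y$ one obtains a Kato-type estimate of the form
\[
\Vert \widehat V_Y\psi\Vert^2 \le c_Y\,\langle \psi,\widetilde H_0\psi\rangle\,\Vert\psi\Vert,
\]
with $c_Y$ decaying exponentially in the diameter of $Y$ and in the number of fermion operators entering $\widehat V_Y$; summing over $Y$ with the exponential bounds from the previous step yields the desired global relative bound with constants uniform in $|\Lambda|$. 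Choosing $s_0$ small enough so that $\alpha(s_0)<1/2$ and $\beta(s_0)<\Delta E/4$ completes the proof via min--max.

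The principal obstacle is the joint control, under these transformations, of spatial extent and number of fermion operators. Each factor $a_x$ in $V_X$ spreads into a lattice-wide sum weighted by $P_\pm$, so the number of monomials appearing in $\widetilde V_X$ grows combinatorially in $|X|$, and a purely spatial decay (such as Assumption~\ref{AssumptionVXfinite} with unbounded $n^{\rm max}$) is inadequate. This forces the $(K_h)^{|X|}$ weighting in Assumption~\ref{AssumptionVX} and motivates the lengthy bookkeeping of Appendices~\ref{append:boundedtildeV}--\ref{Sec:BoundtildecalWZ3s}, which must additionally track these decay estimates through the quasi-adiabatic rotation $W_s$ with all constants kept independent of the system size.
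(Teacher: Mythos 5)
Your overall architecture matches the paper: Majorana doubling to make the free part frustration-free, quasi-adiabatic rotation to make local interaction terms annihilate the unperturbed ground state, and a Kato-type relative bound to preserve the gap. But the central analytic step, the passage from local estimates to a size-uniform relative bound, has a real gap.

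Your proposed local estimate $\Vert\widehat V_Y\psi\Vert^2\le c_Y\langle\psi,\widetilde H_0\psi\rangle\Vert\psi\Vert$ is \emph{linear} in $\widetilde H_0$, and ``summing over $Y$'' does not yield a $|\Lambda|$-uniform bound. If one applies Cauchy--Schwarz to each cross term $\langle\psi,\widehat V_X\widehat V_Y\psi\rangle$ and then uses your estimate, the result is
$\Vert\widehat V\psi\Vert^2\le\bigl(\sum_Y\sqrt{c_Y}\bigr)^2\langle\psi,\widetilde H_0\psi\rangle$, and $\sum_Y\sqrt{c_Y}$ grows like $|\Lambda|$ no matter how fast the $c_Y$ decay with the diameter of $Y$, since there is roughly one local term per site. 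The paper's Lemma~\ref{lem:tildecalWXcondition} avoids this by a genuinely different mechanism: since $\tilde{\mathcal{W}}_X\tilde P_{0,X}=0$, one writes $1-\tilde P_{0,X}=\sum_n Q_{n-1}q_{x_n}$ as a telescoping sum of on-site occupation projectors, so that $\tilde{\mathcal{W}}_X=\sum_{m,n}q_{x_m}Q_{m-1}^\ast\tilde{\mathcal{W}}_XQ_{n-1}q_{x_n}$, and then Cauchy--Schwarz on the resulting cross terms pairs the $q_x$'s with the expansion $(\tilde{\mathcal{N}})^2=\sum_x q_x+\sum_{y\ne z}q_yq_z$. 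This gives $\tilde{\mathcal{W}}^2\le(\tilde g/\Delta E)^2(\tilde{\mathcal{H}}_0)^2$ with $\tilde g$ involving only a $\sup$ over single sites of a locally convergent sum (equation~(\ref{assumptionWX})), hence uniform in $|\Lambda|$. Note also that the resulting bound is \emph{quadratic} in $\tilde{\mathcal{H}}_0$, i.e. a degenerate Kato bound $\Vert\tilde{\mathcal{W}}\psi\Vert\le b\Vert\tilde{\mathcal{H}}_0\psi\Vert$ with $\beta=0$, which is what Proposition~\ref{pro:relative} exploits directly; your bound, even if the summation were fixed, would be $\tilde{\mathcal{W}}^2\lesssim c\,\tilde{\mathcal{H}}_0$, a weaker form that requires its own gap argument rather than the ``standard min--max'' you invoke.

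Two smaller points. First, the spectral-flow unitary $\tilde U(s)$ is generated by $\tilde D(s)$, which is built from time evolution under the full interacting Hamiltonian $\tilde{\mathcal{H}}_s$, not from $\tilde{\mathcal{H}}_0$; this matters because the whole construction rests on a continuity bootstrap in $s$ (one assumes the gap stays open on $[0,s]$ and closes the loop via the relative bound), which your write-up leaves implicit. Second, ``arranges that the local interaction terms annihilate $\Omega$'' hides a substantial decomposition: the rotated Hamiltonian is not of the form $\tilde{\mathcal{H}}_0+s\widehat V$ off the shelf; one must peel off three families of terms $\tilde{\mathcal{W}}^{(1)}_Z,\tilde{\mathcal{W}}^{(2)}_Z,\tilde{\mathcal{W}}^{(3)}_Z$ (equations~(\ref{tildecalWZ1s}), (\ref{tildecalWZ2s}), (\ref{tildecalWZ3s})) and further decompose each into compactly supported pieces $\tilde{\mathcal{W}}^{(i)}_{Z,n}$ that individually kill $\tilde P_0(0)$ and have norms vanishing as $s\to0$; this is the content of Appendices~\ref{Sec:LocaApproObs}--\ref{Sec:BoundtildecalWZ3s} and is nontrivial precisely because the naive truncation $\Pi_{\mathcal{Z}_n}$ of a term annihilating $\tilde P_0(0)$ need not annihilate $\tilde P_0(0)$.
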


\noindent
{\it Remark:} The stability of the spectral gap was proved by using series expansion methods 
for translationally invariant fermion \cite{GMP} and spin systems \cite{KennedyTasaki,Yarotsky}. 
In this paper, we do not require the translational invariance, and hence the statement of the above theorem holds for 
disordered or aperiodic systems. Hastings's approach also suggests that fermionic perturbation theory is applicable to 
the same situation. Actually, De Roeck and Salmhofer \cite{RS} reached a very similar result by using a Feynman graph method.   

\Section{Transformations for free fermions}
\label{Sec:TransFree}

In this section, we write the Hamiltonian $H_0$ in a positive semi-definite quadratic form, following  
the idea of Hastings \cite{Hastings}. More precisely, we first decompose the fermion operator $a_x$ into 
two Majorana fermions, and then embed the two Hamiltonians, $H_0$ and $-H_0$, into a doubled fermion Fock space 
as a direct sum of the two Hamiltonians. Here, we stress that the two Hamiltonians have opposite signs.  
This procedure enables us to realize a positive semi-definite quadratic form of the doubled Hamiltonian 
in terms of certain local fermion operators. This is nothing but a frustration-free form of the Hamiltonian. 
Therefore, the techniques which were developed for frustration-free systems can be applied to the resulting system. 

\subsection{Bogoliubov-de Gennes form of the Hamiltonian $H_0$}

In order to write the Hamiltonian $H_0$ in terms of Majorana fermions, we first express $H_0$ 
in a Bogoliubov-de Gennes form, which has a particle-hole symmetry. 

For the matrix elements in the Hamiltonian $H_0$ of (\ref{calHam}), 
we write $T_{x,y}=t_{x,y}-\mathcal{E}_{\rm F}\delta_{x,y}$. 
By using the first one in the canonical anticommutation relations (\ref{CACR}), the Hamiltonian ${H}_0$ of 
(\ref{calHam}) can be written  
\begin{eqnarray}
\label{calHammat}
{H}_0&=& \frac{1}{2}\sum_{x,y\in\Lambda}\left(a_x^\dagger T_{x,y}a_y-a_y T_{x,y}a_x^\dagger\right)
+\frac{1}{2}\sum_x T_{x,x} \nonumber\\
&=& \frac{1}{2}\left(a^\dagger T a -a\; ^tT a^\dagger \right)+\frac{1}{2}\sum_x T_{x,x}\nonumber\\
&=&\frac{1}{2}\left(\begin{matrix}{a^\dagger , a}\end{matrix}\right)
\left(\begin{matrix}T & 0 \\ 0 & -\; ^tT \\\end{matrix}\right)
\left(\begin{matrix}a \\ a^\dagger \\\end{matrix}\right)+\frac{1}{2}\sum_x T_{x,x},
\end{eqnarray}
where we have written $a=(a_x)_{x\in\Lambda}$ for the vector, 
and $^t T$ denotes the transpose of the matrix $T=(T_{x,y})$. 

Consider the unitary matrix $U=(U_{x,p})$ which diagonalizes the matrix $T$, i.e., $U^\ast TU=D$ with 
the diagonal matrix $D$ with the diagonal elements, $\lambda_1,\lambda_2,\ldots.$
The corresponding canonical transformation for the fermions is given by 
$$
a_x=\sum_p U_{x,p}b_p 
$$
with the transformed operators $b_p$. The adjoint is transformed as 
$$
a_x^\dagger=\sum_p \overline{U_{x,p}}b_p^\dagger. 
$$
One can show 
\begin{eqnarray*}
a^\dagger T a&=&\sum_{x,y}\sum_p \overline{U_{x,p}}b_p^\dagger T_{x,y}\sum_q U_{y,q}b_q\\
&=&\sum_{p,q}\sum_{x,y}\overline{U_{x,p}}T_{x,y}U_{y,q}b_p^\dagger b_q\\
&=&\sum_{p,q}\lambda_p\delta_{p,q}b_p^\dagger b_q=\sum_p \lambda_p b_p^\dagger b_p,
\end{eqnarray*}
and similarly 
\begin{eqnarray*}
a\; ^tT a^\dagger&=&\sum_{x,y}\sum_{p,q} U_{x,p}b_pT_{y,x}\overline{U_{y,q}}b_q^\dagger\\
&=&\sum_{p,q}\sum_{x,y}\overline{U_{y,q}}T_{y,x}U_{x,p}b_pb_q^\dagger\\
&=&\sum_{p.q}\lambda_p\delta_{p,q}b_pb_q^\dagger=\sum_p \lambda_p b_p b_p^\dagger.
\end{eqnarray*}
Clearly, the matrix in (\ref{calHammat}) is diagonalized as 
$$
\left(\begin{matrix}U^\ast & 0 \\ 0 & ^t U\\ \end{matrix}\right)\left(\begin{matrix}T & 0 \\ 0 & -\; 
^tT \\ \end{matrix}
\right)
\left(\begin{matrix}U & 0 \\ 0 & \overline{U}\\ \end{matrix}\right)
=\left(\begin{matrix}\lambda_1 & & & & & \\ & \lambda_2 & & & & \\ & & \ddots & & & \\ 
& & & -\lambda_1 & & \cr & & & & -\lambda_2 & \cr & & & & & \ddots\\ \end{matrix}\right). 
$$
Here, the important thing is that the two eigenvalues, $\lambda_i$ and $-\lambda_i$, come in pairs in the spectrum.  

\subsection{Majorana representation of the Hamiltonian $H_0$}

Now, we introduce Majorana fermion operators, 
$$
c_x=\frac{1}{\sqrt{2}}(a_x^\dagger + a_x),\quad \mbox{and} \quad d_x=\frac{i}{\sqrt{2}}(a_x^\dagger - a_x), 
$$
or equivalently, 
\begin{equation}
\label{abycd}
a_x^\dagger =\frac{1}{\sqrt{2}}(c_x-id_x) \quad \mbox{and} \quad 
a_x=\frac{1}{\sqrt{2}}(c_x+id_x).
\end{equation}
Note that 
\begin{eqnarray*}
a_x^\dagger T_{x,y}a_y&=&\frac{1}{2}(c_x-id_x)T_{x,y}(c_y+id_y)\\
&=&\frac{1}{2}\left(c_xT_{x,y}c_y+d_xT_{x,y}d_y+ic_xT_{x,y}d_y-i d_xT_{x,y}c_y\right)
\end{eqnarray*}
and 
\begin{eqnarray*}
-a_xT_{y,x} a_y^\dagger&=& -\frac{1}{2}(c_x+id_x)T_{y,x}(c_y-id_y)\\
&=&-\frac{1}{2}\left(c_xT_{y,x}c_y+d_xT_{y,x}d_y-ic_xT_{y,x}d_y+id_xT_{y,x}c_y\right).
\end{eqnarray*}
Therefore, one has 
\begin{eqnarray*}
a_x^\dagger T_{x,y}a_y-a_xT_{y,x} a_y^\dagger
&=&\frac{1}{2}c_x(T_{x,y}-T_{y,x})c_y+\frac{1}{2}d_x(T_{x,y}-T_{y,x})d_y\\
&+&\frac{i}{2}c_x(T_{x,y}+T_{y,x})d_y-\frac{i}{2}d_x(T_{x,y}+T_{y,x})c_y\\
&=&i(c_x{\rm Im}\;T_{x,y}c_y+d_x{\rm Im}\;T_{x,y}d_y
+c_x{\rm Re}\;T_{x,y}d_y-d_x{\rm Re}\;T_{x,y}c_y),
\end{eqnarray*}
where we have used  $T_{y,x}=\overline{T_{x,y}}$ which follows from the self-adjointness of the Hamiltonian $h$. 
In consequence, we obtain 
$$
{H}_0=\frac{1}{2}\left(\begin{matrix}c , d\end{matrix}\right)A
\left(\begin{matrix}c \cr d \end{matrix}\right)+\frac{1}{2}\sum_x T_{x,x},
$$
where we have written $c=(c_x)$ and $d=(d_x)$, and 
$$
A:=i\left(\begin{matrix}{\rm Im}\;T & {\rm Re}\;T \\ -{\rm Re}\;T 
& {\rm Im}\;T \\ \end{matrix}\right). 
$$
Here, one notices that ${\rm Im}\;T$ is antisymmetric, i.e., ${\rm Im}\;T_{y,x}=-{\rm Im}\;T_{x,y}$, 
and ${\rm Re}\;T$ is symmetric, i.e., ${\rm Re}\;T_{y,x}={\rm Re}\;T_{x,y}$. 
{From} these properties, the matrix $A$ is pure imaginary and antisymmetric, and hence $A$ is self-adjoint. 

Note that 
$$
\left(\begin{matrix}a_x \\ a_x^\dagger \\\end{matrix}\right)
=\left(\begin{matrix}\frac{1}{\sqrt{2}} & \frac{i}{\sqrt{2}} \\ \frac{1}{\sqrt{2}} & -\frac{i}{\sqrt{2}}\\\end{matrix}\right)
\left(\begin{matrix}c_x \\ d_x \\\end{matrix}\right)
$$
and 
$$
\left(\begin{matrix}a_x^\dagger \ , \ a_x \end{matrix}\right)
=\left(\begin{matrix}c_x \ , \ d_x \end{matrix}\right)
\left(\begin{matrix}\frac{1}{\sqrt{2}} & \frac{1}{\sqrt{2}} \\ -\frac{i}{\sqrt{2}} & \frac{i}{\sqrt{2}}\\\end{matrix}\right).
$$
Therefore, from (\ref{calHammat}), the matrix $A$ is written 
$$
A=\left(\begin{matrix}\frac{1}{\sqrt{2}} & \frac{1}{\sqrt{2}} \\ -\frac{i}{\sqrt{2}} & \frac{i}{\sqrt{2}}\\\end{matrix}\right)
\left(\begin{matrix}T & 0 \\ 0 & - ^tT \\\end{matrix}\right)
\left(\begin{matrix}\frac{1}{\sqrt{2}} & \frac{i}{\sqrt{2}} \\ \frac{1}{\sqrt{2}} & -\frac{i}{\sqrt{2}}\\\end{matrix}\right).
$$
This implies that, since the matrix $T$ has the spectral gap at the zero energy from the assumption, 
the matrix $A$ has the same spectral gap at the zero energy. 

We write 
\begin{equation}
\label{gammacd}
\gamma_x^\mu=\begin{cases}c_x, & \mbox{for \ } \mu=c; \\ 
                d_x, & \mbox{for \ } \mu=d. \\ \end{cases}
\end{equation} 
Then, the Hamiltonian ${H}_0$ is written 
\begin{equation}
{H}_0=\frac{1}{2}\gamma A\gamma+\frac{1}{2}\sum_x T_{x,x}
=\frac{1}{2}\sum_{x,y}\sum_{\mu,\nu}\gamma_x^\mu A_{x,y}^{\mu,\nu}\gamma_y^\nu
+\frac{1}{2}\sum_x T_{x,x}
\end{equation}

\subsection{Doubled Hamiltonian $\tilde{H}_0$}

Following Hastings \cite{Hastings}, we introduce two copies, $\gamma_1$ and $\gamma_2$, of 
the Majorana fermion $\gamma$, and two copies of the Hamiltonian ${H}_0$ with opposite signs 
so that the total Hamiltonian is given by  
\begin{eqnarray}
\label{gammaAgamma}
\tilde{{H}}_0&=&\sum_{x,y}\sum_{\mu,\nu}\left(\gamma_{1,x}^\mu A_{x,y}^{\mu,\nu}
\gamma_{1,y}^\nu-\gamma_{2,x}^\mu A_{x,y}^{\mu,\nu}\gamma_{2,y}^\nu\right)\nonumber\\
&=&\left(\begin{matrix}\gamma_1,\gamma_2\end{matrix}\right)
\left(\begin{matrix}A & 0\\ 0 & -A \\ \end{matrix}\right)
\left(\begin{matrix}\gamma_1\\ \gamma_2\\ \end{matrix}\right). 
\end{eqnarray}
Since $A$ is self-adjoint, we write $P_\pm$ for the spectral projection onto 
the positive and negative energies for $A$, respectively. We also write $s(A)=P_+-P_-$. 
Moreover, we introduce a unitary transformation \cite{Hastings},
\begin{equation}
\label{tildeU}
\hat{U}=\frac{1}{\sqrt{2}}\left(\begin{matrix}1 & is(A)\\ is(A) & 1\\ \end{matrix}\right).
\end{equation}
Then, one has 
\begin{eqnarray}
\label{VAV}
\hat{U}^\ast\left(\begin{matrix}A & 0 \\ 0 & -A \\ \end{matrix}\right)\hat{U}&=&\frac{1}{2}
\left(\begin{matrix}1 & -is(A) \\ -is(A) & 1\\ \end{matrix}\right)
\left(\begin{matrix}A & 0 \\ 0 & -A \\ \end{matrix}\right)
\left(\begin{matrix}1 & is(A)\\  is(A) & 1\\ \end{matrix}\right)\nonumber\\
&=&\frac{1}{2}
\left(\begin{matrix}1 & -is(A) \\ -is(A) & 1\\ \end{matrix}\right)
\left(\begin{matrix}A & i|A|\\ -i|A| & -A\\ \end{matrix}\right)\nonumber\\
&=&\left(\begin{matrix}0 & i|A|\\  -i|A| & 0\\ \end{matrix}\right). 
\end{eqnarray}
The corresponding new fermion operators are given by 
\begin{equation}
\label{tildegamma}
\left(\begin{matrix}\tilde{\gamma}_1 \\ \tilde{\gamma}_2\\ \end{matrix}\right):=
\hat{U}^\ast \left(\begin{matrix}\gamma_1\\ \gamma_2\\ \end{matrix}\right)
=\frac{1}{\sqrt{2}}\left(\begin{matrix}\gamma_1-is(A)\gamma_2\\  -is(A)\gamma_1+\gamma_2\\ \end{matrix}\right).
\end{equation}
\medskip

\begin{lemma}
The operator $|A|$ is symmetric, i.e., $|A|_{x,y}^{\mu,\nu}=|A|_{y,x}^{\nu,\mu}$, and $s(A)$ is antisymmetric. 
\end{lemma}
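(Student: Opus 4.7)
The plan is to exploit the fact, established just before the lemma, that $A$ is pure imaginary and antisymmetric. As a matrix on the combined doubled index $(x,\mu)$, this means $A^T = -A$ and $\overline{A} = -A$, together consistent with the self-adjointness $A^\ast = A$ noted above.

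First, I would observe that $A^2$ is real, symmetric, and positive semi-definite: $(A^2)^T = (A^T)^2 = A^2$ and $\overline{A^2} = (\overline{A})^2 = A^2$, while positivity follows from $A^\ast = A$. Taking the unique positive semi-definite square root gives $|A| = \sqrt{A^2}$; by functional calculus, or equivalently by orthogonally diagonalizing the real symmetric matrix $A^2$ and taking non-negative square roots of its eigenvalues, $|A|$ inherits the real symmetric structure. In component form this is exactly $|A|_{x,y}^{\mu,\nu} = |A|_{y,x}^{\nu,\mu}$.

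For $s(A) = P_+ - P_-$, I would use the spectral-gap assumption on $A$ (inherited from the gap of $T$) to conclude that $|A|$ is invertible, so that
\begin{equation*}
s(A) = A\,|A|^{-1}.
\end{equation*}
Since $A$ commutes with $|A|$ by functional calculus, transposing and using the previous step yields
\begin{equation*}
s(A)^T = \bigl(|A|^{-1}\bigr)^T A^T = |A|^{-1}(-A) = -A\,|A|^{-1} = -s(A),
\end{equation*}
which is the desired antisymmetry $s(A)_{x,y}^{\mu,\nu} = -s(A)_{y,x}^{\nu,\mu}$.

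The main potential obstacle is purely notational bookkeeping: the labels \emph{symmetric} and \emph{antisymmetric} refer to transposition of the combined lattice/sublattice index $(x,\mu)$, and one must verify that the functional-calculus operations $\sqrt{\cdot}$ and $\mathrm{sgn}(\cdot)$ preserve this transpose structure. Both follow immediately once $A^2$ has been identified as a real symmetric positive matrix, and no serious analytic difficulty arises in this finite-dimensional setting.
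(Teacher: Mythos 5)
Your proposal is correct and follows essentially the same route as the paper: both arguments first establish that $A^2$ is symmetric (the paper by explicit index manipulation using $A^T=-A$, you by the identity $(A^T)^2=A^2$), pass to $|A|=\sqrt{A^2}$ via functional calculus or power series to conclude $|A|$ is symmetric, and then obtain the antisymmetry of $s(A)$ from the polar decomposition $A=s(A)|A|$ together with the invertibility of $|A|$ supplied by the spectral-gap assumption. The only cosmetic difference is that you write $s(A)=A|A|^{-1}$ and transpose directly, whereas the paper first derives $s(A)|A|=-\,^t\!s(A)|A|$ and then cancels $|A|$; you also observe additionally that $A^2$ (and hence $|A|$) is real, which is true but not needed.
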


\begin{proof}{Proof} As is well known, the positive square root $R^{1/2}$ of a positive operator $R$ 
can be expressed in terms of power series in $R$. (See, e.g., Sec.~2.2.2 in \cite{BraRobi}.) 
Therefore, it is sufficient to show that $A^2$ is symmetric. Since $A$ is antisymmetric, one has 
$$
(A^2)_{x,y}^{\mu,\nu}=\sum_{z,\kappa} A_{x,z}^{\mu,\kappa}A_{z,y}^{\kappa,\nu}
=\sum_{z,\kappa}(-A_{z,x}^{\kappa,\mu})(-A_{y,z}^{\nu,\kappa})
=\sum_{z,\kappa} A_{y,z}^{\nu,\kappa}A_{z,x}^{\kappa,\mu}=(A^2)_{y,x}^{\nu,\mu}.
$$
Therefore, $|A|=\sqrt{A^\ast A}=\sqrt{A^2}$ is symmetric. 

Next, let us prove that $s(A)$ is antisymmetric. Since $|A|$ is symmetric as shown in above, one has   
$$
A_{x,y}^{\mu,\nu}=\sum_{z,\kappa} s(A)_{x,z}^{\mu,\kappa}|A|_{z,y}^{\kappa,\nu}
=-A_{y,x}^{\nu.\mu}=-\sum_{z,\kappa} |A|_{y,z}^{\nu,\kappa}s(A)_{z,x}^{\kappa,\mu}
=-\sum_{z,\kappa} s(A)_{z,x}^{\kappa,\mu}|A|_{z,y}^{\kappa,\nu}.
$$
This implies $s(A)|A|=-\;^ts(A)|A|$. Therefore, it is enough to show that $|A|$ is invertible.
In order to show this, we recall the assumption of the nonvanishing spectral gap 
of the Hamiltonian $h-\mathcal{E}_{\rm F}$. 
This assumption implies that the spectral gap of $A$ at zero energy 
is nonvanishing from the definition of $A$. Therefore, the operator $|A|$ is invertible. 
\end{proof}

By relying on this lemma, one can show that $\tilde{\gamma}_j^\ast=\tilde{\gamma}_j$ for $j=1,2$, 
i.e., both of the transformed fermions are a Majorana fermion. 
In order to show this, we note that 
$$
\tilde{\gamma}_{1,x}^\mu=\frac{1}{\sqrt{2}}\left(\gamma_{1,x}^\mu-\sum_{y,\nu}is(A)_{x,y}^{\mu,\nu}\gamma_{2,y}^\nu\right).
$$
The adjoint is 
\begin{eqnarray*}
(\tilde{\gamma}_{1,x}^\mu)^\ast
&=&\frac{1}{\sqrt{2}}\left(\gamma_{1,x}^\mu+\sum_{y,\nu}i\overline{s(A)_{x,y}^{\mu,\nu}}\gamma_{2,y}^\nu\right)\\
&=&\frac{1}{\sqrt{2}}\left(\gamma_{1,x}^\mu+\sum_{y,\nu}is(A)_{y,x}^{\nu,\mu}\gamma_{2,y}^\nu\right)\\
&=&\frac{1}{\sqrt{2}}\left(\gamma_{1,x}^\mu-\sum_{y,\nu}is(A)_{x,y}^{\mu,\nu}\gamma_{2,y}^\nu\right)
=\tilde{\gamma}_{1,x}^\mu,
\end{eqnarray*}
where we have used that $s(A)$ is self-adjoint and antisymmetric. 
In the same way, one can show $(\tilde{\gamma}_{2,x}^\mu)^\ast=\tilde{\gamma}_{2,x}^\mu$. 
Further, by using the antisymmetry of $s(A)$, one has 
\begin{eqnarray}
\left(\gamma_1,\gamma_2\right)\hat{U}
&=&\frac{1}{\sqrt{2}}\left(\gamma_1,\gamma_2\right)\left(\begin{matrix}1 & is(A)\\ is(A) & 1\\ \end{matrix}\right)\nonumber\\
&=&\frac{1}{\sqrt{2}}\left(\gamma_1+i\gamma_2s(A),i\gamma_1s(A)+\gamma_2\right)\nonumber\\
&=&\frac{1}{\sqrt{2}}\left(\gamma_1-is(A)\gamma_2,-is(A)\gamma_1+\gamma_2\right)
=\left(\tilde{\gamma}_1,\tilde{\gamma}_2\right).
\end{eqnarray}
Combining this, (\ref{gammaAgamma}), (\ref{VAV}) and (\ref{tildegamma}), one obtains 
\begin{eqnarray*}
\tilde{{H}}_0&=&\left(\tilde{\gamma}_1,\tilde{\gamma}_2\right)
\left(\begin{matrix}0 & i|A|\\ -i|A| & 0 \\ \end{matrix}\right)
\left(\begin{matrix}\tilde{\gamma}_1\\ \tilde{\gamma}_2\\ \end{matrix}\right)\\
&=&i\tilde{\gamma}_1|A|\tilde{\gamma}_2-i\tilde{\gamma}_2|A|\tilde{\gamma}_1\\
&=&2i\tilde{\gamma}_1|A|\tilde{\gamma}_2.
\end{eqnarray*}
Finally, we introduce a usual complex fermion \cite{Hastings},
$$
\eta_x^\mu=\frac{1}{\sqrt{2}}(\tilde{\gamma}_{1,x}^\mu+i\tilde{\gamma}_{2,x}^\mu),
$$
or equivalently
$$
\tilde{\gamma}_{1,x}^\mu=\frac{1}{\sqrt{2}}\left[(\eta_x^\mu)^\dagger+\eta_x^\mu\right],
\quad \mbox{and}\quad 
\tilde{\gamma}_{2,x}^\mu=\frac{i}{\sqrt{2}}\left[(\eta_x^\mu)^\dagger-\eta_x^\mu\right].
$$
Then, the Hamiltonian $\tilde{{H}}_0$ can be written 
\begin{eqnarray*}
\tilde{{H}}_0&=&-\sum_{x,y}\sum_{\mu,\nu} \left[(\eta_x^\mu)^\dagger+\eta_x^\mu\right]
|A|_{x,y}^{\mu,\nu}\left[(\eta_y^\nu)^\dagger-\eta_y^\nu\right]\\
&=&\sum_{x,y}\sum_{\mu,\nu}\left[(\eta_x^\mu)^\dagger|A|_{x,y}^{\mu,\nu}\eta_y^\nu
-\eta_x^\mu|A|_{x,y}^{\mu,\nu}(\eta_y^\nu)^\dagger\right]\\
&=&2\sum_{x,y}\sum_{\mu,\nu}(\eta_x^\mu)^\dagger|A|_{x,y}^{\mu,\nu}\eta_y^\nu+\sum_{x,\mu}|A|_{x,x}^{\mu,\mu},
\end{eqnarray*}
where we have used the fact that $|A|$ is symmetric, and the anticommutation relations, 
$\{\eta_x^\mu,(\eta_y^\nu)^\dagger\}=\delta_{\mu,\nu}\delta_{x,y}$, $\{\eta_x^\mu,\eta_y^\nu\}=0$ and 
$\{(\eta_x^\mu)^\dagger,(\eta_y^\nu)^\dagger\}=0$. 
Since $|A|\ge \Delta E$ from the assumption of the spectral gap of the Hamiltonian $H_0$,  
one has 
$$
\tilde{{H}}_0-\sum_{x,\mu}|A|_{x,x}^{\mu,\mu}=2\eta^\dagger|A|\eta\ge 2\Delta E \eta^\dagger \eta.
$$
We write 
$$
\tilde{\mathcal{H}}_0:=\eta^\dagger|A|\eta
$$
and 
$$
\tilde{\mathcal{N}}:=\eta^\dagger \eta.
$$
The local Hamiltonian $\tilde{\mathcal{H}}_{0,Z}$ is given by 
\begin{equation}
\tilde{\mathcal{H}}_{0,Z}:=\sum_{\mu,\nu}(\eta_x^\mu)^\dagger|A|_{x,y}^{\mu,\nu}\eta_y^\nu
\end{equation} 
with $Z=\{x,y\}$ if $x\ne y$, and $Z=\{x\}$ if $x=y$. 

Let $\hat{\mathcal{U}}$ be the unitary transformation which diagonalizes the matrix $|A|$. Namely, 
$\hat{\mathcal{U}}^\ast |A|\hat{\mathcal{U}}=\hat{\mathcal{D}}$ with the diagonal matrix $\hat{\mathcal{D}}$ 
and the diagonal elements $\tilde{\lambda}_j$.   
Then, the fermion operator $\eta$ is transformed as $\alpha:=\hat{\mathcal{U}}^\ast\eta$. 
The Hamiltonian $\tilde{\mathcal{H}}_0$ is written 
$$
\tilde{\mathcal{H}}_0=\sum_i \tilde{\lambda}_i\alpha_i^\dagger\alpha_i,
$$
and the number operator $\tilde{\mathcal{N}}$ is written 
$$
\tilde{\mathcal{N}}=\sum_i \alpha_i^\dagger\alpha_i. 
$$
Since the eigenvalues $\tilde{\lambda}_i$ are all positive and satisfy $\tilde{\lambda}_i\ge\Delta E$, 
one has 
\begin{equation}
\label{H02N2}
(\tilde{\mathcal{H}}_0)^2\ge (\Delta E)^2(\tilde{\mathcal{N}})^2.
\end{equation}

Since the fermion operators $a_x$ is expressed in terms of the fermion operators $\eta$, 
the interaction $V$ can be written in terms of the operator $\eta$. We write $\tilde{V}$ 
for the transformed interaction. The local interaction $V_X$ is also transformed to $\tilde{V}_{\tilde{X}}$ 
with the different supports $\tilde{X}$. Namely, the transformed interaction $\tilde{V}$ is written 
$$ 
\tilde{V}=\sum_{\tilde{X}\subset \Lambda}\tilde{V}_{\tilde{X}}.
$$
{From} the assumption (\ref{assumpV}) on the interaction $V$, we have 
\begin{equation}
\label{boundednesstildeV0}
\sum_{\tilde{X}\ni x,y}|\tilde{X}|^3\Vert \tilde{V}_{\tilde{X}}\Vert \le \tilde{C}_{h,V}e^{-\tilde{m}_{h,V}{\rm dist}(x,y)}
\end{equation}
with positive constants, $\tilde{C}_{h,V}$ and $\tilde{m}_{h,V}$, which are independent of the size $|\Lambda|$ of the lattice. 
The proof of this local boundedness of the interaction $\tilde{V}$ is given in Appendix~\ref{append:boundedtildeV}.

\Section{Transformation connecting two ground states}
\label{Sec:conneGS}

In the preceding section, we have obtained the desired form of the unperturbed Hamiltonian $\tilde{\mathcal{H}}_0$. 
In this section, following the idea of Hastings \cite{Hastings,{BravHast}}, 
we will further transform the total Hamiltonian $\tilde{\mathcal{H}}_0+\tilde{V}$ so that 
all the local interactions annihilate the ground state of the unperturbed Hamiltonian $\tilde{\mathcal{H}}_0$.

Consider a Hamiltonian, 
\begin{equation}
\label{tildecalHs}
\tilde{\mathcal{H}}_s:= \tilde{\mathcal{H}}_0+s\tilde{{V}},
\end{equation}
with a real parameter $s\in[0,1]$. We assume that the ground state $\tilde{\Phi}_0(s)$ of 
$\tilde{\mathcal{H}}_s$ is unique, and 
that the spectrum has a nonvanishing gap $\gamma$ above the ground-state energy uniformly in 
the parameter $s\in[0,1]$. We write $\tilde{P}_0(s)$ for the projection onto the ground state. 
Then, there exists a unitary transformation $\tilde{U}(s)$ such that 
\begin{equation}
\label{U(s)}
\tilde{U}^\ast(s)\tilde{P}_0(s)\tilde{U}(s)=\tilde{P}_0(0).
\end{equation}
See Chap.~II, Sec.~4.2 in Kato's book \cite{Kato}. 

Note that 
$$
\tilde{\mathcal{H}}_s=[1-\tilde{P}_0(s)]\tilde{\mathcal{H}}_s[1-\tilde{P}_0(s)]+\tilde{E}_0(s)\tilde{P}_0(s),
$$
where $\tilde{E}_0(s)$ is the ground state energy of $\tilde{\mathcal{H}}_s$. 
Using the unitary transformation $\tilde{U}(s)$ of (\ref{U(s)}), 
one has 
$$
\tilde{\mathcal{H}}(s):=\tilde{U}^\ast(s)\tilde{\mathcal{H}}_s\tilde{U}(s)
=[1-\tilde{P}_0(0)]\tilde{U}^\ast(s)\tilde{\mathcal{H}}_s\tilde{U}(s)[1-\tilde{P}_0(0)]
+\tilde{E}_0(s)\tilde{P}_0(0). 
$$
Therefore, 
\begin{equation}
\label{commuHs'P00}
[\tilde{\mathcal{H}}(s),\tilde{P}_0(0)]=0.
\end{equation}
This implies that the ground state of $\tilde{\mathcal{H}}(s)$ is given by 
the same ground state $\tilde{\Phi}_0(0)$ as that of $\tilde{\mathcal{H}}_0=\tilde{\mathcal{H}}(0)$. 

We write
\begin{equation}
\label{tildecalH0s}
\tilde{\mathcal{H}}_0(s):=\tilde{U}^\ast(s)\tilde{\mathcal{H}}_0\tilde{U}(s)
=\sum_{Z\subset\Lambda}\tilde{U}^\ast(s)\tilde{\mathcal{H}}_{0,Z}\tilde{U}(s)
\end{equation}
and 
\begin{equation}
\label{tildeVs}
\tilde{V}(s):=\tilde{U}^\ast(s)\tilde{V}\tilde{U}(s)
=\sum_{Z\subset\Lambda}\tilde{U}^\ast(s)\tilde{V}_Z\tilde{U}(s).
\end{equation}
Clearly, 
\begin{equation}
\tilde{\mathcal{H}}(s)=\tilde{\mathcal{H}}_0(s)+s\tilde{V}(s).
\end{equation}

Further, following Hastings \cite{Hastings}, we introduce 
\begin{equation}
\label{calRs}
\mathcal{R}_s(\cdots):=\int_{-\infty}^{+\infty}\tilde{\tau}_{s,t}(\cdots)w_\gamma(t)dt,
\end{equation}
where 
$$
\tilde{\tau}_{s,t}(\cdots):=\exp[it\tilde{\mathcal{H}}(s)](\cdots)\exp[-it\tilde{\mathcal{H}}(s)]
$$
and $w_\gamma$ is a function such that its Fourier transform $\hat{w}_\gamma$ satisfies 
$\hat{w}_\gamma(0)=1$ and $\hat{w}_\gamma(\omega)=0$ for $|\omega|\ge \gamma$. 
As a concrete function, we use the function $w_\gamma(t)$ of the equation~(2.1) in Lemma~2.3 in \cite{BMNS}. 
Since the Hamiltonian $\tilde{\mathcal{H}}(s)$ 
has the spectral gap $\gamma>0$ above the ground state from the assumption about the Hamiltonian $\tilde{\mathcal{H}}_s$, 
the following holds: 
\begin{equation}
\label{localGScondition}
[1-\tilde{P}_0(0)]\mathcal{R}_s(\mathcal{A})\tilde{P}_0(0)=0
\end{equation}
for any operator $\mathcal{A}$, from (\ref{commuHs'P00}). We also note that 
$$
\mathcal{R}_s(1)=1
$$
and 
$$
\mathcal{R}_s(\tilde{\mathcal{H}}(s))=\tilde{\mathcal{H}}(s)
$$
{from} $\hat{w}_\gamma(0)=1$. By combining this second one with (\ref{tildecalH0s}) and (\ref{tildeVs}), 
one has 
\begin{equation}
\label{tildecalHsp}
\tilde{\mathcal{H}}(s)=\sum_{Z\subset\Lambda}\mathcal{R}_s(\tilde{U}^\ast(s)\tilde{\mathcal{H}}_{0,Z}\tilde{U}(s))
+s\sum_{Z\subset\Lambda}\mathcal{R}_s(\tilde{U}^\ast(s)\tilde{V}_Z\tilde{U}(s)).
\end{equation}

Consider first the summand in the second sum in the right-hand side of (\ref{tildecalHsp}). 
{From} the general relation (\ref{localGScondition}), 
one notices that the ground state $\tilde{\Phi}_0(0)$ is the eigenvector of 
the summand $\mathcal{R}_s(\tilde{U}^\ast(s)\tilde{V}_Z\tilde{U}(s))$. On the other hand, one has 
$$
\langle \tilde{\Phi}_0(0),\mathcal{R}_s(\tilde{U}^\ast(s)\tilde{V}_Z\tilde{U}(s))\tilde{\Phi}_0(0)\rangle
=\langle \tilde{\Phi}_0(0),\tilde{U}^\ast(s)\tilde{V}_Z\tilde{U}(s)\tilde{\Phi}_0(0)\rangle
$$
because $\tilde{\Phi}_0(0)$ is the eigenvector of $\tilde{\mathcal{H}}(s)$ and $\hat{w}_\gamma(0)=1$. From these observations, 
we have 
\begin{equation}
\label{RUVZUP0}
\mathcal{R}_s(:\tilde{U}^\ast(s)\tilde{V}_Z\tilde{U}(s):)\tilde{P}_0(0)=0,
\end{equation}
where $:(\cdots):$ is the subtraction of the expectation value with respect to the ground state $\tilde{\Phi}_0(0)$ 
which is defined by  
$$
:\mathcal{A}:=\mathcal{A}-\langle\tilde{\Phi}_0(0),\mathcal{A}\tilde{\Phi}_0(0)\rangle
$$
for an operator $\mathcal{A}$. The important point is that the corresponding term satisfies the condition (\ref{RUVZUP0}), 
and is vanishing in the weak coupling limit $s\searrow 0$ for the interaction. 
We write 
\begin{equation}
\label{tildecalWZ1s}
\tilde{\mathcal{W}}_Z^{(1)}(s):=s\mathcal{R}_s(:\tilde{U}^\ast(s)\tilde{V}_Z\tilde{U}(s):).
\end{equation}

Next consider the first sum in the right-hand side of (\ref{tildecalHsp}). Note that 
\begin{eqnarray}
\label{UtildecalHZint}
\tilde{U}^\ast(s)\tilde{\mathcal{H}}_{0,Z}\tilde{U}(s)-\tilde{\mathcal{H}}_{0,Z}
&=&\int_0^s ds'\; \frac{d}{ds'}\tilde{U}^\ast(s')\tilde{\mathcal{H}}_{0,Z}\tilde{U}(s')\nonumber\\
&=&\int_0^s ds'\;\tilde{U}^\ast(s')i[\tilde{\mathcal{H}}_{0,Z},\tilde{D}(s')]\tilde{U}(s'),
\end{eqnarray}
where we have used \cite{BMNS}
$$
\frac{d}{ds}\tilde{U}(s)=i\tilde{D}(s)\tilde{U}(s)
$$
with the self-adjoint operator $\tilde{D}(s)$. Therefore, 
$$
\mathcal{R}_s(\tilde{U}^\ast(s)\tilde{\mathcal{H}}_{0,Z}\tilde{U}(s))
=\mathcal{R}_s(\tilde{\mathcal{H}}_{0,Z})+
\int_0^s ds'\;\mathcal{R}_s(\tilde{U}^\ast(s')i[\tilde{\mathcal{H}}_{0,Z},\tilde{D}(s')]\tilde{U}(s')).
$$
The second term in the right-hand side is vanishing in the weak coupling limit $s\searrow 0$ for the interaction, 
and can be treated in the same way as in above. Namely, we have 
\begin{equation}
\int_0^s ds'\;\mathcal{R}_s(:\tilde{U}^\ast(s')i[\tilde{\mathcal{H}}_{0,Z},\tilde{D}(s')]\tilde{U}(s'):)\tilde{P}_0(0)=0.
\end{equation}
We write 
\begin{equation}
\label{tildecalWZ2s}
\tilde{\mathcal{W}}_Z^{(2)}(s)
:=\int_0^s ds'\;\mathcal{R}_s(:\tilde{U}^\ast(s')i[\tilde{\mathcal{H}}_{0,Z},\tilde{D}(s')]\tilde{U}(s'):).
\end{equation}

In order to treat the first term $\mathcal{R}_s(\tilde{\mathcal{H}}_{0,Z})$, we note that  
\begin{eqnarray*}
e^{it\tilde{\mathcal{H}}(s)}\tilde{\mathcal{H}}_{0,Z}e^{-it\tilde{\mathcal{H}}(s)}-\tilde{\mathcal{H}}_{0,Z}
&=&\int_0^t dt'\;\frac{d}{dt'}e^{it'\tilde{\mathcal{H}}(s)}\tilde{\mathcal{H}}_{0,Z}e^{-it'\tilde{\mathcal{H}}(s)}\\
&=&\int_0^t dt'\;e^{it'\tilde{\mathcal{H}}(s)}i[\tilde{\mathcal{H}}(s),\tilde{\mathcal{H}}_{0,Z}]
e^{-it'\tilde{\mathcal{H}}(s)}. 
\end{eqnarray*}
On the other hand, by using (\ref{UtildecalHZint}), we have 
\begin{eqnarray*}
\tilde{\mathcal{H}}(s)&=&\tilde{U}^\ast(s)\tilde{\mathcal{H}}_0\tilde{U}(s)+s\tilde{U}^\ast(s)\tilde{V}\tilde{U}(s)\\
&=&\tilde{\mathcal{H}}_0+\int_0^s ds'\;\tilde{U}^\ast(s')i[\tilde{\mathcal{H}}_0,\tilde{D}(s')]\tilde{U}(s')
+s\tilde{U}^\ast(s)\tilde{V}\tilde{U}(s). 
\end{eqnarray*}
{From} these equations, we obtain 
\begin{eqnarray*}
e^{it\tilde{\mathcal{H}}(s)}\tilde{\mathcal{H}}_{0,Z}e^{-it\tilde{\mathcal{H}}(s)}&=&
\tilde{\mathcal{H}}_{0,Z}+\int_0^t dt'\;e^{it'\tilde{\mathcal{H}}(s)}i[\tilde{\mathcal{H}}(s),\tilde{\mathcal{H}}_{0,Z}]
e^{-it'\tilde{\mathcal{H}}(s)}\\
&=&\tilde{\mathcal{H}}_{0,Z}+\int_0^t dt'\;e^{it'\tilde{\mathcal{H}}(s)}i[\tilde{\mathcal{H}}_0,\tilde{\mathcal{H}}_{0,Z}]
e^{-it'\tilde{\mathcal{H}}(s)}+\tilde{\mathcal{M}}_{t,Z}(s)
\end{eqnarray*}
with 
\begin{equation}
\label{tildecalMtZs}
\tilde{\mathcal{M}}_{t,Z}(s):=\int_0^t dt'\;\tilde{\tau}_{s,t'}
\left(is[\tilde{U}^\ast(s)\tilde{V}\tilde{U}(s),\tilde{\mathcal{H}}_{0,Z}]+
\int_0^s ds'\;i[\tilde{U}^\ast(s')i[\tilde{\mathcal{H}}_0,\tilde{D}(s')]\tilde{U}(s'),\tilde{\mathcal{H}}_{0,Z}]\right).
\end{equation}
Here, we recall that the ground state $\tilde{\Phi}_0(0)$ of $\tilde{\mathcal{H}}_0$ is also 
the ground state of $\tilde{\mathcal{H}}(s)$, and $\tilde{\mathcal{H}}_{0,Z}\tilde{\Phi}_0(0)=0$ for all $Z$. 
Therefore, we have  
\begin{equation}
\tilde{\mathcal{M}}_{t,Z}(s)\tilde{P}_0(0)=0.
\end{equation}
In addition, one obtains 
\begin{equation}
\sum_{Z\subset\Lambda}\mathcal{R}_s(\tilde{\mathcal{H}}_{0,Z})=\tilde{\mathcal{H}}_0
+\sum_{Z\subset\Lambda}\int_{-\infty}^{+\infty}dt\;w_\gamma(t)\tilde{\mathcal{M}}_{t,Z}(s).  
\end{equation}
We write 
\begin{equation}
\label{tildecalWZ3s}
\tilde{\mathcal{W}}_Z^{(3)}(s):=\int_{-\infty}^{+\infty}dt\;w_\gamma(t)\tilde{\mathcal{M}}_{t,Z}(s).
\end{equation}
Consequently, we obtain 
\begin{equation}
\label{tildecalHs:tildecalH0+tildecalWs}
\tilde{\mathcal{H}}(s):=\tilde{\mathcal{H}}_0+\tilde{\mathcal{W}}(s)
\end{equation}
with 
$$
\tilde{\mathcal{W}}(s)=\sum_{Z\subset\Lambda}\tilde{\mathcal{W}}_Z(s)
$$
and 
$$
\tilde{\mathcal{W}}_Z(s):=\sum_{i=1}^3\tilde{\mathcal{W}}_Z^{(i)}(s).
$$
Here, we have dropped the constant term, which does not affect the spectral gap of the Hamiltonian $\tilde{\mathcal{H}}(s)$. 
The local interactions satisfy 
\begin{equation}
\label{tildeV'P000}
\tilde{\mathcal{W}}_Z^{(i)}(s)\tilde{P}_0(0)=0
\end{equation}
for $i=1,2,3$. 

Clearly, the operations of $\mathcal{R}_s$, $\tilde{U}(s)$ and $\tilde{\tau}_{s,t'}$ enlarge 
the original support $Z$ of the potentials $\tilde{\mathcal{W}}_Z^{(i)}(s)$ to a large region. 
Actually, they exhibit sub-exponentially decaying tails. 
Following \cite{Hastings,NSY,BHV}, we decompose them into a sum of terms $\tilde{\mathcal{W}}_{Z,n}^{(i)}(s)$ 
with a specific compact support,  
\begin{equation}
\tilde{\mathcal{W}}_Z^{(i)}(s)=\sum_{n=0}^\infty \tilde{\mathcal{W}}_{Z,n}^{(i)}(s), 
\end{equation}
so that each term $\tilde{\mathcal{W}}_{Z,n}^{(i)}(s)$ satisfies 
\begin{equation}
\tilde{\mathcal{W}}_{Z,n}^{(i)}(s)\tilde{P}_0(0)=0
\end{equation}
and 
\begin{equation}
\Vert\tilde{\mathcal{W}}_{Z,n}^{(i)}(s)\Vert \rightarrow 0 \quad \mbox{as \ }s\rightarrow 0. 
\end{equation}
The construction of $\tilde{\mathcal{W}}_{Z,n}^{(i)}(s)$ are given 
in Appendices~\ref{Sec:LocaApproObs}-\ref{Sec:BoundtildecalWZ3s}.

\Section{Relatively bounded perturbations}
\label{Sec:relativeBound}

Consider a Hamiltonian, 
\begin{equation}
\tilde{\mathcal{H}}=\tilde{\mathcal{H}}_0+\tilde{\mathcal{W}}.
\end{equation}
Here, we assume that the two self-adjoint operators, $\tilde{\mathcal{H}}_0$ 
and $\tilde{\mathcal{W}}$, satisfy the bound \cite{Kato},
\begin{equation}
\label{relativebound}
\tilde{\mathcal{W}}^2\le b^2(\tilde{\mathcal{H}}_0)^2,
\end{equation}
with the small constant $b\ge 0$. If the constant $b$ can be chosen to be sufficiently small compared 
to the spectral gap $\Delta E$ above the ground state of the Hamiltonian $\tilde{\mathcal{H}}_0$, then 
the spectral gap above the ground state of the total Hamiltonian can be expected to be still 
nonvanishing against the weak perturbation $\tilde{\mathcal{W}}$. We will prove that this statement is true 
in the present situation. 

{From} the argument of Sec.~\ref{Sec:TransFree}, we set 
\begin{equation}
\label{tildecalH0eta}
\tilde{\mathcal{H}}_0:=\sum_{\mu,\nu\in\{c,d\}}\sum_{x,y\in\Lambda}(\eta_x^\mu)^\dagger |A|_{x,y}^{\mu,\nu}\eta_y^\nu. 
\end{equation}
We write $\tilde{\Lambda}=\Lambda\times\{c,d\}$, and write $\xi_y=\eta_x^\mu$ with $y=(x,\mu)\in\tilde{\Lambda}$ 
for short. Then, the Hamiltonian $\tilde{\mathcal{H}}_0$ is written 
\begin{equation}
\label{H0xi}
\tilde{\mathcal{H}}_0=\sum_{y,z\in\tilde{\Lambda}}\xi_y^\dagger |A|_{y,z}\xi_z.
\end{equation}

\begin{pro}
\label{pro:relative}
Suppose that the perturbed potential $\tilde{\mathcal{W}}$ satisfies the relative bound (\ref{relativebound}) 
with a positive $b$, and $\tilde{\mathcal{W}}\tilde{P}_0=0$, 
where $\tilde{P}_0$ is the projection onto the ground state $\tilde{\Phi}_0$ of 
the unperturbed Hamiltonian $\tilde{\mathcal{H}}_0$ of (\ref{H0xi}). Then, for a small $b$, 
the ground state of the total Hamiltonian $\tilde{\mathcal{H}}=\tilde{\mathcal{H}}_0+\tilde{\mathcal{W}}$ is unique 
with the ground-state energy $\tilde{E}_0=0$. Further, the first excitation energy $\tilde{E}_1$ of $\tilde{\mathcal{H}}$ 
satisfies 
$$
\tilde{E}_1\ge \left[1-\frac{b}{\sqrt{1-2b}}\right]\Delta E 
$$
for $b$ satisfying 
$$
\frac{b}{\sqrt{1-2b}}\le \frac{1}{2}.
$$
Here, $\Delta E>0$ is the spectral gap above the ground state of the unperturbed Hamiltonian $\tilde{\mathcal{H}}_0$.  
\end{pro}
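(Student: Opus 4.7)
The plan is to decompose the Hilbert space via the ground-state projection $\tilde{P}_0$, establish a single operator inequality $\tilde{\mathcal{H}}^2\ge(1-2b)\tilde{\mathcal{H}}_0^2$, and then combine it with a direct first-order computation on each eigenvalue of $\tilde{\mathcal{H}}$ on the range of $1-\tilde{P}_0$ to obtain uniqueness of the ground state and the gap bound simultaneously.

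First, since $\tilde{\mathcal{W}}$ is self-adjoint, the hypothesis $\tilde{\mathcal{W}}\tilde{P}_0=0$ also gives $\tilde{P}_0\tilde{\mathcal{W}}=0$, while $\tilde{\mathcal{H}}_0\tilde{P}_0=\tilde{P}_0\tilde{\mathcal{H}}_0=0$ by definition of $\tilde{P}_0$. Hence $\tilde{\mathcal{H}}$ is block-diagonal with respect to the splitting $\tilde{P}_0\oplus(1-\tilde{P}_0)$, the vector $\tilde{\Phi}_0$ is automatically an eigenvector of $\tilde{\mathcal{H}}$ with eigenvalue $0$, and the proposition reduces to showing that every eigenvalue of $\tilde{\mathcal{H}}$ on $(1-\tilde{P}_0)\mathcal{H}$ is at least $[1-b/\sqrt{1-2b}]\Delta E$, a quantity that is positive under the smallness hypothesis.

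Next, I would expand
\[
\langle\Psi,\tilde{\mathcal{H}}^2\Psi\rangle=\|\tilde{\mathcal{H}}_0\Psi\|^2+2\,{\rm Re}\langle\tilde{\mathcal{H}}_0\Psi,\tilde{\mathcal{W}}\Psi\rangle+\|\tilde{\mathcal{W}}\Psi\|^2,
\]
apply Cauchy--Schwarz together with the relative bound (\ref{relativebound}) in the form $\|\tilde{\mathcal{W}}\Psi\|\le b\|\tilde{\mathcal{H}}_0\Psi\|$ to control the cross term by $2b\|\tilde{\mathcal{H}}_0\Psi\|^2$, and drop the non-negative last term. This delivers the operator inequality $\tilde{\mathcal{H}}^2\ge(1-2b)\tilde{\mathcal{H}}_0^2$.

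For any eigenvector $\Psi\perp\tilde{\Phi}_0$ with $\tilde{\mathcal{H}}\Psi=E\Psi$ and $\|\Psi\|=1$, this operator inequality gives $\|\tilde{\mathcal{H}}_0\Psi\|\le|E|/\sqrt{1-2b}$ and hence $\|\tilde{\mathcal{W}}\Psi\|\le b|E|/\sqrt{1-2b}$. Pairing the eigenvalue equation with $\Psi$ and using the spectral-gap bound $\langle\Psi,\tilde{\mathcal{H}}_0\Psi\rangle\ge\Delta E$ together with $|\langle\Psi,\tilde{\mathcal{W}}\Psi\rangle|\le\|\tilde{\mathcal{W}}\Psi\|$ yields
\[
E\ge\Delta E-\frac{b|E|}{\sqrt{1-2b}}.
\]
The assumption $b/\sqrt{1-2b}\le 1/2<1$ rules out $E<0$, which would otherwise force $E\ge\Delta E/(1-b/\sqrt{1-2b})>0$; and in the remaining case $E>0$ the inequality rearranges to $E\ge\Delta E/(1+b/\sqrt{1-2b})\ge\Delta E[1-b/\sqrt{1-2b}]$ by the elementary estimate $1/(1+x)\ge 1-x$. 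Applied to the smallest such eigenvalue, this simultaneously proves uniqueness of the ground state at energy zero and the claimed lower bound on $\tilde{E}_1$.

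The only nontrivial step is obtaining the operator inequality $\tilde{\mathcal{H}}^2\ge(1-2b)\tilde{\mathcal{H}}_0^2$ with a clean constant, together with the care needed afterwards to rule out negative eigenvalues on $(1-\tilde{P}_0)\mathcal{H}$ — the squared inequality alone only separates $|E|$ from zero and does not fix its sign, so the first-moment argument is essential to pin the excited spectrum on the positive side of $\Delta E$.
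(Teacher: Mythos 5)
Your proof is correct, and it shares its core with the paper's argument: both establish the operator inequality $\tilde{\mathcal{H}}^2\ge(1-2b)\tilde{\mathcal{H}}_0^2$ by expanding, applying Cauchy--Schwarz to the cross term, and invoking the relative bound. After that, however, you depart from the paper in a useful way. The paper continues with a \emph{second}-moment estimate on $\tilde{\mathcal{H}}_0$: it rewrites $\langle\tilde{\Psi}_1,\tilde{\mathcal{W}}^2\tilde{\Psi}_1\rangle=\langle\tilde{\Psi}_1,(\tilde{E}_1-\tilde{\mathcal{H}}_0)^2\tilde{\Psi}_1\rangle$, bounds it by $b^2\tilde{E}_1^2/(1-2b)$, and compares with the spectral lower bound $(\Delta E-\tilde{E}_1)^2$. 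To control $\tilde{E}_1^2$ by $\Delta E^2$ and to fix the sign of $\Delta E-\tilde{E}_1$, the paper introduces an a priori continuity argument (``varying the strength of the interaction continuously from zero \dots\ we can assume $\tilde{E}_1\ge\Delta E/2$''). Your route instead pairs the eigenvalue equation with $\Psi$ to get the \emph{first}-moment estimate $E\ge\Delta E-\Vert\tilde{\mathcal{W}}\Psi\Vert\ge\Delta E-b|E|/\sqrt{1-2b}$, then splits on the sign of $E$. This dispenses with the continuity step entirely: the case $E\le 0$ is ruled out algebraically (it would force $E\ge\Delta E/(1-b/\sqrt{1-2b})>0$), and the case $E>0$ directly yields $E\ge\Delta E/(1+b/\sqrt{1-2b})$, which is in fact slightly sharper than the stated $[1-b/\sqrt{1-2b}]\Delta E$ before you relax it with $1/(1+x)\ge 1-x$. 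You also make the block-diagonal decomposition $\tilde{\mathcal{H}}=0\cdot\tilde{P}_0\oplus(1-\tilde{P}_0)\tilde{\mathcal{H}}(1-\tilde{P}_0)$ explicit at the outset, which cleanly handles both the ground-state energy $\tilde{E}_0=0$ and the uniqueness claim that the paper treats more implicitly. In short: same key lemma, but your first-moment finish is more self-contained and avoids the paper's perturbative-continuity aside.
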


\begin{proof}{Proof}
{From} the assumptions, the expectation value of 
the total Hamiltonian $\tilde{\mathcal{H}}$ with respect to the ground state $\tilde{\Phi}_0$ 
of the unperturbed Hamiltonian $\tilde{\mathcal{H}}_0$ is 
given by 
$$
\langle \tilde{\Phi}_0,\tilde{\mathcal{H}}\tilde{\Phi}_0\rangle
=\langle \tilde{\Phi}_0,\tilde{\mathcal{H}}_0\tilde{\Phi}_0\rangle+\langle\tilde{\Phi}_0,\tilde{\mathcal{W}}\tilde{\Phi}_0\rangle=0.
$$
Therefore, the ground state energy $\tilde{E}_0$ of $\tilde{\mathcal{H}}$ is less than or equal to zero, 
i.e., $\tilde{E}_0\le 0$. 

We write $\tilde{\Psi}_1$ for the first excited state of $\tilde{\mathcal{H}}$ with the energy eigenvalue $\tilde{E}_1$, 
i.e., $\tilde{\mathcal{H}}\tilde{\Psi}_1=\tilde{E}_1\tilde{\Psi}_1$. Clearly, one has 
\begin{eqnarray*}
\tilde{E}_1^2&=&\langle\tilde{\Psi}_1,\tilde{\mathcal{H}}^2\tilde{\Psi}_1\rangle\\ 
&=&\langle \tilde{\Psi}_1,(\tilde{\mathcal{H}}_0)^2\tilde{\Psi}_1\rangle+\langle\tilde{\Psi}_1,
\tilde{\mathcal{W}}^2\tilde{\Psi}_1\rangle 
+\langle\tilde{\Psi}_1,\tilde{\mathcal{H}}_0\tilde{\mathcal{W}}\tilde{\Psi}_1\rangle+\langle\tilde{\Psi}_1,
\tilde{\mathcal{W}}\tilde{\mathcal{H}}_0\tilde{\Psi}_1\rangle.
\end{eqnarray*}
The third and fourth terms is evaluated as 
$$
|{\rm Re}\langle\tilde{\Psi}_1,\tilde{\mathcal{H}}_0\tilde{\mathcal{W}}\tilde{\Psi}_1\rangle|
\le \sqrt{\langle \tilde{\Psi}_1,(\tilde{\mathcal{H}}_0)^2\tilde{\Psi}_1\rangle\langle\tilde{\Psi}_1,
\tilde{\mathcal{W}}^2\tilde{\Psi}_1\rangle}
\le b\langle\tilde{\Psi}_1,(\tilde{\mathcal{H}}_0)^2\tilde{\Psi}_1\rangle,
$$
where we have used Schwarz inequality and the assumption (\ref{relativebound}). Therefore, one has 
$$
\tilde{E}_1^2\ge (1-2b)\langle\tilde{\Psi}_1,(\tilde{\mathcal{H}}_0)^2\tilde{\Psi}_1\rangle.
$$
Clearly,   
$$
\langle\tilde{\Psi}_1,(\tilde{\mathcal{H}}_0)^2\tilde{\Psi}_1\rangle\le\frac{1}{1-2b}\tilde{E}_1^2
$$
for a small $b$. Combining this with (\ref{relativebound}), one obtains
$$
\langle\tilde{\Psi}_1,\tilde{\mathcal{W}}^2\tilde{\Psi}_1\rangle\le\frac{b^2}{1-2b}\tilde{E}_1^2.
$$
On the other hand, one has 
$$
\langle\tilde{\Psi}_1,\tilde{\mathcal{W}}^2\tilde{\Psi}_1\rangle=\langle\tilde{\Psi}_1,
(\tilde{\mathcal{H}}-\tilde{\mathcal{H}}_0)^2\tilde{\Psi}_1\rangle
=\langle\tilde{\Psi}_1,(\tilde{E}_1-\tilde{\mathcal{H}}_0)^2\tilde{\Psi}_1\rangle. 
$$
{From} these two, one obtains
$$
\langle\tilde{\Psi}_1,(\tilde{E}_1-\tilde{\mathcal{H}}_0)^2\tilde{\Psi}_1\rangle\le \frac{b^2}{1-2b}\tilde{E}_1^2. 
$$
When $\tilde{E}_1\ge \Delta E$, the statement of the theorem holds. Therefore, it is enough to treat the case 
of $\tilde{E}_1<\Delta E$. Then, the above bound is written 
\begin{equation}
\label{PsiE1-H02bound}
\langle\tilde{\Psi}_1,(\tilde{E}_1-\tilde{\mathcal{H}}_0)^2\tilde{\Psi}_1\rangle\le \frac{b^2}{1-2b}\Delta{E}^2. 
\end{equation}
Let us consider varying the strength of the interaction continuously from zero, i.e., varying the parameter $b$ 
{from} $b=0$ to a positive value by relying on the continuity of the spectrum of the Hamiltonian with respect to 
the parameters. For such a small $b$, we can assume $\tilde{E}_1\ge \Delta E/2$.  
Clearly, this inequality implies $\tilde{E}_1\ge \Delta E-\tilde{E}_1>0$. Combining this with 
the assumption on the spectrum of $\tilde{\mathcal{H}}_0$, one has 
$$
(\Delta E-\tilde{E}_1)^2\le \langle\tilde{\Psi}_1,(\tilde{E}_1-\tilde{\mathcal{H}}_0)^2\tilde{\Psi}_1\rangle
$$
for the normalized vector $\tilde{\Psi}_1$. Further, by combining this with (\ref{PsiE1-H02bound}), we obtain   
$$
\left[1-\frac{b}{\sqrt{1-2b}}\right]\Delta E\le \tilde{E}_1<\Delta E
$$
for $b$ satisfying 
$$
\frac{b}{\sqrt{1-2b}}\le \frac{1}{2}.
$$
\end{proof}  

The resulting bound for the first excitation energy $\tilde{E}_1$ implies 
that for a small $b$, the model is still gapped against a weak perturbation $\tilde{\mathcal{W}}$.
Therefore, in order to prove the existence of a non-vanishing spectral gap above the ground state of the interacting system, 
it is sufficient to prove the bound (\ref{relativebound}) with a small positive constant $b$ which depends on 
the strength of the interactions $\tilde{\mathcal{W}}$. 

We assume that interaction $\tilde{\mathcal{W}}$ is written as a sum of local interactions, 
$$
\tilde{\mathcal{W}}=\sum_{X\subset\tilde \Lambda} \tilde{\mathcal{W}}_X,
$$
where $X$ is the compact support of the local interaction $\tilde{\mathcal{W}}_X$.  
We also assume 
\begin{equation}
\label{tildecalWXtildeP00}
\tilde{\mathcal{W}}_X\tilde{P}_0=0
\end{equation}
for the projection $\tilde{P}_0$ onto the ground state of $\tilde{\mathcal{H}}_0$. 
We write $q_x:=\xi_x^\dagger \xi_x$ and $\overline{q}_x:=\xi_x\xi_x^\dagger$. 
Since the support $X$ of $\tilde{\mathcal{W}}_X$ is compact, the condition (\ref{tildecalWXtildeP00}) implies 
\begin{equation}
\label{tildecalWXtildeP0X0}
\tilde{\mathcal{W}}_X\tilde{P}_{0,X}=0,
\end{equation}
where 
\begin{equation}
\tilde{P}_{0,X}:=\prod_{x\in X}\overline{q}_x. 
\end{equation}
Write ${X}=\{x_1,x_2,\ldots,x_{|X|}\}$. Then,  
\begin{eqnarray}
\label{qbarID}
\tilde{P}_{0,X}&=&\overline{q}_{x_1}\overline{q}_{x_2}\cdots\overline{q}_{x_{|X|}}\nonumber\\
&=&\overline{q}_{x_1}\overline{q}_{x_2}\cdots\overline{q}_{x_{|X|}}
-\overline{q}_{x_1}\overline{q}_{x_2}\cdots\overline{q}_{x_{|X|-1}}
+\overline{q}_{x_1}\overline{q}_{x_2}\cdots\overline{q}_{x_{|X|-1}}\nonumber\\
&=&\overline{q}_{x_1}\overline{q}_{x_2}\cdots\overline{q}_{x_{|X|-1}}(-q_{x_{|X|}})
+\overline{q}_{x_1}\overline{q}_{x_2}\cdots\overline{q}_{x_{|X|-1}}\nonumber\\
&=&\overline{q}_{x_1}\overline{q}_{x_2}\cdots\overline{q}_{x_{|X|-1}}(-q_{x_{|X|}})
+\overline{q}_{x_1}\overline{q}_{x_2}\cdots\overline{q}_{x_{|X|-2}}(-q_{x_{|X|-1}})
+\cdots+(-q_{x_1})+1.\nonumber\\
\end{eqnarray}
Therefore, we have 
\begin{equation}
1-\tilde{P}_{0,X}=\sum_{n=1}^{|X|}Q_{n-1}q_{x_n}
\end{equation}
with $Q_0=1$ and $Q_{n-1}=\overline{q}_{x_1}\overline{q}_{x_2}\cdots\overline{q}_{x_{n-1}}$ for $n=2,3,\ldots,|X|$. 
{From} this and the condition (\ref{tildecalWXtildeP0X0}), we obtain 
\begin{equation}
\label{WXdecomp}
\tilde{\mathcal{W}}_X=(1-\tilde{P}_{0,X})\tilde{\mathcal{W}}_X(1-\tilde{P}_{0,X})
=\sum_{m,n=1}^{|X|}q_mQ_{m-1}^\ast\tilde{\mathcal{W}}_XQ_{n-1}q_n. 
\end{equation}

\begin{lemma}
\label{lem:tildecalWXcondition}
Suppose that the local interactions $\tilde{\mathcal{W}}_X$ satisfies 
\begin{equation}
\tilde{\mathcal{W}}_X\tilde{P}_{0,X}=0 
\end{equation}
and 
\begin{equation}
\label{assumptionWX}
\sup_{\tilde{\Lambda}}\sup_{x\in\tilde{\Lambda}} \sum_{X\ni x}|X|\cdot\Vert \tilde{\mathcal{W}}_X\Vert =:\tilde{g}<\infty. 
\end{equation}
Then, the following bound is valid:   
\begin{equation}
\label{tildecalW2tildecalH02bound}
\langle \tilde{\Psi},\tilde{\mathcal{W}}^2\tilde{\Psi}\rangle\le\left(\frac{\tilde{g}}{\Delta E}\right)^2\langle\tilde{\Psi},
(\tilde{\mathcal{H}}_0)^2\tilde{\Psi}\rangle 
\end{equation}
for any vector $\tilde{\Psi}$. Here, $\Delta E$ is the lower bound for the spectral gap above the ground state of 
the unperturbed Hamiltonian $\tilde{\mathcal{H}}_0$. 
\end{lemma}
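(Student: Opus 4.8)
The plan is to reduce the assertion to the operator inequality $\tilde{\mathcal{W}}^2\le\tilde g^2\tilde{\mathcal{N}}^2$, with $\tilde{\mathcal{N}}=\sum_{x\in\tilde\Lambda}q_x=\sum_{x\in\tilde\Lambda}\xi_x^\dagger\xi_x$ the total number operator. Indeed, in the diagonalising basis of Section~\ref{Sec:TransFree} (written in the $\xi$-notation of the present section) one has $\tilde{\mathcal{H}}_0=\sum_i\tilde\lambda_i\alpha_i^\dagger\alpha_i$ with $\tilde\lambda_i\ge\Delta E$ and $\tilde{\mathcal{N}}=\sum_i\alpha_i^\dagger\alpha_i$, so that $(\tilde{\mathcal{H}}_0)^2\ge(\Delta E)^2\tilde{\mathcal{N}}^2$, which is exactly the inequality (\ref{H02N2}); hence $\tilde{\mathcal{W}}^2\le\tilde g^2\tilde{\mathcal{N}}^2$ immediately yields $\langle\tilde\Psi,\tilde{\mathcal{W}}^2\tilde\Psi\rangle\le(\tilde g/\Delta E)^2\langle\tilde\Psi,(\tilde{\mathcal{H}}_0)^2\tilde\Psi\rangle$. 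Equivalently, it suffices to prove $\|\tilde{\mathcal{W}}\tilde\Psi\|\le\tilde g\,\|\tilde{\mathcal{N}}\tilde\Psi\|$ for every $\tilde\Psi$.

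The starting point is the decomposition (\ref{WXdecomp}). Re-summing its double sum over the supports $X$ and the internal indices, and writing $Q^X_x$ for the product of the mutually commuting projections $\overline q_z$ over the sites of $X$ preceding $x$, one obtains $\tilde{\mathcal{W}}=\sum_{z\in\tilde\Lambda}L_z\,q_z$ with $L_z:=\sum_{X\ni z}\tilde{\mathcal{W}}_X Q^X_z$, and more symmetrically $\tilde{\mathcal{W}}=\sum_{x,y\in\tilde\Lambda}q_x\,\tilde K_{x,y}\,q_y$ with $\tilde K_{x,y}:=\sum_{X\supset\{x,y\}}(Q^X_x)^{\ast}\tilde{\mathcal{W}}_X Q^X_y$. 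Since the $Q^X_x$ are products of projections, $\|L_z\|\le\sum_{X\ni z}\|\tilde{\mathcal{W}}_X\|\le\tilde g$ and $\|\tilde K_{x,y}\|\le\sum_{X\supset\{x,y\}}\|\tilde{\mathcal{W}}_X\|$, so that, interchanging sums and using $\sum_{y\in X}1=|X|$,
$$\sum_{y\in\tilde\Lambda}\|\tilde K_{x,y}\|\le\sum_{X\ni x}|X|\,\|\tilde{\mathcal{W}}_X\|\le\tilde g,\qquad \sum_{x\in\tilde\Lambda}\|\tilde K_{x,y}\|\le\tilde g.$$
This is the only place where the combinatorial weight $|X|$ of the hypothesis (\ref{assumptionWX}) enters.

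The heart of the matter is to bound $\|\tilde{\mathcal{W}}\tilde\Psi\|$ by splitting every vector along the eigenspaces of $\tilde{\mathcal{N}}$. Let $\Pi_k$ be the projection onto the $k$-particle sector, write $\tilde\Psi=\sum_k\tilde\Psi_k$, and recall that each $\tilde{\mathcal{W}}_X$ is a polynomial in the $|X|$ modes $\xi_x$, $x\in X$, hence changes the particle number by an integer of modulus at most $|X|$. For a unit vector $\tilde\Phi_m$ in the $m$-particle sector one estimates, using $\tilde{\mathcal{W}}=\sum_zL_z q_z$,
$$|\langle\tilde\Phi_m,\tilde{\mathcal{W}}\tilde\Psi_k\rangle|=\Bigl|\sum_{z}\langle L_z^{\ast}\tilde\Phi_m,\,q_z\tilde\Psi_k\rangle\Bigr|\le\Bigl(\sum_z\|\Pi_kL_z^{\ast}\tilde\Phi_m\|^2\Bigr)^{1/2}\Bigl(\sum_z\|q_z\tilde\Psi_k\|^2\Bigr)^{1/2},$$
where the second factor equals $\sqrt k\,\|\tilde\Psi_k\|$ since $\sum_zq_z=\tilde{\mathcal{N}}$. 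In the first factor only sets $X$ with $|X|\ge|m-k|$ contribute, and a number-changing-by-$(m-k)$ action of $\tilde{\mathcal{W}}_X$ forces at least $|m-k|$ occupied (resp. empty) modes in $X$, so that a projection of norm $\le(1+|m-k|)^{-1/2}\tilde{\mathcal{N}}_X^{1/2}$, $\tilde{\mathcal{N}}_X:=\sum_{z\in X}q_z$, may be inserted next to $\tilde{\mathcal{W}}_X$. Feeding these two facts through a Schwarz inequality over the contributing $X$ weighted by $\|\tilde{\mathcal{W}}_X\|$, and spending $\sum_z\mathbf 1[z\in X]=|X|$ against one factor of the weight in (\ref{assumptionWX}), one bounds $\sum_z\|\Pi_kL_z^{\ast}\tilde\Phi_m\|^2$ by $\tilde g^2 m$ times a decaying function of $|m-k|$; summing the resulting bound on $|\langle\tilde\Phi_m,\tilde{\mathcal{W}}\tilde\Psi_k\rangle|$ over the sectors $k$ and then over $m$ — a discrete Hardy-type estimate — should give $\|\tilde{\mathcal{W}}\tilde\Psi\|^2\le\tilde g^2\sum_k k^2\|\tilde\Psi_k\|^2=\tilde g^2\langle\tilde\Psi,\tilde{\mathcal{N}}^2\tilde\Psi\rangle$, as required.

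The delicate point, which I expect to be the main obstacle, is precisely this last summation over the particle-number sectors: the between-sector matrix elements must decay fast enough in $|m-k|$ that summing them over the output sector $m$ produces no extensive, nor even logarithmic, factor in $|\tilde\Lambda|$, while the constant stays equal to $\tilde g$. This forces one to use the bandwidth of the local terms together with the occupation constraints finely, and to balance them against the decay of $\|\tilde{\mathcal{W}}_X\|$ in $|X|$ guaranteed by $\tilde g<\infty$ — and, ultimately, by the stronger weight in the bound (\ref{boundednesstildeV0}) that the transformed interaction actually obeys. Everything else is repeated use of the Schwarz inequality and bookkeeping of the mutually commuting diagonal projections $q_z$ and $\overline q_z$ built in (\ref{qbarID})--(\ref{WXdecomp}).
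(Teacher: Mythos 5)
Your reduction to $\langle\tilde\Psi,\tilde{\mathcal{W}}^2\tilde\Psi\rangle\le\tilde g^2\langle\tilde\Psi,\tilde{\mathcal{N}}^2\tilde\Psi\rangle$ followed by (\ref{H02N2}) is exactly the right target, and your rewriting $\tilde{\mathcal{W}}=\sum_z L_zq_z$ with $\Vert L_z\Vert\le\tilde g$ from (\ref{WXdecomp}) is sound. But the core of your argument --- bounding the matrix elements of $\tilde{\mathcal{W}}$ between particle-number sectors by $\tilde g^2 m$ times a function decaying in $|m-k|$, and then closing the sum over sectors by a ``discrete Hardy-type estimate'' --- is not carried out, and as you yourself concede it cannot be closed under the lemma's own hypothesis: the factor you can extract from the occupation constraint is only of order $(1+|m-k|)^{-1}$, which is not summable over the output sector without a volume- or particle-number-dependent loss, and any genuine decay in $|m-k|$ would have to come from decay of $\Vert\tilde{\mathcal{W}}_X\Vert$ in $|X|$ beyond the single factor $|X|$ that (\ref{assumptionWX}) provides. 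Appealing to the stronger weight in (\ref{boundednesstildeV0}) changes the hypotheses of the lemma, and even then the constant in (\ref{tildecalW2tildecalH02bound}) would degrade from $(\tilde g/\Delta E)^2$, so the statement as written would not be proved. This is a genuine gap, not a bookkeeping issue.

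No sector decomposition is needed. The paper's proof stays entirely at the level of pairs of local terms: expand $\langle\tilde\Psi,\tilde{\mathcal{W}}^2\tilde\Psi\rangle=\sum_{X,Y}\langle\tilde\Psi,\tilde{\mathcal{W}}_X\tilde{\mathcal{W}}_Y\tilde\Psi\rangle$, use the decomposition (\ref{WXdecomp}) on \emph{both} factors so that a projection $q_{x_m}$ (and, when $X\cap Y=\emptyset$, also $q_{y_k}$) can be commuted out to the left of the inner product and $q_{x_n}$, $q_{y_\ell}$ to the right, apply the Schwarz inequality and $2ab\le a^2+b^2$ to get, for disjoint supports, $|X|\,|Y|\,\Vert\tilde{\mathcal{W}}_X\Vert\,\Vert\tilde{\mathcal{W}}_Y\Vert\sum_{x\in X,y\in Y}\langle\tilde\Psi,q_xq_y\tilde\Psi\rangle$ and, for overlapping supports, the analogous bound with single $q_x$'s; summing over $X,Y$ with (\ref{assumptionWX}) and using the identity $\tilde{\mathcal{N}}^2=\sum_xq_x+\sum_{x\ne y}q_xq_y$ yields $\tilde g^2\langle\tilde\Psi,\tilde{\mathcal{N}}^2\tilde\Psi\rangle$ with the sharp constant. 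If you replace your sector analysis by this pairwise estimate, the rest of your outline goes through.
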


\begin{proof}{Proof} Our aim is to estimate the quantity,
\begin{equation}
\label{PsiW2Psiexpand}
\langle \tilde{\Psi},\tilde{\mathcal{W}}^2\tilde{\Psi}\rangle=\sum_{X,Y}\langle \tilde{\Psi},
\tilde{\mathcal{W}}_X \tilde{\mathcal{W}}_Y\tilde{\Psi}\rangle,
\end{equation}
for a wavefunction $\tilde{\Psi}$. 
\medskip

\noindent
(i) Consider first the summand $\langle \tilde{\Psi},\tilde{\mathcal{W}}_X\tilde{\mathcal{W}}_Y\tilde{\Psi}\rangle$ 
in the right-hand side of (\ref{PsiW2Psiexpand})
in the case that $X\cap Y=\emptyset$. We write $X=\{x_1,x_2,\ldots,x_{|X|}\}$ 
and $Y=\{y_1,y_2,\ldots,y_{|Y|}\}$. By using the above decomposition (\ref{WXdecomp}) and 
Schwarz inequality, we have
\begin{eqnarray*}
|\langle\tilde{\Psi},\tilde{\mathcal{W}}_X\tilde{\mathcal{W}}_Y\tilde{\Psi}\rangle|&\le&
\sum_{m,n,k,\ell}
|\langle\tilde{\Psi},q_{x_m}Q_{m-1}^\ast \tilde{\mathcal{W}}_XQ_{n-1}q_{x_n}q_{y_k}Q_{k-1}^\ast \tilde{\mathcal{W}}_YQ_{\ell-1}q_{y_\ell}\tilde{\Psi}\rangle|\\
&=&\sum_{m,n,k,\ell}
|\langle\tilde{\Psi},q_{x_m}q_{y_k}Q_{m-1}^\ast \tilde{\mathcal{W}}_XQ_{n-1}Q_{k-1}^\ast \tilde{\mathcal{W}}_YQ_{\ell-1}q_{x_n}q_{y_\ell}\tilde{\Psi}\rangle|\\
&\le&\sum_{m,n,k,\ell}
\sqrt{\langle\tilde{\Psi},q_{x_m}q_{y_k}Q_{m-1}^\ast \tilde{\mathcal{W}}_XQ_{n-1}Q_{n-1}^\ast \tilde{\mathcal{W}}_X^\ast Q_{m-1}q_{y_k}q_{x_m}
\tilde{\Psi}\rangle}\\
& &\times 
\sqrt{\langle\tilde{\Psi},q_{y_\ell}q_{x_n}Q_{\ell-1}^\ast \tilde{\mathcal{W}}_Y^\ast Q_{k-1}Q_{k-1}^\ast \tilde{\mathcal{W}}_YQ_{\ell-1}q_{x_n}q_{y_\ell}\tilde{\Psi}\rangle}\\
&\le& \sum_{m,n,k,\ell}
\Vert \tilde{\mathcal{W}}_X\Vert \cdot \Vert \tilde{\mathcal{W}}_Y\Vert \sqrt{\langle\tilde{\Psi},q_{x_m}q_{y_k}\tilde{\Psi}\rangle}
\sqrt{\langle\tilde{\Psi},q_{y_\ell}q_{x_n}\tilde{\Psi}\rangle}.
\end{eqnarray*}
Further, by using the inequality $2ab\le(a^2+b^2)$ for $a,b>0$, we obtain 
\begin{eqnarray}
\label{WXWYempty}
|\langle\tilde{\Psi},\tilde{\mathcal{W}}_X\tilde{\mathcal{W}}_Y\tilde{\Psi}\rangle|
&\le& \frac{1}{2}\sum_{m,n,k,\ell}
\Vert \tilde{\mathcal{W}}_X\Vert \cdot \Vert \tilde{\mathcal{W}}_Y\Vert(\langle\tilde{\Psi},q_{x_m}q_{y_k}\tilde{\Psi}\rangle+\langle\tilde{\Psi},q_{y_\ell}q_{x_n}\tilde{\Psi}\rangle)\nonumber\\
&=&\sum_{m,k}|X|\cdot|Y|\cdot\Vert \tilde{\mathcal{W}}_X\Vert \cdot \Vert \tilde{\mathcal{W}}_Y\Vert\langle\tilde{\Psi},q_{x_m}q_{y_k}\tilde{\Psi}\rangle\nonumber\\
&=&|X|\cdot|Y|\cdot\Vert \tilde{\mathcal{W}}_X\Vert \cdot \Vert \tilde{\mathcal{W}}_Y\Vert\sum_{x_m\in X,y_k\in Y}\langle\tilde{\Psi},q_{x_m}q_{y_k}\tilde{\Psi}\rangle.
\end{eqnarray}
\smallskip

\noindent
(ii) Next consider the case that $X\cap Y\ne\emptyset$. In this case, 
we have 
\begin{eqnarray}
\label{WXWYestimate}
|\langle\tilde{\Psi},\tilde{\mathcal{W}}_X\tilde{\mathcal{W}}_Y\tilde{\Psi}\rangle|&\le&\sum_{m,\ell}
\Vert \tilde{\mathcal{W}}_X\Vert \cdot \Vert \tilde{\mathcal{W}}_Y\Vert \sqrt{\langle\tilde{\Psi},q_{x_m}\tilde{\Psi}\rangle}
\sqrt{\langle\tilde{\Psi},q_{y_\ell}\tilde{\Psi}\rangle}\nonumber\\
&\le&\frac{1}{2}\sum_{m,\ell}\Vert \tilde{\mathcal{W}}_X\Vert \cdot \Vert \tilde{\mathcal{W}}_Y\Vert
(\langle\tilde{\Psi},q_{x_m}\tilde{\Psi}\rangle+\langle\tilde{\Psi},q_{y_\ell}\tilde{\Psi}\rangle)\nonumber\\
&=&\frac{1}{2}|Y|\cdot\Vert \tilde{\mathcal{W}}_X\Vert \cdot \Vert \tilde{\mathcal{W}}_Y\Vert
\sum_{x_m\in X}\langle\tilde{\Psi},q_{x_m}\tilde{\Psi}\rangle
\nonumber\\&+&
\frac{1}{2}|X|\cdot\Vert \tilde{\mathcal{W}}_X\Vert \cdot \Vert \tilde{\mathcal{W}}_Y\Vert
\sum_{y_\ell\in Y}\langle\tilde{\Psi},q_{y_\ell}\tilde{\Psi}\rangle
\end{eqnarray}
because of the noncommutativity of the operators. 

Note that 
\begin{eqnarray}
\label{PsiW2Psiexpand2}
\langle \tilde{\Psi},\tilde{\mathcal{W}}^2\tilde{\Psi}\rangle&=&\sum_{X,Y}
\langle \tilde{\Psi},\tilde{\mathcal{W}}_X\tilde{\mathcal{W}}_Y\tilde{\Psi}\rangle\nonumber\\
&\le&\sum_{X,Y}
|\langle \tilde{\Psi},\tilde{\mathcal{W}}_X\tilde{\mathcal{W}}_Y\tilde{\Psi}\rangle|\nonumber\\
&\le&\sum_{X,Y\; : \; X\cap Y\ne\emptyset}
|\langle \tilde{\Psi},\tilde{\mathcal{W}}_X\tilde{\mathcal{W}}_Y\tilde{\Psi}\rangle|+\sum_{X,Y\; : \; X\cap Y=\emptyset}
|\langle \tilde{\Psi},\tilde{\mathcal{W}}_X\tilde{\mathcal{W}}_Y\tilde{\Psi}\rangle|.
\end{eqnarray}
By using the bound (\ref{WXWYestimate}) and the assumption (\ref{assumptionWX}), 
the first sum in the right-hand side can be evaluated as 
\begin{eqnarray*}
\sum_{X,Y\; : \; X\cap Y\ne\emptyset}
|\langle \tilde{\Psi},\tilde{\mathcal{W}}_X\tilde{\mathcal{W}}_Y\tilde{\Psi}\rangle|&\le& \sum_{X,Y\; : \; X\cap Y\ne\emptyset}
|Y|\cdot \Vert \tilde{\mathcal{W}}_X\Vert\cdot\Vert \tilde{\mathcal{W}}_Y\Vert \sum_{x\in X}\langle \tilde{\Psi},q_x\tilde{\Psi}\rangle\\
&\le&\sum_{x\in\tilde{\Lambda}}\;\sum_{X\ni x}\; \sum_{Y: X\cap Y\ne \emptyset}
|Y|\cdot \Vert \tilde{\mathcal{W}}_X\Vert\cdot\Vert \tilde{\mathcal{W}}_Y\Vert\langle \tilde{\Psi},q_x\tilde{\Psi}\rangle\\
&\le&\sum_{x\in\tilde{\Lambda}}\;\sum_{X\ni x}\;\sum_{y\in X}\; \sum_{Y\ni y}
|Y|\cdot \Vert \tilde{\mathcal{W}}_X\Vert\cdot\Vert \tilde{\mathcal{W}}_Y\Vert\langle \tilde{\Psi},q_x\tilde{\Psi}\rangle\\
&\le&\tilde{g} \sum_{x\in\tilde{\Lambda}}\;\sum_{X\ni x}|X|\cdot\Vert \tilde{\mathcal{W}}_X\Vert\langle \tilde{\Psi},q_x\tilde{\Psi}\rangle\\
&\le&(\tilde{g})^2\sum_{x\in\tilde{\Lambda}}\;\langle \tilde{\Psi},q_x\tilde{\Psi}\rangle. 
\end{eqnarray*}
Similarly, by using the bound (\ref{WXWYempty}), 
the second sum in the right-hand side of (\ref{PsiW2Psiexpand2}) can be estimated as 
\begin{eqnarray*}
\sum_{X,Y\; : \; X\cap Y=\emptyset}
|\langle \tilde{\Psi},\tilde{\mathcal{W}}_X\tilde{\mathcal{W}}_Y\tilde{\Psi}\rangle|
&\le&\sum_{X,Y\; : \; X\cap Y=\emptyset}
|X|\cdot|Y|\cdot\Vert \tilde{\mathcal{W}}_X\Vert \cdot \Vert \tilde{\mathcal{W}}_Y\Vert\sum_{x\in X,y\in Y}\langle\tilde{\Psi},q_xq_y\tilde{\Psi}\rangle\\
&\le&\sum_{x,y\in\tilde{\Lambda}\; :\; x\ne y}\; \sum_{X\ni x}\; \sum_{Y\ni y}\; 
|X|\cdot|Y|\cdot\Vert \tilde{\mathcal{W}}_X\Vert \cdot \Vert \tilde{\mathcal{W}}_Y\Vert\langle\tilde{\Psi},q_xq_y\tilde{\Psi}\rangle\\
&\le&(\tilde{g})^2\sum_{x,y\in\tilde{\Lambda}\; :\; x\ne y}\; \langle\tilde{\Psi},q_xq_y\tilde{\Psi}\rangle. 
\end{eqnarray*}
Combining these with  
$$
(\tilde{\mathcal{N}})^2=\Bigl(\sum_{x\in\tilde{\Lambda}}q_x\Bigr)^2
=\sum_{x\in\tilde{\Lambda}}q_x+\sum_{y,z\;:\;y\ne z}q_yq_z,
$$
we obtain 
$$
\langle \tilde{\Psi},\tilde{\mathcal{W}}^2\tilde{\Psi}\rangle\le(\tilde{g})^2\langle\tilde{\Psi},
(\tilde{\mathcal{N}})^2\tilde{\Psi}\rangle . 
$$
Further, by using the inequality $(\tilde{\mathcal{H}}_0)^2\ge (\Delta E)^2(\tilde{\mathcal{N}})^2$ of (\ref{H02N2}), 
we obtain the desired result (\ref{tildecalW2tildecalH02bound}). 
\end{proof}

\Section{Proof of Theorem~\ref{mainTheorem}}
\label{Sec:ProofMainTheorem}

In order to prove the statements of Theorem~\ref{mainTheorem}, from the argument of the preceding section, 
it is sufficient to prove the following: 
There exists a positive function $b(s)$ of the parameter $s$ such that $b(s)$ satisfies 
$b(s)\rightarrow 0$ as $s\rightarrow 0$ and  
\begin{equation}
(\tilde{\mathcal{W}}(s))^2\le b^2(s)(\tilde{\mathcal{H}}_0)^2
\end{equation}
for the Hamiltonian $\tilde{\mathcal{H}}(s)$ of (\ref{tildecalHs:tildecalH0+tildecalWs}).

Write
$$
\tilde{\mathcal{W}}^{(i)}(s):=\sum_{Z,n}\tilde{\mathcal{W}}_{Z,n}^{(i)}(s)
$$
for $i=1,2,3$. Then, 
$$
\tilde{\mathcal{W}}(s)=\sum_{i=1}^3 \tilde{\mathcal{W}}^{(i)}(s). 
$$
For any vector $\tilde{\Psi}$, one has 
\begin{eqnarray*}
\langle\tilde{\Psi},(\tilde{\mathcal{W}}(s))^2\tilde{\Psi} \rangle &\le&
\sum_{i=1}^3 \langle\tilde{\Psi},(\tilde{\mathcal{W}}^{(i)}(s))^2\tilde{\Psi} \rangle
+\sum_{i,j:i\ne j}\langle\tilde{\Psi},\tilde{\mathcal{W}}^{(i)}(s)\tilde{\mathcal{W}}^{(j)}(s)\tilde{\Psi}\rangle. 
\end{eqnarray*}
The summand in the second sum in the right-hand side is estimated by 
\begin{eqnarray*}
\left|\langle\tilde{\Psi},\tilde{\mathcal{W}}^{(i)}(s)\tilde{\mathcal{W}}^{(j)}(s)\tilde{\Psi}\rangle\right|
&\le&\sqrt{\langle\tilde{\Psi},(\tilde{\mathcal{W}}^{(i)}(s))^2\tilde{\Psi}\rangle
\langle\tilde{\Psi},(\tilde{\mathcal{W}}^{(j)}(s))^2\tilde{\Psi}\rangle}\\
&\le& \frac{1}{2}\left[\langle\tilde{\Psi},(\tilde{\mathcal{W}}^{(i)}(s))^2\tilde{\Psi}\rangle 
+ \langle\tilde{\Psi},(\tilde{\mathcal{W}}^{(j)}(s))^2\tilde{\Psi}\rangle\right].
\end{eqnarray*}
Therefore, one has 
\begin{equation}
\label{tildecalWsWis}
\langle\tilde{\Psi},(\tilde{\mathcal{W}}(s))^2\tilde{\Psi} \rangle \le
3\sum_{i=1}^3 \langle\tilde{\Psi},(\tilde{\mathcal{W}}^{(i)}(s))^2\tilde{\Psi} \rangle.
\end{equation}

On the other hand, all of the local potentials, $\tilde{\mathcal{W}}_{Z,n}^{(i)}(s)$, 
satisfy the two conditions in Lemma~\ref{lem:tildecalWXcondition}. 
In particular, the boundedness is written 
\begin{equation}
\label{tildecalWZnsboundedness}
\sup_\Lambda \sup_{x\in\Lambda}\sum_Z\sum_{n=0}^\infty {\rm Ind}[x\in {\rm supp}\; \tilde{\mathcal{W}}_{Z,n}^{(i)}(s)]\cdot
|{\rm supp}\; \tilde{\mathcal{W}}_{Z,n}^{(i)}(s)|\cdot \Vert\tilde{\mathcal{W}}_{Z,n}^{(i)}(s)\Vert<\tilde{g}^{(i)}|s|,
\end{equation}
where ${\rm Ind}[\cdots]$ is the indicator function, and $\tilde{g}^{(i)}$ is the positive constants which are independent of 
the size $|\Lambda|$ of the lattice. 
The construction of $\tilde{\mathcal{W}}_{Z,n}^{(i)}(s)$ and the proofs of the above bounds (\ref{tildecalWZnsboundedness}) 
are given in Appendix~\ref{Sec:LocaApproObs}-\ref{Sec:BoundtildecalWZ3s}. 
By applying Lemma~\ref{lem:tildecalWXcondition} to (\ref{tildecalWsWis}), we have 
\begin{equation}
\langle\tilde{\Psi},(\tilde{\mathcal{W}}^{(i)}(s))^2\tilde{\Psi} \rangle \le\left(\frac{\tilde{g}^{(i)}}{\Delta E}\right)^2|s|^2
\langle\tilde{\Psi},(\tilde{\mathcal{H}}_0)^2\tilde{\Psi}\rangle 
\end{equation}
for any vector $\tilde{\Psi}$. Therefore, we obtain 
\begin{equation}
\langle\tilde{\Psi},(\tilde{\mathcal{W}}(s))^2\tilde{\Psi} \rangle \le 3\sum_{i=1}^3
\left(\frac{\tilde{g}^{(i)}}{\Delta E}\right)^2|s|^2
\langle\tilde{\Psi},(\tilde{\mathcal{H}}_0)^2\tilde{\Psi}\rangle.  
\end{equation}
This implies the existence of the non-vanishing spectral gap above the unique ground state 
of the total Hamiltonian $\tilde{\mathcal{H}}(s)$ 
for a small $|s|$ by Proposition~\ref{pro:relative} in the preceding section.

\appendix 

\Section{Local boundedness of the transformed potential $\tilde{V}$}
\label{append:boundedtildeV}

In this appendix, we prove the bound (\ref{boundednesstildeV0}) for the transformed potential $\tilde{V}$. 

To begin with, let us consider the unperturbed Hamiltonian $\tilde{\mathcal{H}}_0$. 
{From} the expression (\ref{tildecalH0eta}) of the unperturbed Hamiltonian $\tilde{\mathcal{H}}_0$, 
the local Hamiltonian can be written 
\begin{equation}
\tilde{\mathcal{H}}_{0,Z}=\sum_{\mu,\nu}(\eta_x^\mu)^\dagger |A|_{x,y}^{\mu,\nu}\eta_y^\nu,
\end{equation}
where the support $Z$ is given by $Z=\{x,y\}$ if $x\ne y$ and $Z=\{x\}$ if $x=y$. Since the matrix $|A|$ can be written 
$|A|=As(A)$, the norm can be evaluated as 
\begin{equation}
\label{tildecalH0Zexpdecay}
\Vert\tilde{\mathcal{H}}_{0,Z}\Vert\le \sum_{\mu,\nu}\left||A|_{x,y}^{\mu,\nu}\right|
\le \sum_{\mu,\nu}\sum_{z,\kappa}|A_{x,z}^{\mu,\kappa}|\cdot|s(A)_{z,y}^{\kappa,\nu}|
\le C_{t,h}\cdot e^{-m_{t,h}{\rm dist}(x,y)}
\end{equation}
with the positive constants, $C_{t,h}$ and $m_{t,h}$, where we have used the decay bound (\ref{texpdecay}) 
for the hopping amplitudes $t_{x,y}$ which yields an exponential decay bound for $|A_{x,z}^{\mu,\kappa}|$ 
because the matrix elements of $A$ are written in terms of $t_{x,y}$, and we have also used the assumption on  
the spectral gap above the Fermi level which yields an exponential decay bound for $|s(A)_{z,y}^{\kappa,\nu}|$. 
{From} this bound (\ref{tildecalH0Zexpdecay}), one has 
\begin{equation}
\label{boundednessH0Zx}
\sum_{Z\ni x}\Vert\tilde{\mathcal{H}}_{0,Z}\Vert\le\sum_{y}C_{t,h}e^{-m_{t,h}{\rm dist}(x,y)}
\le C_{t,h}K_{t,h}
\end{equation}
and 
\begin{equation}
\label{boundednessH0Zxy}
\sum_{Z\ni x,y}\Vert\tilde{\mathcal{H}}_{0,Z}\Vert\le C_{t,h}K_{t,h}\cdot e^{-m_{t,h}{\rm dist}(x,y)}
\end{equation}
where the constant $K_{t,h}$ is given by  
$$
K_{t,h}:=\sum_{y\in\ze^d}e^{-m_{t,h}{\rm dist}(0,y)}. 
$$

In the following, we will obtain an analogue of these bounds for the local interactions $\tilde{V}_Z$. 
To begin with, we recall the expressions (\ref{abycd}) and (\ref{gammacd}). Using these, one has 
$$
a_x^\dagger=\frac{1}{\sqrt{2}}(\gamma_{1,x}^c-i\gamma_{1,x}^d),
\quad 
a_x=\frac{1}{\sqrt{2}}(\gamma_{1,x}^c+i\gamma_{1,x}^d).
$$
Further, from (\ref{tildeU}) and (\ref{tildegamma}), one obtains 
$$
\gamma_{1,x}^\mu=\frac{1}{\sqrt{2}}
\left[\tilde{\gamma}_{1,x}^\mu+\sum_{y,\nu}is(A)_{x,y}^{\mu,\nu}\tilde{\gamma}_{2,y}^{\nu}\right].
$$
These expressions yield  
$$
a_x^\dagger=\frac{1}{2}\left[\tilde{\gamma}_{1,x}^c-i\tilde{\gamma}_{1,x}^d
+\sum_{y,\nu}\{is(A)_{x,y}^{c,\nu}+s(A)_{x,y}^{d,\nu}\}\tilde{\gamma}_{2,y}^\nu\right]
$$
and 
$$
a_x=\frac{1}{2}\left[\tilde{\gamma}_{1,x}^c+i\tilde{\gamma}_{1,x}^d
+\sum_{y,\nu}\{is(A)_{x,y}^{c,\nu}-s(A)_{x,y}^{d,\nu}\}\tilde{\gamma}_{2,y}^\nu\right].
$$
We write 
$$
\zeta_{1,x}:=\frac{1}{\sqrt{2}}(\tilde{\gamma}_{1,x}^c+i\tilde{\gamma}_{1,x}^d)
$$
and 
\begin{equation}
\label{omega}
\omega_{2,x}:=\frac{1}{\sqrt{2}}\sum_{y,\nu}\{is(A)_{x,y}^{c,\nu}-s(A)_{x,y}^{d,\nu}\}\tilde{\gamma}_{2,y}^\nu.
\end{equation}
Then, we have
\begin{equation}
\label{transazo}
a_x=\frac{1}{\sqrt{2}}(\zeta_{1,x}+\omega_{2,x}),\quad 
a_x^\dagger=\frac{1}{\sqrt{2}}(\zeta_{1,x}^\dagger +\omega_{2,x}^\dagger).
\end{equation}
The Majorana fermion operators, $\tilde{\gamma}_{1,x}^\mu$ and $\tilde{\gamma}_{2,x}^\mu$, can be expressed in terms of $\eta_x^\mu$, 
i.e.,  
$$
\tilde{\gamma}_{1,x}^\mu=\frac{1}{\sqrt{2}}[(\eta_x^\mu)^\dagger+\eta_x^\mu],\quad\mbox{and}\quad 
\tilde{\gamma}_{2,x}^\mu=\frac{i}{\sqrt{2}}[(\eta_x^\mu)^\dagger-\eta_x^\mu]. 
$$
Therefore, the fermion operators $a_x$ can be also expressed in terms of $\eta_x^\mu$.

The local interaction $V_X$ with the support $X$ is written as a sum of many types of monomials of 
the fermion operators, $a_x$ and $a_y^\dagger$, for $x,y\in\Lambda$ as follows: 
\begin{equation}
V_X=\sum_{I=I(X)}V_X^{(I)},
\end{equation}
where $V_X^{(I)}$ is the interaction of a monomial of the fermion operators, $a_x$ and $a_y^\dagger$, 
and we have introduced the superscript $I$ to distinguish many types of the monomials. 
Clearly, the superscript $I$ depends on the support $X$, and therefore we will often write $I=I(X)$. 
Specifically, $V_X^{(I)}$ has the form, 
\begin{equation}
\label{VXI}
V_X^{(I)}=C_X^{(I)}a_{x_1}^\dagger a_{x_2}^\dagger \cdots a_{x_m}^\dagger a_{y_1}a_{y_2}\cdots a_{y_n} 
a_{z_1}^\dagger a_{z_1} a_{z_2}^\dagger a_{z_2}\cdots a_{z_\ell}^\dagger a_{z_\ell},
\end{equation}
where $C_X^{(I)}$ is a constant, and the three sets of the sites, $\{x_1,x_2,\ldots,x_m\}$, $\{y_1,y_2,\ldots,y_n\}$ 
and $\{z_1,z_2,\ldots,z_\ell\}$, are disjoint each other, and satisfy 
$$
X=\{x_1,x_2,\ldots,x_m\}\cup\{y_1,y_2,\ldots,y_n\}\cup\{z_1,z_2,\ldots,z_\ell\}.
$$
Note that the operator norm $\Vert \mathcal{A}\Vert$ for an operator $\mathcal{A}$ can be written  
\begin{equation}
\Vert \mathcal{A}\Vert=\sup_{\phi,\psi}\frac{|\langle \phi,\mathcal{A}\psi\rangle|}{\Vert \phi\Vert \Vert \psi\Vert}
\end{equation}
By using this, one has 
\begin{equation}
\label{VXVXICXIbound}
\Vert V_X\Vert\ge \Vert V_X^{(I)}\Vert=|C_X^{(I)}|.
\end{equation}

Using the transformations (\ref{transazo}), the interaction $V_X^{(I)}$ can be written  
\begin{equation}
V_X^{(I)}=\sum_{\tilde{I}:I=I(X)\rightarrow\tilde{I}}{V}_{X}^{(I\rightarrow\tilde{I})},
\end{equation}
where ${V}_{X}^{(I\rightarrow\tilde{I})}$ is the monomial of the fermion operators, $\zeta_{1,x}$, $\zeta_{1,y}^\dagger$, 
$\omega_{2,z}$, $\omega_{2,w}^\dagger$, for $x,y,z,w\in\Lambda$   
with the support $X$ and the type $\tilde{I}$ of the monomial; we have written $I(X)\rightarrow\tilde{I}$ 
when $\tilde{I}$ is obtained from $I=I(X)$.
Further, by using (\ref{omega}), we have  
\begin{equation}
{V}_{X}^{(I\rightarrow\tilde{I})}
=\sum_{\tilde{X}:X\rightarrow \tilde{X},I(X)\rightarrow\tilde{I}} \tilde{V}_{\tilde{X}}^{(\tilde{I})},
\end{equation}
where the support $\tilde{X}$ is determined by the supports of the operators, $\zeta_{1,x}$, $\zeta_{1,y}^\dagger$ 
and $\tilde{\gamma}_{2,z}^\mu$, for $x,y,z\in\Lambda$ and $\mu=c,d$. 
One can easily check that $\tilde{V}_{\tilde{X}}^{(\tilde{I})}$ is the monomial of the operators, 
$\zeta_{1,x}$, $\zeta_{1,y}^\dagger$, $\tilde{\omega}_{2,u,z}^\mu$ and $\tilde{\omega}_{2,v,w}^\nu$ 
for $x,y,z,w,u,v\in\Lambda$ and $\mu,\nu\in\{c,d\}$, where we have written  
\begin{equation}
\label{tildeomega}
\tilde{\omega}_{2,x,y}:=\sum_{\nu=c,d}M_{x,y}^\nu\tilde{\gamma}_{2,y}^\nu
\end{equation}
with
$$
M_{x,y}^\nu:=\frac{1}{\sqrt{2}}\{is(A)_{x,y}^{c,\nu}-s(A)_{x,y}^{d,\nu}\}.
$$
{From} the assumption on the spectral gap of the single fermion Hamiltonian $h$, one has 
\begin{equation}
\left|M_{x,y}^\nu\right|\le\frac{C_h}{\sqrt{2}}e^{-m_h{\rm dist}(x,y)},
\end{equation} 
where the positive constants, $C_h$ and $m_h$, are determined by the Hamiltonian $h$. Combining this bound, 
(\ref{tildeomega}) and $\Vert\tilde{\gamma}_{2,y}^\nu\Vert=1/\sqrt{2}$, one obtains 
\begin{equation}
\label{tildeomegabound}
\Vert \tilde{\omega}_{2,x,y}\Vert=\Vert \tilde{\omega}_{2,x,y}^\dagger\Vert
\le \sum_{\nu=c.d}|M_{x,y}^\nu|\Vert\tilde{\gamma}_{2,y}^\nu\Vert
\le C_he^{-m_h{\rm dist}(x,y)}.
\end{equation}

{From} (\ref{omega}), (\ref{transazo}), (\ref{VXI}) and (\ref{tildeomega}), 
the interaction $\tilde{V}_{\tilde{X}}^{(\tilde{I})}$ can be written in the form,
\begin{equation}
\label{tildeVtildeXtildeIexpr}
\tilde{V}_{\tilde{X}}^{(\tilde{I})}=\pm C_X^{(I)}\left(\frac{1}{\sqrt{2}}\right)^{m+n}
\left(\frac{1}{2}\right)^\ell\zeta_{1,u_1}^\bullet\zeta_{1,u_2}^\bullet\cdots\zeta_{1,u_i}^\bullet
\tilde{\omega}_{2,v_1,w_1}^\bullet\tilde{\omega}_{2,v_2,w_2}^\bullet\cdots\tilde{\omega}_{2,v_j,w_j}^\bullet,
\end{equation}
where $\mathcal{A}^\bullet$ denotes $\mathcal{A}^\dagger$ or $\mathcal{A}$ for an operator $\mathcal{A}$, 
and the sites for the operators satisfy 
$X=\{u_1,u_2,\ldots,u_i\}\cup\{v_1,v_2,\ldots,v_j\}$ 
and $\tilde{X}=\{u_1,u_2,\ldots,u_i\}\cup\{w_1,w_2,\ldots,w_j\}$. 
Therefore, by using (\ref{VXVXICXIbound}) and (\ref{tildeomegabound}), we obtain 
\begin{equation}
\label{tildeVtildeXtildeIbound}
\Vert\tilde{V}_{\tilde{X}}^{(\tilde{I})}\Vert\le \left(\frac{1}{\sqrt{2}}\right)^{m+n}
\left(\frac{1}{2}\right)^\ell\Vert V_X\Vert (C_h)^j\prod_{k=1}^j e^{-m_h{\rm dist}(v_k,w_k)}. 
\end{equation}

The transformed local interaction $\tilde{V}_{\tilde{X}}$ 
with the support $\tilde{X}$ can be written as a sum of the monomials $\tilde{V}_{\tilde{X}}^{(\tilde{I})}$, 
\begin{equation}
\tilde{V}_{\tilde{X}}=\sum_{X: X\rightarrow \tilde{X}}\sum_{I=I(X):X\rightarrow\tilde{X}}
\sum_{\tilde{I}:X\rightarrow \tilde{X},I(X)\rightarrow\tilde{I}}
\tilde{V}_{\tilde{X}}^{(\tilde{I})}.
\end{equation}
{From} this expression, one has 
\begin{eqnarray*}
\sum_{\tilde{X}\ni x,y}|\tilde{X}|^3\Vert \tilde{V}_{\tilde{X}}\Vert 
&\le& \sum_{\tilde{X}\ni x,y}\sum_{X: X\rightarrow \tilde{X}}\sum_{I=I(X):X\rightarrow\tilde{X}}
\sum_{\tilde{I}:X\rightarrow \tilde{X},I(X)\rightarrow\tilde{I}}|\tilde{X}|^3
\Vert \tilde{V}_{\tilde{X}}^{(\tilde{I})}\Vert\\
&\le& J_1+J_2+J_3+J_4
\end{eqnarray*}
with 
\begin{equation}
J_1:=\sum_{\tilde{X}}\sum_{X: X\rightarrow \tilde{X}}\sum_{I=I(X):X\rightarrow\tilde{X}}
\sum_{\tilde{I}:X\rightarrow \tilde{X},I(X)\rightarrow\tilde{I}}|\tilde{X}|^3
\Vert \tilde{V}_{\tilde{X}}^{(\tilde{I})}\Vert\times{\rm Ind}[x,y\in X],
\end{equation}
\begin{equation}
J_2:=\sum_{\tilde{X}}\sum_{X: X\rightarrow \tilde{X}}\sum_{I=I(X):X\rightarrow\tilde{X}}
\sum_{\tilde{I}:X\rightarrow \tilde{X},I(X)\rightarrow\tilde{I}}|\tilde{X}|^3
\Vert \tilde{V}_{\tilde{X}}^{(\tilde{I})}\Vert\times{\rm Ind}[x\in X,y\in \tilde{X}\backslash X],
\end{equation}
\begin{equation}
J_3:=\sum_{\tilde{X}}\sum_{X: X\rightarrow \tilde{X}}\sum_{I=I(X):X\rightarrow\tilde{X}}
\sum_{\tilde{I}:X\rightarrow \tilde{X},I(X)\rightarrow\tilde{I}}|\tilde{X}|^3
\Vert \tilde{V}_{\tilde{X}}^{(\tilde{I})}\Vert\times{\rm Ind}[y\in X,x\in \tilde{X}\backslash X],
\end{equation}
and 
\begin{equation}
J_4:=\sum_{\tilde{X}}\sum_{X: X\rightarrow \tilde{X}}\sum_{I=I(X):X\rightarrow\tilde{X}}
\sum_{\tilde{I}:X\rightarrow \tilde{X},I(X)\rightarrow\tilde{I}}|\tilde{X}|^3
\Vert \tilde{V}_{\tilde{X}}^{(\tilde{I})}\Vert\times{\rm Ind}[x,y\in \tilde{X}\backslash X],
\end{equation}
where ${\rm Ind}[\cdots]$ is the indicator function. 

Consider first the quantity $J_1$. One can easily show that it can be written as 
\begin{equation}
\label{J1bound}
J_1=\sum_{{X}\ni x,y}\sum_{\tilde{X}: X\rightarrow \tilde{X}}\sum_{I=I(X):X\rightarrow\tilde{X}}
\sum_{\tilde{I}:X\rightarrow \tilde{X},I(X)\rightarrow\tilde{I}}|\tilde{X}|^3
\Vert \tilde{V}_{\tilde{X}}^{(\tilde{I})}\Vert.
\end{equation}
The sums in the right-hand side except for the first sum generate all of $\tilde{V}_{\tilde{X}}^{(\tilde{I})}$ 
{from} a given $V_X$. Therefore, we have 
\begin{eqnarray}
\label{sumtildeVtildeXtildeIbound}
& &\sum_{\tilde{X}: X\rightarrow \tilde{X}}\sum_{I=I(X):X\rightarrow\tilde{X}}
\sum_{\tilde{I}:X\rightarrow \tilde{X},I(X)\rightarrow\tilde{I}}|\tilde{X}|^3
\Vert \tilde{V}_{\tilde{X}}^{(\tilde{I})}\Vert\nonumber\\
&\le& \left[\sqrt{2}+\frac{1}{2}+(\sqrt{2}+1)\hat{K}_h+\frac{1}{2}\hat{K}_h^2\right]^{|X|}|\tilde{X}|^3\Vert V_X\Vert
\end{eqnarray}
by using the inequality (\ref{tildeVtildeXtildeIbound}), where the constant $\hat{K}_h$ is given by 
$$
\hat{K}_h:=\sum_{w\in\ze^d} C_h e^{-m_h{\rm dist}(0,w)}.
$$
For short, we write 
$$
\tilde{K}_h:=\sqrt{2}+\frac{1}{2}+(\sqrt{2}+1)\hat{K}_h+\frac{1}{2}\hat{K}_h^2.
$$
By substituting this result (\ref{sumtildeVtildeXtildeIbound}) into (\ref{J1bound}), we obtain 
\begin{equation}
J_1\le \sum_{X\ni x,y}(\tilde{K}_h)^{|X|}|\tilde{X}|^3\Vert V_X\Vert.
\end{equation}
Note that there exist positive constants, $C_h'$ and $K_h>\hat{K}_h$, such that 
$$
4(\tilde{K}_h)^{|X|}|{X}|^3\le C_h'(K_h)^{|X|}. 
$$
Combining this, $|\tilde{X}|\le 2|X|$ and Assumption~\ref{AssumptionVX}, we have 
\begin{equation}
J_1\le C_h'C_{h,V}e^{-m_{h,V}{\rm dist}(x,y)}. 
\end{equation}
 
Next, consider the quantity $J_2$. In this case, since $y\in \tilde{X}\backslash X$, 
the right-hand side of (\ref{tildeVtildeXtildeIexpr}) contains an operator $\tilde{\omega}_{2,z,y}^\bullet$ with $z\in X$.  
Except for this, we can deal with this case in the same way. As a result, we have 
\begin{eqnarray}
J_2&\le& \sum_{X\ni x}\sum_{z\in X}(\tilde{K}_h)^{|X|}|\tilde{X}|^3\Vert V_X\Vert
\frac{(\sqrt{2}+1)+K_h}{\tilde{K}_h}C_he^{-m_h{\rm dist}(z,y)}\nonumber\\
&\le&2\sum_{z\in \Lambda} \sum_{X\ni x,z}(\tilde{K}_h)^{|X|}|\tilde{X}|^3\Vert V_X\Vert e^{-m_h{\rm dist}(z,y)}\nonumber\\
&\le&2\sum_{z\in \ze^d}C_h'C_{h,V}e^{-m_{h,V}{\rm dist}(x,z)}e^{-m_h{\rm dist}(z,y)}.  
\end{eqnarray}
In order to estimate the last sum in the right-hand side, we use the inequality \cite{HK},    
\begin{equation}
\sum_{z\in \ze^d}e^{-m_{h,V}{\rm dist}(x,z)}e^{-m_h{\rm dist}(z,y)}\le K_{h,V}e^{-\tilde{m}_{h,V}{\rm dist}(x,y)},
\end{equation}
where $K_{h,V}$ is a positive constant, and $\tilde{m}_{h,V}$ is a positive constant 
which satisfies $\tilde{m}_{h,V}<\min\{m_{h,V},m_h\}$. Substituting this into the above right-hand side, we obtain 
\begin{equation}
J_2\le 2K_{h,V}C_h'C_{h,V}e^{-\tilde{m}_{h,V}{\rm dist}(x,y)}.
\end{equation}
Clearly, one has 
\begin{equation}
J_3\le 2K_{h,V}C_h'C_{h,V}e^{-\tilde{m}_{h,V}{\rm dist}(x,y)}
\end{equation}
in the same way. As to $J_4$, the right-hand side of (\ref{tildeVtildeXtildeIexpr}) contains two operators, 
$\tilde{\omega}_{2,z,x}^\bullet$ and $\tilde{\omega}_{2,w,y}^\bullet$ with $z,w\in X$. Therefore, 
\begin{eqnarray}
J_4&\le&4\sum_X \sum_{z,w\in X}(\tilde{K}_h)^{|X|}|\tilde{X}|^3\Vert V_X\Vert e^{-m_h{\rm dist}(x,z)}e^{-m_h{\rm dist}(w,y)}\nonumber\\
&\le& 4\sum_{z,w\in\Lambda}\sum_{X\ni z,w}(\tilde{K}_h)^{|X|}|\tilde{X}|^3\Vert V_X\Vert 
e^{-m_h{\rm dist}(x,z)}e^{-m_h{\rm dist}(w,y)}\nonumber\\
&\le& 4\sum_{z,w\in\ze^d} C_h'C_{h,V} e^{-m_{h,V}{\rm dist}(z,w)}e^{-m_h{\rm dist}(x,z)}e^{-m_h{\rm dist}(w,y)}\nonumber\\
&\le& 4K_{h,V}'C_h'C_{h,V}e^{-\tilde{m}_{h,V}{\rm dist}(x,y)}
\end{eqnarray}
with a positive constant $K_{h,V}'$ and the same positive constant $\tilde{m}_{h,V}$. 
Consequently, from these observations, we obtain 
\begin{equation}
\label{boundedtildeVX}
\sum_{\tilde{X}\ni x,y}|\tilde{X}|^3\Vert\tilde{V}_{\tilde X}\Vert\le \tilde{C}_{h,V}e^{-\tilde{m}_{h,V}{\rm dist}(x,y)}
\end{equation}
with a positive constant $\tilde{C}_{h,V}$. Clearly, the transformed potential $\tilde{V}$ is written 
$$
\tilde{V}=\sum_{\tilde{X}\subset\Lambda}\tilde{V}_{\tilde{X}}.
$$
The bound (\ref{boundedtildeVX}) implies \cite{BMNS} 
\begin{equation}
\label{boundedtildeVX2}
\sum_{\tilde{X}\ni x}|\tilde{X}|^3\Vert\tilde{V}_{\tilde X}\Vert\le \tilde{C}_{h,V}.
\end{equation}

\Section{Local approximation of observables}
\label{Sec:LocaApproObs}

In this appendix, we recall the method by \cite{BMNS,NSY} which approximates an observable 
with a large support by an observable with a smaller support. 

We write $q_x=\xi_x^\dagger \xi_x$ and $\overline{q}_x=\xi_x\xi_x^\dagger$. Then, one has $(1-2q_x)^2=1$, and 
\begin{equation}
(1-2q_x)\xi_x(1-2q_x)=-\xi_x, \quad \mbox{and} \quad (1-2q_x)\xi_x^\dagger(1-2q_x)=-\xi_x^\dagger. 
\end{equation}
Further, $(\xi_x+\xi_x^\dagger)^2=1$, and 
\begin{equation}
(\xi_x+\xi_x^\dagger)\xi_x(\xi_x+\xi_x^\dagger)=\xi_x^\dagger, \quad \mbox{and}\quad  
(\xi_x+\xi_x^\dagger)\xi_x^\dagger(\xi_x+\xi_x^\dagger)=\xi_x.
\end{equation}
These imply 
\begin{equation}
(\xi_x+\xi_x^\dagger)q_x(\xi_x+\xi_x^\dagger)=\overline{q}_x=1-q_x, \quad \mbox{and}\quad  
(\xi_x+\xi_x^\dagger)\overline{q}_x(\xi_x+\xi_x^\dagger)=q_x.
\end{equation}
Let $\sigma\in\{(0,0),(1,0),(0,1),(1,1)\}$, and define 
\begin{equation}
\tilde{U}_x(\sigma)=\begin{cases}1, & \mbox{for \ } \sigma=(0,0); \\ 
                (1-2q_x), & \mbox{for \ } \sigma=(1,0); \\
                 (\xi_x+\xi_x^\dagger), & \mbox{for \ } \sigma=(0,1);\\  
                 (1-2q_x)(\xi_x+\xi_x^\dagger), & \mbox{for \ }\sigma=(1,1).\\ \end{cases}
\end{equation} 
We also define 
\begin{equation} 
\overline{\Pi}_x(\mathcal{A}_x):=\frac{1}{4}\sum_\sigma \tilde{U}_x^\ast(\sigma)\mathcal{A}_x\tilde{U}_x(\sigma)
\end{equation}
for an operator $\mathcal{A}_x$ on the single site $x$. Then, one has 
$$
\overline{\Pi}_x(\mathcal{A}_x)=0 \quad \mbox{for \ } \mathcal{A}_x=\xi_x \ \mbox{or \ } \xi_x^\dagger, 
$$
and 
$$
\overline{\Pi}_x(\mathcal{A}_x)=\frac{1}{2} \quad \mbox{for \ } \mathcal{A}_x=q_x \ \mbox{or \ } \overline{q}_x.
$$
For a general site set $\tilde{X}\subset\tilde{\Lambda}$ and an operator $\mathcal{A}$ with even parity, we define 
$$
\overline{\Pi}_{\tilde{X}}(\mathcal{A}):=\left(\frac{1}{4}\right)^{|\tilde{X}|}\sum_{\sigma_{x_1},\sigma_{x_2},\ldots,
\sigma_{|\tilde{X}|}}
\tilde{U}_{\tilde{X}}^\ast(\sigma_{x_1},\sigma_{x_2},\ldots,\sigma_{x_{|\tilde{X}|}})
\mathcal{A}\tilde{U}_{\tilde{X}}(\sigma_{x_1},\sigma_{x_2},\ldots,\sigma_{x_{|\tilde{X}|}}), 
$$
with 
$$
\tilde{U}_{\tilde{X}}(\sigma_{x_1},\sigma_{x_2},\ldots,\sigma_{x_{|\tilde{X}|}})
:=\tilde{U}_{x_1}(\sigma_{x_1})\tilde{U}_{x_2}(\sigma_{x_2})\cdots\tilde{U}_{x_{|\tilde{X}|}}(\sigma_{x_{|\tilde{X}|}}),
$$
where $\{x_1,x_2,\ldots,x_{|\tilde{X}|}\}=\tilde{X}$. 

\begin{lemma}
Let $\mathcal{A}$ be an operator with even parity. 
Then, ${\rm supp}\; \overline{\Pi}_{\tilde{X}}(\mathcal{A})\subset \tilde{\Lambda}\backslash \tilde{X}$. 
\end{lemma}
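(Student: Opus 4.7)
The plan is to reduce to the single-site case and then iterate. I would first verify that each single-site projector $\overline{\Pi}_x$ preserves the global even-parity subspace. Each operator $\tilde{U}_x(\sigma)$ has a definite $\mathbb{Z}_2$-parity (for $\sigma \in \{(0,0),(1,0)\}$ it is bosonic; for $\sigma \in \{(0,1),(1,1)\}$ it is fermionic), and conjugation by an operator of definite parity preserves the parity of its argument. Hence if $\mathcal{A}$ is globally even, so is $\overline{\Pi}_x(\mathcal{A})$, and the iteration will remain in the regime covered by the hypothesis.

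The core step is to show: if $\mathcal{A}$ has even parity, then $\overline{\Pi}_x(\mathcal{A})$ has no $x$-content. I would decompose $\mathcal{A}$ according to its parity at the site $x$, writing $\mathcal{A} = \mathcal{A}_{\rm even}+\mathcal{A}_{\rm odd}$, and then put the $x$-operators in normal form:
\begin{equation*}
\mathcal{A}_{\rm even}=B_1+q_x B_2,\qquad \mathcal{A}_{\rm odd}=\xi_x C_1+\xi_x^\dagger C_2,
\end{equation*}
where $B_1,B_2,C_1,C_2$ are supported on $\tilde{\Lambda}\backslash\{x\}$. Since $\mathcal{A}$ is globally even, $B_1,B_2$ are globally even and $C_1,C_2$ are globally odd. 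The globally even $B_i$ commute with every $\tilde{U}_x(\sigma)$ (the fermionic $\tilde{U}_x(\sigma)$ commute with bosonic operators on disjoint sites), which together with the calculations in the text yields $\overline{\Pi}_x(B_1)=B_1$ and $\overline{\Pi}_x(q_x B_2)=\tfrac{1}{2}B_2$. For the odd pieces, the globally odd $C_j$ anticommute with the fermionic $\tilde{U}_x(\sigma)$ and commute with the bosonic ones; repeating the four-term calculation that gave $\overline{\Pi}_x(\xi_x)=0$ in the text shows that the minus signs from anticommutation combine with those from conjugation so that the four summands still cancel in pairs, yielding $\overline{\Pi}_x(\xi_x C_1)=\overline{\Pi}_x(\xi_x^\dagger C_2)=0$. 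Consequently $\overline{\Pi}_x(\mathcal{A})=B_1+\tfrac{1}{2}B_2$ has support in $\tilde{\Lambda}\backslash\{x\}$.

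Finally I would observe that $\overline{\Pi}_{\tilde{X}}$ coincides with the composition $\overline{\Pi}_{x_1}\circ\overline{\Pi}_{x_2}\circ\cdots\circ\overline{\Pi}_{x_{|\tilde{X}|}}$. Indeed, $\tilde{U}_{\tilde{X}}$ factorizes as a product over sites, and although two factors $\tilde{U}_{x_i}(\sigma_{x_i})$ and $\tilde{U}_{x_j}(\sigma_{x_j})$ may anticommute (when both are fermionic), the same sign appears for the adjoints $\tilde{U}_{x_j}^\ast(\sigma_{x_j})\tilde{U}_{x_i}^\ast(\sigma_{x_i})$, so these signs cancel in the conjugation; the iterated average is therefore independent of the order of the sites. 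Combining this with the parity-preservation from the first step, iteratively applying the support-shrinking identity from the second step strips away each $x\in\tilde{X}$ in turn, proving ${\rm supp}\,\overline{\Pi}_{\tilde{X}}(\mathcal{A})\subset\tilde{\Lambda}\backslash\tilde{X}$.

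The main bookkeeping hurdle is the second step: tracking the anticommutation signs between the fermionic $\tilde{U}_x(\sigma)$ and the globally odd tails $C_1,C_2$. This is exactly where the even-parity hypothesis on $\mathcal{A}$ is used — it forces the $C_j$ accompanying $\xi_x$ and $\xi_x^\dagger$ to be odd, so the same four-term cancellation that annihilates $\xi_x$ and $\xi_x^\dagger$ individually also annihilates the composite terms. Without this, the cancellation would fail and $\overline{\Pi}_x(\mathcal{A})$ could retain a nontrivial $x$-component.
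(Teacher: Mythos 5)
Your proof is correct and takes essentially the same approach as the paper's: it reduces to a case analysis on the local $x$-content, kills the $\xi_x,\xi_x^\dagger$ terms by the $(1{-}2q_x)$-cancellation, and uses the global even-parity hypothesis to control the fermionic sign that appears when $(\xi_x{+}\xi_x^\dagger)$ is moved past the complementary factor. The only cosmetic difference is that the paper reduces to monomials and factors $\mathcal{A}=\mathcal{A}'\mathcal{A}_x\mathcal{A}''$ (comparing the parities of $\mathcal{A}'$ and $\mathcal{A}''$), whereas you use the general linear single-site decomposition $\mathcal{A}=B_1+q_xB_2+\xi_xC_1+\xi_x^\dagger C_2$ and assign a global parity to each coefficient; your version also spells out the parity-preservation and composition/order-independence steps that the paper leaves implicit.
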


\begin{proof}{Proof}
It is enough to consider the case that the operator $\mathcal{A}$ is a monomial. 
We assume $x\in \tilde{X}$. In this case, 
the operator $\mathcal{A}$ can be written 
$$
\mathcal{A}=\mathcal{A}'\mathcal{A}_x\mathcal{A}'',
$$
where ${\rm supp}\;\mathcal{A}'$, ${\rm supp}\;\mathcal{A}_x=\{x\}$ and ${\rm supp}\;\mathcal{A}''$ are 
disjoint with each other.  

When $\mathcal{A}_x=\xi_x$ or $\xi_x^\dagger$, the operation for $(1-2q_x)$ changes $\mathcal{A}_x$ to 
$-\mathcal{A}_x$. This operation causes a cancellation, and therefore the contribution is vanishing.  

When $\mathcal{A}_x=q_x$ or $\overline{q}_x$, the parity of $\mathcal{A}'$ is equal to that of $\mathcal{A}''$. Then, one has 
\begin{eqnarray*}
\mathcal{A}+(\xi_x+\xi_x^\dagger)\mathcal{A}(\xi_x+\xi_x^\dagger)
&=&\mathcal{A}+(\xi_x+\xi_x^\dagger)\mathcal{A}'\mathcal{A}_x\mathcal{A}''(\xi_x+\xi_x^\dagger)\\
&=&\mathcal{A}'\mathcal{A}_x\mathcal{A}''
+\mathcal{A}'(\xi_x+\xi_x^\dagger)\mathcal{A}_x(\xi_x+\xi_x^\dagger)\mathcal{A}''\\
&=&\mathcal{A}'[\mathcal{A}_x+(\xi_x+\xi_x^\dagger)\mathcal{A}_x(\xi_x+\xi_x^\dagger)]\mathcal{A}''\\
&=&\mathcal{A}'\mathcal{A}''.
\end{eqnarray*}
\end{proof}

\noindent
{\it Remark:} This property does not hold for an odd parity operator $\mathcal{A}$. For example, 
consider $\mathcal{A}=q_x\xi_y$ for $x\ne y$, and $\overline{\Pi}_{\{x\}}$ on the single site $x$. Actually, one has 
\begin{eqnarray*}
q_x\xi_y+(\xi_x+\xi_x^\dagger)q_x\xi_y(\xi_x+\xi_x^\dagger)&=&q_x\xi_y-(\xi_x+\xi_x^\dagger)q_x(\xi_x+\xi_x^\dagger)\xi_y\\
&=&q_x\xi_y-(1-q_x)\xi_y=(2q_x-1)\xi_y.
\end{eqnarray*}
\medskip

We define 
\begin{equation}
\tilde{\Pi}_{\tilde{X}}:=\overline{\Pi}_{\tilde{\Lambda}\backslash \tilde{X}}.
\end{equation}
Then, one has ${\rm supp}\;\tilde{\Pi}_{\tilde{X}}(\mathcal{A})\subset \tilde{X}$. Roughly speaking, we can cut the tail of 
the observables. 

The following lemma is a fermion analogue of Lemma~3.2 in \cite{BMNS}:

\begin{lemma} {\bf \cite{NSY}}
\label{fermionPilem}
Let $\mathcal{A}$ be a fermion operator with even parity such that  
$$
\Vert [\mathcal{A},\mathcal{B}]\Vert \le \epsilon \Vert \mathcal{B}\Vert 
$$
with a small $\epsilon\ge 0$, for all the operators $\mathcal{B}$ 
satisfying ${\rm supp}\;\mathcal{B}\subset \tilde{\Lambda}\backslash \tilde{X}$. Then, 
\begin{equation}
\Vert \mathcal{A}-\tilde{\Pi}_{\tilde{X}}(\mathcal{A})\Vert \le \epsilon.
\end{equation}
\end{lemma}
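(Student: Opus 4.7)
The plan is to unfold the definition and apply the hypothesis term-by-term to the averaging over conjugations. Setting $Y:=\tilde{\Lambda}\backslash \tilde{X}$, so that $\tilde{\Pi}_{\tilde{X}}(\mathcal{A})=\overline{\Pi}_Y(\mathcal{A})=4^{-|Y|}\sum_\sigma \tilde{U}_Y^\ast(\sigma)\mathcal{A}\tilde{U}_Y(\sigma)$, the identity is written in averaged form using $4^{-|Y|}\sum_\sigma \tilde{U}_Y^\ast(\sigma)\tilde{U}_Y(\sigma)=1$, which telescopes to
\[
\mathcal{A}-\tilde{\Pi}_{\tilde{X}}(\mathcal{A})
=\frac{1}{4^{|Y|}}\sum_\sigma\tilde{U}_Y^\ast(\sigma)\bigl[\tilde{U}_Y(\sigma),\mathcal{A}\bigr].
\]

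Next I would check that every $\tilde{U}_Y(\sigma)=\tilde{U}_{x_1}(\sigma_{x_1})\cdots\tilde{U}_{x_{|Y|}}(\sigma_{x_{|Y|}})$ is a unitary operator supported in $Y$, and therefore admissible as the test operator $\mathcal{B}$ in the hypothesis. For $\sigma=(1,0)$ and $(0,1)$ the single-site factors are self-adjoint involutions by the identities $(1-2q_x)^2=1$ and $(\xi_x+\xi_x^\dagger)^2=1$ recalled at the beginning of Appendix~\ref{Sec:LocaApproObs}; for $\sigma=(1,1)$ one verifies unitarity from the anticommutation of $1-2q_x$ with $\xi_x+\xi_x^\dagger$ together with these two involution identities. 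Products of unitaries on distinct sites remain unitary, so $\Vert\tilde{U}_Y(\sigma)\Vert=1$.

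I would then apply the assumed commutator bound with $\mathcal{B}=\tilde{U}_Y(\sigma)$ to each summand, obtaining
\[
\bigl\Vert\tilde{U}_Y^\ast(\sigma)[\tilde{U}_Y(\sigma),\mathcal{A}]\bigr\Vert
\le \bigl\Vert[\mathcal{A},\tilde{U}_Y(\sigma)]\bigr\Vert
\le \epsilon\,\Vert\tilde{U}_Y(\sigma)\Vert=\epsilon.
\]
Triangle inequality over the $4^{|Y|}$ terms in the average gives the desired bound $\Vert\mathcal{A}-\tilde{\Pi}_{\tilde{X}}(\mathcal{A})\Vert\le\epsilon$.

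I do not anticipate a serious obstacle: the argument is the fermionic adaptation of the standard averaging trick of \cite{BMNS}, once one uses the Majorana-type twists $\tilde{U}_x(\sigma)$ constructed in Appendix~\ref{Sec:LocaApproObs}. The only conceptual point worth flagging is the role of the even-parity hypothesis on $\mathcal{A}$: the bound itself does not use it, but the preceding lemma, which guarantees $\mathrm{supp}\,\tilde{\Pi}_{\tilde{X}}(\mathcal{A})\subset\tilde{X}$, does — without even parity the average over the odd-parity twists $(\xi_x+\xi_x^\dagger)$ would fail to kill the tail of $\mathcal{A}$ outside $\tilde{X}$, which is exactly the remark made after the previous lemma.
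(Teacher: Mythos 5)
Your proof is correct and follows essentially the same route as the paper: rewrite $\mathcal{A}-\tilde{\Pi}_{\tilde{X}}(\mathcal{A})$ as the $4^{-|Y|}$-weighted average of $\tilde{U}_Y^\ast(\sigma)[\tilde{U}_Y(\sigma),\mathcal{A}]$ over $\sigma$, apply the hypothesis with $\mathcal{B}=\tilde{U}_Y(\sigma)$ (a unitary supported in $Y=\tilde{\Lambda}\backslash\tilde{X}$, hence of norm one), and conclude by the triangle inequality. The only additions you make — explicitly verifying unitarity of the single-site twists and flagging that the even-parity hypothesis is consumed by the support lemma rather than by this estimate — are correct and consistent with what the paper leaves implicit.
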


\begin{proof}{Proof}
We write $\tilde{Y}=\tilde{\Lambda}\backslash \tilde{X}$ for short. {From} the definition of $\tilde{\Pi}_{\tilde{X}}$, one has 
\begin{eqnarray*}
\tilde{\Pi}_{\tilde{X}}(\mathcal{A})-\mathcal{A}&=&\left(\frac{1}{4}\right)^{|\tilde{Y}|}
\sum_{\sigma_{x_1},\sigma_{x_2},\ldots,\sigma_{|\tilde{Y}|}}\left[
\tilde{U}_{\tilde{Y}}^\ast(\sigma_{x_1},\sigma_{x_2},\ldots,\sigma_{x_{|\tilde{Y}|}})
\mathcal{A}\tilde{U}_{\tilde{Y}}(\sigma_{x_1},\sigma_{x_2},\ldots,\sigma_{x_{|\tilde{Y}|}})-\mathcal{A}\right]\\
&=&\left(\frac{1}{4}\right)^{|\tilde{Y}|}
\sum_{\sigma_{x_1},\sigma_{x_2},\ldots,\sigma_{|\tilde{Y}|}}
\tilde{U}_Y^\ast(\sigma_{x_1},\sigma_{x_2},\ldots,\sigma_{x_{|\tilde{Y}|}})
\left[\mathcal{A},\tilde{U}_{\tilde{Y}}(\sigma_{x_1},\sigma_{x_2},\ldots,\sigma_{x_{|\tilde{Y}|}})\right].
\end{eqnarray*}
Therefore, 
\begin{eqnarray*}
\Vert \tilde{\Pi}_{\tilde{X}}(\mathcal{A})-\mathcal{A}\Vert&\le&\left(\frac{1}{4}\right)^{|\tilde{Y}|}
\sum_{\sigma_{x_1},\sigma_{x_2},\ldots,\sigma_{|Y|}}
\Vert\tilde{U}_{\tilde{Y}}^\ast(\sigma_{x_1},\sigma_{x_2},\ldots,\sigma_{x_{|\tilde{Y}|}})\Vert\\
&\times&\left\Vert\left[\mathcal{A},\tilde{U}_{\tilde{Y}}(\sigma_{x_1},\sigma_{x_2},\ldots,\sigma_{x_{|\tilde{Y}|}})\right]\right
\Vert\le\epsilon.
\end{eqnarray*}
\end{proof}

\Section{Local approximation of the time evolution $\tilde{\tau}_{s,t}$ by the Hamiltonian $\tilde{\mathcal{H}}(s)$}
\label{Sec:UVZUDelta}

Our aim of this appendix is to obtain a local approximation bound (\ref{taudifPi}) below for 
the time evolution $\tilde{\tau}_{s,t}$ by the Hamiltonian $\tilde{\mathcal{H}}(s)$. 
The Hamiltonian $\tilde{\mathcal{H}}(s)$ is written 
\begin{equation}
\tilde{\mathcal{H}}(s)=\sum_Z\tilde{U}^\ast(s)\tilde{\mathcal{H}}_{0,Z}\tilde{U}(s)+s\sum_Z\tilde{U}^\ast(s)\tilde{V}_Z\tilde{U}(s).
\end{equation}
In order to treat the time evolution $\tilde{\tau}_{s,t}$ by the Hamiltonian $\tilde{\mathcal{H}}(s)$,  
we need to use local approximation of the two operators, $\tilde{U}^\ast(s)\tilde{\mathcal{H}}_{0,Z}\tilde{U}(s)$ 
and $\tilde{U}^\ast(s)\tilde{V}_Z\tilde{U}(s)$, in the manner of the preceding section. 

Consider first the transformed interactions $\tilde{U}^\ast(s)\tilde{V}_Z\tilde{U}(s)$. 
Clearly, the operation of $\tilde{U}(s)$ enlarges the support $Z$ of $\tilde{V}_Z$ to a large region. 
Let $n\in\{0,1,2,\ldots\}$, and define  
$$
Z_n:=\{z\in{\Lambda}\;|\;{\rm dist}(z,Z)\le n\} 
$$
for $Z\subset{\Lambda}$. Following \cite{BMNS}, we can write 
\begin{equation}
\tilde{U}^\ast(s)\tilde{V}_Z\tilde{U}(s)=\sum_{n=0}^\infty \Delta_{\tilde{U}}^n(\tilde{V}_Z,s),
\end{equation}
where 
\begin{equation}
\label{Deltan}
\Delta_{\tilde{U}}^n(\tilde{V}_Z,s):=\Pi_{Z_n}(\tilde{U}^\ast(s)\tilde{V}_Z\tilde{U}(s))
-\Pi_{Z_{n-1}}(\tilde{U}^\ast(s)\tilde{V}_Z\tilde{U}(s))\quad \mbox{for \ } n\ge 1
\end{equation}
and 
\begin{equation}
\label{Delta0}
\Delta_{\tilde{U}}^0(\tilde{V}_Z,s):=\Pi_Z(\tilde{U}^\ast(s)\tilde{V}_Z\tilde{U}(s)),
\end{equation}
where 
$$
\Pi_Z:=\tilde{\Pi}_{\tilde{Z}}\quad \mbox{with \ } \tilde{Z}=Z\times\{c,d\}. 
$$ 
Therefore, the corresponding interaction Hamiltonian is written 
$$
\sum_{Z\subset{\Lambda}}\tilde{U}^\ast(s)\tilde{V}_Z\tilde{U}(s)
=\sum_{Z\subset{\Lambda}}\sum_{n=0}^\infty \Delta_{\tilde{U}}^n(\tilde{V}_Z,s).
$$
In order to derive the Lieb-Robinson bound for the time evolution $\tilde{\tau}_{s,t'}$, we need to estimate 
\begin{equation}
\label{Deltasum}
\sum_{Z\subset{\Lambda}}\;\sum_{n=0:\;x,y\in Z_n}^\infty \Vert\Delta_{\tilde{U}}^n(\tilde{V}_Z,s)\Vert
\end{equation}
for the summand $\Delta_{\tilde{U}}^n(\tilde{V}_Z,s)$ in the above right-hand side, where $x$ and $y$ are 
any given two sites, $x,y\in{\Lambda}$. 

Since the Hamiltonian $\tilde{\mathcal{H}}_s$ of (\ref{tildecalHs}) satisfies 
the exponential decay conditions, (\ref{tildecalH0Zexpdecay}) and (\ref{boundedtildeVX}), 
the application of Theorem~4.5 in \cite{BMNS} yields   
\begin{equation}
\Vert[\tilde{U}^\ast(s)\mathcal{A}\tilde{U}(s),\mathcal{B}] \Vert
\le C_{\tilde{U}}\Vert \mathcal{A}\Vert\Vert\mathcal{B}\Vert 
\sum_{x\in{\rm supp}\;\mathcal{A},\;y\in{\rm supp}\;\mathcal{B}}F_{\mu_0,r_0}({\rm dist}(x,y)),
\end{equation}
where the positive constant $C_{\tilde{U}}$ is determined by the unitary operator $\tilde{U}(s)$ 
or the Hamiltonian $\tilde{\mathcal{H}}_s$, 
and the function $F_{\mu_0,r_0}(r)$ is given by 
\begin{equation}
\label{Fmu0r0}
F_{\mu_0,r_0}(r):=\frac{\tilde{u}_{\mu_0}(r/r_0)}{[1+(r/r_0)]^{d+1}}.
\end{equation}
Here, $\mu_0$ and $r_0$ are positive constants which are also determined by 
the Hamiltonian $\tilde{\mathcal{H}}_s$, and the function $\tilde{u}_\mu(r)$ is given by 
\begin{equation}
\label{tildeumu}
\tilde{u}_\mu(r):=\begin{cases}u_\mu(e^2), & \mbox{for \ } 0\le r\le e^2; \\ 
                 u_\mu(r), & \mbox{otherwise}.\\ \end{cases}
\end{equation}
with 
\begin{equation}
\label{umu}
u_\mu(r):=\exp[-\mu r/(\log r)^2].
\end{equation}
Therefore, by applying Lemma~\ref{fermionPilem}, one has 
\begin{equation}
\left\Vert\tilde{U}^\ast(s)\tilde{V}_Z\tilde{U}(s)-\Pi_{Z_n}(\tilde{U}^\ast(s)\tilde{V}_Z\tilde{U}(s))\right\Vert
\le C_{\tilde{U}}\Vert\tilde{V}_Z\Vert \sum_{x\in Z}\sum_{y\in {\Lambda}\backslash Z_n} 
F_{\mu_0,r_0}({\rm dist}(x,y))
\end{equation}
Note that 
\begin{equation}
\sum_{y\in {\Lambda}\backslash Z_n}F_{\mu_0,r_0}({\rm dist}(x,y))
\le \mathcal{C}_1F_{\mu_1,r_0}(n),
\end{equation}
where the positive constant $\mathcal{C}_1$ and the positive constant $\mu_1\in (0,\mu_0)$ are 
independent of $n$ and ${\Lambda}$. By using this inequality, one has
\begin{equation}
\label{UVUPibound}
\left\Vert\tilde{U}^\ast(s)\tilde{V}_Z\tilde{U}(s)-\Pi_{Z_n}(\tilde{U}^\ast(s)\tilde{V}_Z\tilde{U}(s))\right\Vert
\le C_{\tilde{U}}\mathcal{C}_1|Z|\Vert\tilde{V}_Z\Vert F_{\mu_1,r_0}(n).
\end{equation}
This enables us to estimate $\Delta_{\tilde{U}}^n(\tilde{V}_Z,s)$ of (\ref{Deltan}). Actually, one obtains 
\begin{equation}
\label{Deltanbound}
\Vert\Delta_{\tilde{U}}^n(\tilde{V}_Z,s)\Vert \le 2C_{\tilde{U}}\mathcal{C}_1|Z|\Vert\tilde{V}_Z\Vert F_{\mu_1,r_0}(n-1)
\end{equation}
for $n\ge 1$. As to $\Delta_{\tilde{U}}^0(\tilde{V}_Z,s)$ of (\ref{Delta0}), one has 
\begin{equation}
\label{Delta0bound}
\Vert\Delta_{\tilde{U}}^0(\tilde{V}_Z,s)\Vert \le\Vert\tilde{V}_Z\Vert. 
\end{equation}
By using these inequalities, the quantity (\ref{Deltasum}) is estimated as  
\begin{equation}
\label{Deltasum2}
\sum_{Z\subset\tilde{\Lambda}}\;\sum_{n=0:\;x,y\in Z_n}^\infty \Vert\Delta_{\tilde{U}}^n(\tilde{V}_Z,s)\Vert\le 
\sum_{Z\ni x,y} \Vert\tilde{V}_Z\Vert+2C_{\tilde{U}}\mathcal{C}_1\sum_Z \;\sum_{n=1:\;x,y\in Z_n}^\infty 
|Z|\Vert\tilde{V}_Z\Vert F_{\mu_1,r_0}(n-1). 
\end{equation}
The first sum in the right-hand side satisfies the desired bound, 
\begin{equation}
\label{Deltasum21}
\sum_{Z\ni x,y} \Vert\tilde{V}_Z\Vert\le \tilde{C}_{h,V}e^{-\tilde{m}_{h,V}{\rm dist}(x,y)},
\end{equation}
{from} (\ref{boundedtildeVX}). 

Therefore, it is enough to estimate the second double sum in the right-hand side of (\ref{Deltasum2}). 
Since $x\in Z_n$, there exists $u\in Z$ such that ${\rm dist}(x,u)={\rm dist}(x,Z)\le n$.
Similarly, since $y\in Z_n$, there exists $v\in Z$ such that ${\rm dist}(y,v)={\rm dist}(y,Z)\le n$. 
Therefore, we have 
$$
{\rm dist}(x,y)\le {\rm dist}(x,u)+{\rm dist}(u,v)+{\rm dist}(v,y)\le 2n+{\rm dist}(u,v).
$$ 
{From} these observations, the second double sum in the right-hand side of (\ref{Deltasum2}) is estimated by 
\begin{eqnarray}
\label{Deltasum22}
& &\sum_Z \;\sum_{n=1:\;x,y\in Z_n}^\infty |Z|\Vert\tilde{V}_Z\Vert F_{\mu_1,r_0}(n-1)\nonumber\\
&\le& \sum_{u,v}\sum_{Z\ni u,v}\sum_{n=1}^\infty |Z|\Vert\tilde{V}_Z\Vert F_{\mu_1,r_0}(n-1)
{\rm Ind}[{\rm dist}(x,u)\le n]\cdot{\rm Ind}[{\rm dist}(y,v)\le n]\nonumber\\
&\times&{\rm Ind}[{\rm dist}(x,y)\le 2n+{\rm dist}(u,v)]\nonumber\\
&\le& \tilde{C}_{h,V}\sum_{u,v}\sum_{n=1}^\infty e^{-\tilde{m}_{h,V}{\rm dist}(u,v)} F_{\mu_1,r_0}(n-1)
{\rm Ind}[{\rm dist}(x,u)\le n]\cdot{\rm Ind}[{\rm dist}(y,v)\le n]\nonumber\\
&\times&{\rm Ind}[{\rm dist}(x,y)\le 2n+{\rm dist}(u,v)],
\end{eqnarray}
where we have used (\ref{boundedtildeVX}). 

In order to estimate the right-hand side of (\ref{Deltasum22}), we note the following: 
$$ 
F_{\mu_1,r_0}(n-1)=\tilde{u}_\delta((n-1)/r_0)F_{\mu_1-\delta,r_0}(n-1)
$$
with $\delta>0$ satisfying $\mu_1-\delta>0$. From ${\rm dist}(x,y)\le 2n+{\rm dist}(u,v)$ and 
$$
\frac{a}{\log^2 a}+\frac{b}{\log^2 b}\ge \frac{a+b}{\log^2(a+b)}
$$
for $a,b>0$, one has 
\begin{equation}
e^{-\tilde{m}_{h,V}{\rm dist}(u,v)}F_{\mu_1-\delta,r_0}(n-1)\le \mathcal{C}_2
F_{\mu_2,2r_0}({\rm dist}(x,y)),
\end{equation}
where $\mu_2$ is a positive constant satisfying $\mu_2<\mu_1-\delta$, and $\mathcal{C}_2$ is a positive constant. 
Substituting these into the right-hand side of (\ref{Deltasum22}), we obtain 
\begin{eqnarray}
& &\sum_{u,v}\sum_{n=1}^\infty e^{-\tilde{m}_{h,V}{\rm dist}(u,v)} F_{\mu_1,r_0}(n-1)
{\rm Ind}[{\rm dist}(x,u)\le n]\cdot{\rm Ind}[{\rm dist}(y,v)\le n]\nonumber\\
&\times&{\rm Ind}[{\rm dist}(x,y)\le 2n+{\rm dist}(u,v)]\nonumber\\
&\le& \mathcal{C}_2\sum_{u,v}\sum_{n=1}^\infty \tilde{u}_\delta((n-1)/r_0)F_{\mu_2,2r_0}({\rm dist}(x,y))
{\rm Ind}[{\rm dist}(x,u)\le n]\cdot{\rm Ind}[{\rm dist}(y,v)\le n]\nonumber\\
&\le&\mathcal{C}_2(\kappa_d)^2\sum_{n=1}^\infty n^{2d}\tilde{u}_\delta((n-1)/r_0)\cdot F_{\mu_2,2r_0}({\rm dist}(x,y)), 
\end{eqnarray}
where $\kappa_d$ is the positive constant which is related to the volume of the $d$-dimensional ball with radius $r$, 
which is given by $\kappa_d r^d$. 
Combining this, (\ref{Deltasum2}), (\ref{Deltasum21}) and (\ref{Deltasum22}), we obtain 
\begin{equation}
\sum_{Z\subset{\Lambda}}\;\sum_{n=0:\;x,y\in Z_n}^\infty \Vert\Delta_{\tilde{U}}^n(\tilde{V}_Z,s)\Vert
\le C_{\tilde{U},\tilde{V}}F_{\mu_2,2r_0}({\rm dist}(x,y))
\end{equation}
for the interaction $\tilde{V}(s)$ of (\ref{tildeVs}). 
Here, the positive constant $C_{\tilde{U},\tilde{V}}$ can be taken to be independent of the parameter $s$. 

Clearly, we can deal with $\tilde{\mathcal{H}}_0(s)$ in the same way. 
Therefore, we can apply Theorem~4.6 in \cite{BMNS} for the time evolution by the Hamiltonian $\tilde{\mathcal{H}}(s)$. 
As a result, we obtain the Lieb-Robinson bound, 
\begin{equation}
\left\Vert[\tilde{\tau}_{s,t}(\mathcal{A}),\mathcal{B}] \right\Vert 
\le C_{\tilde{U},\tilde{\mathcal{H}}}\Vert\mathcal{A}\Vert \Vert \mathcal{B}\Vert\times \exp[v_{\rm LR}|t|]
\sum_{x\in {\rm supp}\;\mathcal{A},\; y\in{\rm supp}\;\mathcal{B}}F_{\mu_2,2r_0}({\rm dist}(x,y)),
\end{equation}
where $C_{\tilde{U},\tilde{\mathcal{H}}}$ is a positive constant, and $v_{\rm LR}$ is the Lieb-Robinson velocity. 
Clearly, these constants can be chosen to be independent of the parameter $s$.   
Therefore, from Lemma~\ref{fermionPilem}, we obtain the desired result,  
\begin{equation}
\label{taudifPi}
\Vert\tilde{\tau}_{s,t}(\mathcal{A})-\Pi_{X_n}(\tilde{\tau}_{s,t}(\mathcal{A}))\Vert
\le C_{\tilde{U},\tilde{\mathcal{H}}}\Vert\mathcal{A}\Vert \times \exp[v_{\rm LR}|t|]
\sum_{x\in {\rm supp}\;\mathcal{A}}\;\sum_{y\in{\Lambda}\backslash X_n}F_{\mu_2,2r_0}({\rm dist}(x,y)).
\end{equation}


By using the inequality (\ref{taudifPi}), one can obtain local approximation of $\mathcal{R}_s$ of (\ref{calRs}), 
which is given by 
$$
\mathcal{R}_s(\cdots)=\int_{-\infty}^{+\infty}dt\;w_\gamma(t)\tilde{\tau}_{s,t}(\cdots).
$$
Note that 
\begin{eqnarray*}
\Vert\mathcal{R}_s(\mathcal{A})-\Pi_{X_n}(\mathcal{R}_s(\mathcal{A}))\Vert
&\le&\int_{-\infty}^{+\infty}dt\;w_\gamma(t)\Vert\tilde{\tau}_{s,t}(\mathcal{A})-\Pi_{X_n}(\tilde{\tau}_{s,t}(\mathcal{A}))\Vert\\
&\le&\int_{-T}^{+T}dt\;w_\gamma(t)\Vert\tilde{\tau}_{s,t}(\mathcal{A})-\Pi_{X_n}(\tilde{\tau}_{s,t}(\mathcal{A}))\Vert\\
&+&\int_{|t|\ge T}dt\;w_\gamma(t)\Vert\tilde{\tau}_{s,t}(\mathcal{A})-\Pi_{X_n}(\tilde{\tau}_{s,t}(\mathcal{A}))\Vert
\end{eqnarray*}
for any $T>0$. The second integral is estimated by 
\begin{equation}
\int_{|t|\ge T}dt\;w_\gamma(t)\Vert\tilde{\tau}_{s,t}(\mathcal{A})-\Pi_{X_n}(\tilde{\tau}_{s,t}(\mathcal{A}))\Vert
\le 4 \Vert\mathcal{A}\Vert\int_T^\infty dt\;w_\gamma(t).
\end{equation}
For the first integral, one has  
\begin{eqnarray*}
& &\int_{-T}^{+T}dt\;w_\gamma(t)\Vert\tilde{\tau}_{s,t}(\mathcal{A})-\Pi_{X_n}(\tilde{\tau}_{s,t}(\mathcal{A}))\Vert\\
&\le& 2C_{\tilde{U},\tilde{\mathcal{H}}}\Vert\mathcal{A}\Vert \times 
\sum_{x\in {\rm supp}\;\mathcal{A}}\;\sum_{y\in{\Lambda}\backslash X_n}F_{\mu_2,2r_0}({\rm dist}(x,y))
\int_{0}^{+T}dt\;w_\gamma(t)\exp[v_{\rm LR}t]\\
&\le& \frac{2c_\gamma C_{\tilde{U},\tilde{\mathcal{H}}}}{v_{\rm LR}}\Vert\mathcal{A}\Vert \times 
\sum_{x\in {\rm supp}\;\mathcal{A}}\;\sum_{y\in{\Lambda}\backslash X_n}F_{\mu_2,2r_0}({\rm dist}(x,y))
\exp[v_{\rm LR}T]
\end{eqnarray*}
by using (\ref{taudifPi}) and the bound $w_\gamma(t)\le c_\gamma$ with a positive constant $c_\gamma$ for all $t$ \cite{BMNS}. 
In consequence, we obtain 
\begin{eqnarray}
\label{calRsPibound}
& &\Vert\mathcal{R}_s(\mathcal{A})-\Pi_{X_n}(\mathcal{R}_s(\mathcal{A}))\Vert\nonumber\\
&\le& 4 \Vert\mathcal{A}\Vert\int_T^\infty dt\;w_\gamma(t)
+\frac{2c_\gamma C_{\tilde{U},\tilde{\mathcal{H}}}}{v_{\rm LR}}\Vert\mathcal{A}\Vert \times 
\sum_{x\in {\rm supp}\;\mathcal{A}}\;\sum_{y\in{\Lambda}\backslash X_n}F_{\mu_2,2r_0}({\rm dist}(x,y))
\exp[v_{\rm LR}T]\nonumber\\
\end{eqnarray}
for any positive $T$. 

\Section{Local boundedness of the interaction $\tilde{\mathcal{W}}^{(1)}(s)$} 

In order to prove the statement of Theorem~\ref{mainTheorem}, 
we have to prove the inequalities (\ref{tildecalWZnsboundedness}). 
In this appendix, we will treat $\tilde{\mathcal{W}}_Z^{(1)}(s)$, which is given by 
$$
\tilde{\mathcal{W}}_Z^{(1)}(s)=s\mathcal{R}_s(:\tilde{U}^\ast(s)\tilde{V}_Z\tilde{U}(s):).
$$ 

To begin with, we define 
$$
\mathcal{Z}_n:=\{z\in{\Lambda}\;|\; {\rm dist}(z,Z_n)\le n\},
$$
where $Z_n=\{z\in\Lambda\;|\; {\rm dist}(z,Z)\le n\}$. Note that
\begin{eqnarray}
\label{RUVZdifPi}
& &\left\Vert\mathcal{R}_s(\tilde{U}^\ast(s)\tilde{V}_Z\tilde{U}(s))
-\Pi_{\mathcal{Z}_n}(\mathcal{R}_s(\tilde{U}^\ast(s)\tilde{V}_Z\tilde{U}(s)))\right\Vert\nonumber\\
&\le&\left\Vert \mathcal{R}_s(\tilde{U}^\ast(s)\tilde{V}_Z\tilde{U}(s))
-\mathcal{R}_s(\Pi_{Z_n}(\tilde{U}^\ast(s)\tilde{V}_Z\tilde{U}(s)))\right\Vert\nonumber\\
&+&\left\Vert\mathcal{R}_s(\Pi_{Z_n}(\tilde{U}^\ast(s)\tilde{V}_Z\tilde{U}(s)))
-\Pi_{\mathcal{Z}_n}(\mathcal{R}_s(\Pi_{Z_n}(\tilde{U}^\ast(s)\tilde{V}_Z\tilde{U}(s))))\right\Vert\nonumber\\
&+&\left\Vert\Pi_{\mathcal{Z}_n}(\mathcal{R}_s(\Pi_{Z_n}(\tilde{U}^\ast(s)\tilde{V}_Z\tilde{U}(s))))
-\Pi_{\mathcal{Z}_n}(\mathcal{R}_s(\tilde{U}^\ast(s)\tilde{V}_Z\tilde{U}(s)))\right\Vert. 
\end{eqnarray}
{From} the inequality (\ref{UVUPibound}), the first term in the right-hand side is estimated as 
\begin{eqnarray}
\label{RUVZdifPi1}
& &\left\Vert \mathcal{R}_s(\tilde{U}^\ast(s)\tilde{V}_Z\tilde{U}(s))
-\mathcal{R}_s(\Pi_{Z_n}(\tilde{U}^\ast(s)\tilde{V}_Z\tilde{U}(s)))\right\Vert\nonumber\\
&\le& 
\left\Vert \tilde{U}^\ast(s)\tilde{V}_Z\tilde{U}(s)-\Pi_{Z_n}(\tilde{U}^\ast(s)\tilde{V}_Z\tilde{U}(s))\right\Vert
\le C_{\tilde{U}}\mathcal{C}_1|Z|\Vert\tilde{V}_Z\Vert F_{\mu_1,r_0}(n).
\end{eqnarray}
In the same way, third term is estimated as 
\begin{eqnarray}
\label{RUVZdifPi3}
& &\left\Vert\Pi_{\mathcal{Z}_n}(\mathcal{R}_s(\Pi_{Z_n}(\tilde{U}^\ast(s)\tilde{V}_Z\tilde{U}(s))))
-\Pi_{\mathcal{Z}_n}(\mathcal{R}_s(\tilde{U}^\ast(s)\tilde{V}_Z\tilde{U}(s)))\right\Vert\nonumber\\
&\le&\left\Vert\mathcal{R}_s(\Pi_{Z_n}(\tilde{U}^\ast(s)\tilde{V}_Z\tilde{U}(s)))
-\mathcal{R}_s(\tilde{U}^\ast(s)\tilde{V}_Z\tilde{U}(s))\right\Vert\nonumber\\
&\le&\left\Vert\Pi_{Z_n}(\tilde{U}^\ast(s)\tilde{V}_Z\tilde{U}(s))
-\tilde{U}^\ast(s)\tilde{V}_Z\tilde{U}(s)\right\Vert\le C_{\tilde{U}}\mathcal{C}_1|Z|\Vert\tilde{V}_Z\Vert F_{\mu_1,r_0}(n). 
\end{eqnarray}
As to the second term, we use the inequality (\ref{calRsPibound}). As a result, we have 
\begin{eqnarray*}
& &\left\Vert\mathcal{R}_s(\Pi_{Z_n}(\tilde{U}^\ast(s)\tilde{V}_Z\tilde{U}(s)))
-\Pi_{\mathcal{Z}_n}(\mathcal{R}_s(\Pi_{Z_n}(\tilde{U}^\ast(s)\tilde{V}_Z\tilde{U}(s))))\right\Vert\\
&\le&4 \Vert\tilde{V}_Z\Vert\int_T^\infty dt\;w_\gamma(t)
+\frac{2c_\gamma C_{\tilde{U},\tilde{\mathcal{H}}}}{v_{\rm LR}}\Vert\tilde{V}_Z\Vert \times 
\sum_{x\in Z_n}\;\sum_{y\in{\Lambda}\backslash \mathcal{Z}_n}F_{\mu_2,2r_0}({\rm dist}(x,y))
\exp[v_{\rm LR}T]\\
&\le&4 \Vert\tilde{V}_Z\Vert\int_T^\infty dt\;w_\gamma(t)
+\frac{2c_\gamma C_{\tilde{U},\tilde{\mathcal{H}}}\mathcal{C}_3}{v_{\rm LR}}\Vert\tilde{V}_Z\Vert\cdot 
|Z_n|\cdot F_{\mu_3,2r_0}(n)\exp[v_{\rm LR}T]\\
&\le&4 \Vert\tilde{V}_Z\Vert\int_T^\infty dt\;w_\gamma(t)
+\frac{2c_\gamma C_{\tilde{U},\tilde{\mathcal{H}}}\mathcal{C}_3\kappa_d}{v_{\rm LR}}\Vert\tilde{V}_Z\Vert\cdot 
|Z|n^d\cdot F_{\mu_3,2r_0}(n)\exp[v_{\rm LR}T],
\end{eqnarray*}
where we have used 
\begin{equation}
\label{sumFmu2bound}
\sum_{y\in{\Lambda}\backslash \mathcal{Z}_n}F_{\mu_2,2r_0}({\rm dist}(x,y))\le\mathcal{C}_3F_{\mu_3,2r_0}(n)
\end{equation}
with a constant $\mu_3\in(0,\mu_2)$ and a positive constant $\mathcal{C}_3$, and we have also used 
$|Z_n|\le\kappa_d|Z|n^d$. {From} (\ref{Fmu0r0}), (\ref{tildeumu}) and (\ref{umu}), one has 
$$
F_{\mu_3,2r_0}(r)\exp[v_{\rm LR}T]
=\frac{1}{[1+(r/2r_0)]^{d+1}}\exp[-\mu_3(r/2r_0)/(\log(r/2r_0))^2]\exp[v_{\rm LR}T]
$$
for $r\ge 2e^2r_0$. We choose $T$ as a function of $n$,  
\begin{equation}
\label{T(n)}
T=T(n):=\frac{\mu_3}{2v_{\rm LR}}\times\begin{cases}(n/2r_0)/(\log(n/2r_0))^2, & \mbox{for \ } n\ge 2e^2r_0; \\ 
                 e^2/4, & \mbox{otherwise}.\\ \end{cases}
\end{equation}
Then, one has
\begin{equation}
\label{FexpTnF}
F_{\mu_3,2r_0}(n)\exp[v_{\rm LR}T(n)]=F_{\mu_3/2,2r_0}(n).
\end{equation}
Therefore, we have 
\begin{eqnarray*}
& &\left\Vert\mathcal{R}_s(\Pi_{Z_n}(\tilde{U}^\ast(s)\tilde{V}_Z\tilde{U}(s)))
-\Pi_{\mathcal{Z}_n}(\mathcal{R}_s(\Pi_{Z_n}(\tilde{U}^\ast(s)\tilde{V}_Z\tilde{U}(s))))\right\Vert\\
&\le&4 \Vert\tilde{V}_Z\Vert\int_{T(n)}^\infty dt\;w_\gamma(t)
+\frac{2c_\gamma C_{\tilde{U},\tilde{\mathcal{H}}}\mathcal{C}_3\kappa_d}{v_{\rm LR}}\Vert\tilde{V}_Z\Vert\cdot 
|Z|n^d\cdot F_{\mu_3/2,2r_0}(n)
\end{eqnarray*}
According to \cite{BMNS}, the integral of the first term is estimated as 
\begin{equation}
\label{WgammaTdecay}
\int_{T}^\infty dt\;w_\gamma(t) \le 35e^2(\gamma T)^4u_{2/7}(\gamma T)
\end{equation}
for $\gamma T\ge 561$. (See the equation~(2.18) in \cite{BMNS}.) We write 
\begin{equation}
\label{WgammaTn}
W_\gamma(T(n))=\int_{T(n)}^\infty dt\;w_\gamma(t),
\end{equation}
following \cite{BMNS}. Then, the above bound is written 
\begin{eqnarray}
\label{calRsPiUVZUdif}
& &\left\Vert\mathcal{R}_s(\Pi_{Z_n}(\tilde{U}^\ast(s)\tilde{V}_Z\tilde{U}(s)))
-\Pi_{\mathcal{Z}_n}(\mathcal{R}_s(\Pi_{Z_n}(\tilde{U}^\ast(s)\tilde{V}_Z\tilde{U}(s))))\right\Vert\nonumber\\
&\le&4\Vert\tilde{V}_Z\Vert\left[W_\gamma(T(n))+\mathcal{C}_4 
|Z|n^d\cdot F_{\mu_3/2,2r_0}(n)\right]
\end{eqnarray}
with the positive constant $\mathcal{C}_4$. Substituting this, (\ref{RUVZdifPi1}) and (\ref{RUVZdifPi3}) 
into (\ref{RUVZdifPi}), we obtain 
\begin{eqnarray}
& &\left\Vert\mathcal{R}_s(\tilde{U}^\ast(s)\tilde{V}_Z\tilde{U}(s))
-\Pi_{\mathcal{Z}_n}(\mathcal{R}_s(\tilde{U}^\ast(s)\tilde{V}_Z\tilde{U}(s)))\right\Vert \nonumber\\
&\le& 2C_{\tilde{U}}\mathcal{C}_1|Z|\Vert\tilde{V}_Z\Vert F_{\mu_1,r_0}(n)
+4\Vert\tilde{V}_Z\Vert\left[W_\gamma(T(n))+\mathcal{C}_4|Z|n^d\cdot F_{\mu_3/2,2r_0}(n)\right].
\end{eqnarray}
{From} (\ref{WgammaTdecay}) and (\ref{WgammaTn}), this bound can be written 
\begin{equation}
\label{RUVZPidifbound}
\left\Vert\mathcal{R}_s(\tilde{U}^\ast(s)\tilde{V}_Z\tilde{U}(s))
-\Pi_{\mathcal{Z}_n}(\mathcal{R}_s(\tilde{U}^\ast(s)\tilde{V}_Z\tilde{U}(s)))\right\Vert
\le \Vert\tilde{V}_Z\Vert\left[G_1(n)+|Z|G_2(n)\right]
\end{equation}
in terms of two sub-exponentially decaying functions, $G_1$ and $G_2$. 

We recall the property of the local interaction $\tilde{\mathcal{W}}_Z^{(1)}(s)$ of (\ref{tildecalWZ1s}) that is given by 
(\ref{RUVZUP0}), i.e.,  
\begin{equation}
\label{RsUVZUP00}
\mathcal{R}_s(:\tilde{U}^\ast(s)\tilde{V}_Z\tilde{U}(s):)\tilde{P}_0(0)=0,
\end{equation}
where $\tilde{P}_0(0)$ is the projection onto the ground state of the unperturbed Hamiltonian $\tilde{\mathcal{H}}_0$. 
Combining this with the above bound (\ref{RUVZPidifbound}), we have
\begin{eqnarray}
& &\left\Vert\Pi_{\mathcal{Z}_n}(\mathcal{R}_s(:\tilde{U}^\ast(s)\tilde{V}_Z\tilde{U}(s):))\tilde{P}_0(0)\right\Vert\nonumber\\
&\le&\left\Vert\left[\mathcal{R}_s(:\tilde{U}^\ast(s)\tilde{V}_Z\tilde{U}(s):)
-\Pi_{\mathcal{Z}_n}(\mathcal{R}_s(:\tilde{U}^\ast(s)\tilde{V}_Z\tilde{U}(s):))\right]\tilde{P}_0(0)\right\Vert\nonumber\\
&\le& \Vert\tilde{V}_Z\Vert\left[G_1(n)+|Z|G_2(n)\right].
\end{eqnarray}
Since one has ${\rm supp}\;\Pi_{\mathcal{Z}_n}(\mathcal{R}_s(:\tilde{U}^\ast(s)\tilde{V}_Z\tilde{U}(s):))\subset \mathcal{Z}_n$ 
by definition, the following equality is valid: 
$$ 
\left\Vert\Pi_{\mathcal{Z}_n}(\mathcal{R}_s(:\tilde{U}^\ast(s)\tilde{V}_Z\tilde{U}(s):))\tilde{P}_0(0)\right\Vert=
\left\Vert\Pi_{\mathcal{Z}_n}(\mathcal{R}_s(:\tilde{U}^\ast(s)\tilde{V}_Z\tilde{U}(s):))\tilde{P}_{0,\mathcal{Z}_n}\right\Vert,
$$
where the local operator $\tilde{P}_{0,\mathcal{Z}_n}$ with the support $\mathcal{Z}_n\times\{c,d\}$ is defined by 
\begin{equation}
\tilde{P}_{0,\mathcal{Z}_n}:=\prod_{x\in\mathcal{Z}_n\times\{c,d\}}\xi_x\xi_x^\dagger.
\end{equation}
Therefore, we obtain 
\begin{equation}
\label{PiZnRsUVUP0bound}
\left\Vert\Pi_{\mathcal{Z}_n}(\mathcal{R}_s(:\tilde{U}^\ast(s)\tilde{V}_Z\tilde{U}(s):))\tilde{P}_{0,\mathcal{Z}_n}\right\Vert
\le \Vert\tilde{V}_Z\Vert\left[G_1(n)+|Z|G_2(n)\right].
\end{equation}

Similarly to Appendix~\ref{Sec:UVZUDelta}, we introduce 
\begin{equation}
\label{DeltatildeURs0V}
\Delta_{\tilde{U},\mathcal{R}_s}^0(\tilde{V}_Z,s):=\Pi_Z(\mathcal{R}_s(:\tilde{U}^\ast(s)\tilde{V}_Z\tilde{U}(s):))
\end{equation}
and 
\begin{equation}
\label{DeltatildeURsnV}
\Delta_{\tilde{U},\mathcal{R}_s}^n(\tilde{V}_Z,s):=\Pi_{\mathcal{Z}_n}(\mathcal{R}_s(:\tilde{U}^\ast(s)\tilde{V}_Z\tilde{U}(s):))
-\Pi_{\mathcal{Z}_{n-1}}(\mathcal{R}_s(:\tilde{U}^\ast(s)\tilde{V}_Z\tilde{U}(s):))
\end{equation}
for $n\ge 1$. Then, 
\begin{equation}
\mathcal{R}_s(:\tilde{U}^\ast(s)\tilde{V}_Z\tilde{U}(s):)=\sum_{n=0}^\infty \Delta_{\tilde{U},\mathcal{R}_s}^n(\tilde{V}_Z,s).
\end{equation}
However, this decomposition may ruin the above good property (\ref{RsUVZUP00}). Namely, it may occur that  
$$
\Delta_{\tilde{U},\mathcal{R}_s}^n(\tilde{V}_Z,s)\tilde{P}_0(0)\ne 0
$$
for some $n$. 

In order to cure this defect, we further introduce a decomposition of the quantities as follows.   
Note that 
\begin{eqnarray}
\label{DeltaURsnDecomp}
\Delta_{\tilde{U},\mathcal{R}_s}^n(\tilde{V}_Z,s)&=&[(1-\tilde{P}_{0,\mathcal{Z}_n})+\tilde{P}_{0,\mathcal{Z}_n}]
\Delta_{\tilde{U},\mathcal{R}_s}^n(\tilde{V}_Z,s)[(1-\tilde{P}_{0,\mathcal{Z}_n})+\tilde{P}_{0,\mathcal{Z}_n}]\nonumber\\
&=&(1-\tilde{P}_{0,\mathcal{Z}_n})\Delta_{\tilde{U},\mathcal{R}_s}^n(\tilde{V}_Z,s)(1-\tilde{P}_{0,\mathcal{Z}_n})
+(1-\tilde{P}_{0,\mathcal{Z}_n})\Delta_{\tilde{U},\mathcal{R}_s}^n(\tilde{V}_Z,s)\tilde{P}_{0,\mathcal{Z}_n}\nonumber\\
& &+\tilde{P}_{0,\mathcal{Z}_n}\Delta_{\tilde{U},\mathcal{R}_s}^n(\tilde{V}_Z,s)(1-\tilde{P}_{0,\mathcal{Z}_n})
+\tilde{P}_{0,\mathcal{Z}_n}\Delta_{\tilde{U},\mathcal{R}_s}^n(\tilde{V}_Z,s)\tilde{P}_{0,\mathcal{Z}_n}.
\end{eqnarray}
The second term in the right-hand side is written 
\begin{eqnarray*}
(1-\tilde{P}_{0,\mathcal{Z}_n})\Delta_{\tilde{U},\mathcal{R}_s}^n(\tilde{V}_Z,s)\tilde{P}_{0,\mathcal{Z}_n}
&=&(1-\tilde{P}_{0,\mathcal{Z}_n})\Pi_{\mathcal{Z}_n}(\mathcal{R}_s(:\tilde{U}^\ast(s)\tilde{V}_Z\tilde{U}(s):))
\tilde{P}_{0,\mathcal{Z}_n}\\
& &-(1-\tilde{P}_{0,\mathcal{Z}_n})\Pi_{\mathcal{Z}_{n-1}}(\mathcal{R}_s(:\tilde{U}^\ast(s)\tilde{V}_Z\tilde{U}(s):))
\tilde{P}_{0,\mathcal{Z}_n}
\end{eqnarray*}
for $n\ge 1$. Let us consider 
\begin{eqnarray*}
& &(1-\tilde{P}_{0,\mathcal{Z}_{n-1}})\Pi_{\mathcal{Z}_{n-1}}(\mathcal{R}_s(:\tilde{U}^\ast(s)\tilde{V}_Z\tilde{U}(s):))
\tilde{P}_{0,\mathcal{Z}_{n-1}}\\
& &-(1-\tilde{P}_{0,\mathcal{Z}_n})\Pi_{\mathcal{Z}_{n-1}}(\mathcal{R}_s(:\tilde{U}^\ast(s)\tilde{V}_Z\tilde{U}(s):))
\tilde{P}_{0,\mathcal{Z}_n}.
\end{eqnarray*}
The first term is that in 
$(1-\tilde{P}_{0,\mathcal{Z}_{n-1}})\Delta_{\tilde{U},\mathcal{R}_s}^{n-1}(\tilde{V}_Z,s)\tilde{P}_{0,\mathcal{Z}_{n-1}}$. 
Note that 
$$
\tilde{P}_{0,\mathcal{Z}_n}=\tilde{P}_{0,\mathcal{Z}_{n-1}}\tilde{P}_{0,\mathcal{Z}_n\backslash \mathcal{Z}_{n-1}}
$$
and 
$$
(1-\tilde{P}_{0,\mathcal{Z}_n})\tilde{P}_{0,\mathcal{Z}_n\backslash \mathcal{Z}_{n-1}}=
(1-\tilde{P}_{0,\mathcal{Z}_{n-1}})\tilde{P}_{0,\mathcal{Z}_n\backslash \mathcal{Z}_{n-1}}.
$$ 
By using these relations, one has 
\begin{eqnarray*}
& &(1-\tilde{P}_{0,\mathcal{Z}_{n-1}})\Pi_{\mathcal{Z}_{n-1}}(\mathcal{R}_s(:\tilde{U}^\ast(s)\tilde{V}_Z\tilde{U}(s):))
\tilde{P}_{0,\mathcal{Z}_{n-1}}\\
& &-(1-\tilde{P}_{0,\mathcal{Z}_n})\Pi_{\mathcal{Z}_{n-1}}(\mathcal{R}_s(:\tilde{U}^\ast(s)\tilde{V}_Z\tilde{U}(s):))
\tilde{P}_{0,\mathcal{Z}_n}\\
&=&(1-\tilde{P}_{0,\mathcal{Z}_{n-1}})\Pi_{\mathcal{Z}_{n-1}}(\mathcal{R}_s(:\tilde{U}^\ast(s)\tilde{V}_Z\tilde{U}(s):))
\tilde{P}_{0,\mathcal{Z}_{n-1}}\\
& &-(1-\tilde{P}_{0,\mathcal{Z}_{n-1}})\Pi_{\mathcal{Z}_{n-1}}(\mathcal{R}_s(:\tilde{U}^\ast(s)\tilde{V}_Z\tilde{U}(s):))
\tilde{P}_{0,\mathcal{Z}_{n-1}}\tilde{P}_{0,\mathcal{Z}_n\backslash \mathcal{Z}_{n-1}}\\
&=&(1-\tilde{P}_{0,\mathcal{Z}_{n-1}})\Pi_{\mathcal{Z}_{n-1}}(\mathcal{R}_s(:\tilde{U}^\ast(s)\tilde{V}_Z\tilde{U}(s):))
\tilde{P}_{0,\mathcal{Z}_{n-1}}(1-\tilde{P}_{0,\mathcal{Z}_n\backslash \mathcal{Z}_{n-1}}). 
\end{eqnarray*}
Similarly, for the fourth term in the right-hand side in the second equality of (\ref{DeltaURsnDecomp}), we have 
\begin{eqnarray*} 
& &\tilde{P}_{0,\mathcal{Z}_{n-1}}
\Pi_{\mathcal{Z}_{n-1}}(\mathcal{R}_s(:\tilde{U}^\ast(s)\tilde{V}_Z\tilde{U}(s):))
\tilde{P}_{0,\mathcal{Z}_{n-1}}
-\tilde{P}_{0,\mathcal{Z}_n}
\Pi_{\mathcal{Z}_{n-1}}(\mathcal{R}_s(:\tilde{U}^\ast(s)\tilde{V}_Z\tilde{U}(s):))
\tilde{P}_{0,\mathcal{Z}_n}\\
&=&\tilde{P}_{0,\mathcal{Z}_{n-1}}
\Pi_{\mathcal{Z}_{n-1}}(\mathcal{R}_s(:\tilde{U}^\ast(s)\tilde{V}_Z\tilde{U}(s):))
\tilde{P}_{0,\mathcal{Z}_{n-1}}\\
& &-\tilde{P}_{0,\mathcal{Z}_n\backslash \mathcal{Z}_{n-1}}\tilde{P}_{0,\mathcal{Z}_{n-1}}
\Pi_{\mathcal{Z}_{n-1}}(\mathcal{R}_s(:\tilde{U}^\ast(s)\tilde{V}_Z\tilde{U}(s):))
\tilde{P}_{0,\mathcal{Z}_{n-1}}\tilde{P}_{0,\mathcal{Z}_n\backslash \mathcal{Z}_{n-1}}\\
&=&(1-\tilde{P}_{0,\mathcal{Z}_n\backslash \mathcal{Z}_{n-1}})\tilde{P}_{0,\mathcal{Z}_{n-1}}
\Pi_{\mathcal{Z}_{n-1}}(\mathcal{R}_s(:\tilde{U}^\ast(s)\tilde{V}_Z\tilde{U}(s):))
\tilde{P}_{0,\mathcal{Z}_{n-1}}(1-\tilde{P}_{0,\mathcal{Z}_n\backslash \mathcal{Z}_{n-1}}). 
\end{eqnarray*}
{From} these observations, we have 
\begin{eqnarray*}
& &\mathcal{R}_s(:\tilde{U}^\ast(s)\tilde{V}_Z\tilde{U}(s):)\\
&=&\sum_{n=0}^\infty \Delta_{\tilde{U},\mathcal{R}_s}^n(\tilde{V}_Z,s)\\
&=&\sum_{n=0}^\infty(1-\tilde{P}_{0,\mathcal{Z}_n})\Delta_{\tilde{U},\mathcal{R}_s}^n(\tilde{V}_Z,s)(1-\tilde{P}_{0,\mathcal{Z}_n})\\
&+&\sum_{n=1}^\infty(1-\tilde{P}_{0,\mathcal{Z}_{n-1}})\Pi_{\mathcal{Z}_{n-1}}(\mathcal{R}_s(:\tilde{U}^\ast(s)\tilde{V}_Z\tilde{U}(s):))
\tilde{P}_{0,\mathcal{Z}_{n-1}}(1-\tilde{P}_{0,\mathcal{Z}_n\backslash \mathcal{Z}_{n-1}})\\
&+&\sum_{n=1}^\infty(1-\tilde{P}_{0,\mathcal{Z}_n\backslash \mathcal{Z}_{n-1}})\tilde{P}_{0,\mathcal{Z}_{n-1}}
\Pi_{\mathcal{Z}_{n-1}}(\mathcal{R}_s(:\tilde{U}^\ast(s)\tilde{V}_Z\tilde{U}(s):))(1-\tilde{P}_{0,\mathcal{Z}_{n-1}})\\
&+&\sum_{n=1}^\infty(1-\tilde{P}_{0,\mathcal{Z}_n\backslash \mathcal{Z}_{n-1}})\tilde{P}_{0,\mathcal{Z}_{n-1}}
\Pi_{\mathcal{Z}_{n-1}}(\mathcal{R}_s(:\tilde{U}^\ast(s)\tilde{V}_Z\tilde{U}(s):))
\tilde{P}_{0,\mathcal{Z}_{n-1}}(1-\tilde{P}_{0,\mathcal{Z}_n\backslash \mathcal{Z}_{n-1}}). 
\end{eqnarray*}
Consequently, we obtain the expression, 
\begin{equation}
\tilde{\mathcal{W}}_Z^{(1)}(s)=s\mathcal{R}_s(:\tilde{U}^\ast(s)\tilde{V}_Z\tilde{U}(s):)
=\sum_{n=0}^\infty \tilde{\mathcal{W}}_{Z,n}^{(1)}(s),
\end{equation}
with 
$$
\tilde{\mathcal{W}}_{Z,0}^{(1)}(s):=
s(1-\tilde{P}_{0,\mathcal{Z}})\Delta_{\tilde{U},\mathcal{R}_s}^0(\tilde{V}_Z,s)(1-\tilde{P}_{0,\mathcal{Z}})
$$
and 
\begin{eqnarray*}
\tilde{\mathcal{W}}_{Z,n}^{(1)}(s)&:=&s\left[
(1-\tilde{P}_{0,\mathcal{Z}_n})\Delta_{\tilde{U},\mathcal{R}_s}^n(\tilde{V}_Z,s)(1-\tilde{P}_{0,\mathcal{Z}_n})\right.\\
&+&(1-\tilde{P}_{0,\mathcal{Z}_{n-1}})\Pi_{\mathcal{Z}_{n-1}}(\mathcal{R}_s(:\tilde{U}^\ast(s)\tilde{V}_Z\tilde{U}(s):))
\tilde{P}_{0,\mathcal{Z}_{n-1}}(1-\tilde{P}_{0,\mathcal{Z}_n\backslash \mathcal{Z}_{n-1}})\\
&+&(1-\tilde{P}_{0,\mathcal{Z}_n\backslash \mathcal{Z}_{n-1}})\tilde{P}_{0,\mathcal{Z}_{n-1}}
\Pi_{\mathcal{Z}_{n-1}}(\mathcal{R}_s(:\tilde{U}^\ast(s)\tilde{V}_Z\tilde{U}(s):))(1-\tilde{P}_{0,\mathcal{Z}_{n-1}})\\
&+&\left.(1-\tilde{P}_{0,\mathcal{Z}_n\backslash \mathcal{Z}_{n-1}})\tilde{P}_{0,\mathcal{Z}_{n-1}}
\Pi_{\mathcal{Z}_{n-1}}(\mathcal{R}_s(:\tilde{U}^\ast(s)\tilde{V}_Z\tilde{U}(s):))
\tilde{P}_{0,\mathcal{Z}_{n-1}}(1-\tilde{P}_{0,\mathcal{Z}_n\backslash \mathcal{Z}_{n-1}})\right]
\end{eqnarray*}
for $n\ge 1$. Clearly, all the terms satisfy 
\begin{equation}
\tilde{\mathcal{W}}_{Z,n}^{(1)}(s)\tilde{P}_0(0)=0,
\end{equation}
and the bound, 
\begin{equation}
\label{tildecalWZn1bound}
\Vert\tilde{\mathcal{W}}_{Z,n}^{(1)}(s)\Vert \le 5|s|\cdot\Vert \tilde{V}_Z\Vert\cdot[G_1(n)+|Z|G_2(n)], 
\end{equation}
{from} (\ref{RUVZPidifbound}), (\ref{PiZnRsUVUP0bound}), (\ref{DeltatildeURs0V}) and (\ref{DeltatildeURsnV}). 

In order to prove (\ref{tildecalWZnsboundedness}), it is sufficient to prove 
\begin{equation}
\label{tildecalWZn1boundness}
\sum_Z \sum_{n=0}^\infty {\rm Ind}[x\in\mathcal{Z}_n]|\mathcal{Z}_n|\Vert\tilde{\mathcal{W}}_{Z,n}^{(1)}(s)\Vert<\infty. 
\end{equation}
For simplicity, we assume 
$$
\Vert\tilde{\mathcal{W}}_{Z,n}^{(1)}(s)\Vert \le|Z|\Vert \tilde{V}_Z\Vert G_{\tilde{\mathcal{W}}^{(1)}}(n)
$$
with a sub-exponentially decaying function $G_{\tilde{\mathcal{W}}^{(1)}}(n)$, although this is weaker bound than the above bound 
(\ref{tildecalWZn1bound}). Note that 
$$
|\mathcal{Z}_n|\le \kappa_d|Z_n|n^d\le (\kappa_d)^2|Z|n^{2d}. 
$$ 
{From} these, it is enough to prove 
\begin{equation}
\label{tildecalWZn1boundness2}
\sum_Z \sum_{n=0}^\infty {\rm Ind}[x\in\mathcal{Z}_n]|Z|^2n^{2d}\Vert \tilde{V}_Z\Vert G_{\tilde{\mathcal{W}}^{(1)}}(n)<\infty. 
\end{equation}

When the summand in the left-hand side of (\ref{tildecalWZn1boundness2}) is non-vanishing, 
the site $x$ must be contained in $\mathcal{Z}_n$. Therefore, there exists $z\in Z$ such that 
${\rm dist}(x,z)={\rm dist}(x,Z)\le 2n$. By using this, we have 
\begin{eqnarray}
& &\sum_Z \sum_{n=0}^\infty {\rm Ind}[x\in\mathcal{Z}_n]|Z|^2n^{2d}\Vert \tilde{V}_Z\Vert G_{\tilde{\mathcal{W}}^{(1)}}(n)\nonumber\\
&\le& \sum_z \sum_{Z\ni z}\sum_n {\rm Ind}[{\rm dist}(x,z)\le 2n]|Z|^2n^{2d}\Vert \tilde{V}_Z\Vert G_{\tilde{\mathcal{W}}^{(1)}}(n)\nonumber\\
&\le&\tilde{C}_{h,V}\sum_z \sum_n {\rm Ind}[{\rm dist}(x,z)\le 2n]n^{2d}\ G_{\tilde{\mathcal{W}}^{(1)}}(n)\nonumber\\
&\le& 2^{d}\kappa_d\tilde{C}_{h,V}\sum_n n^{3d}G_{\tilde{\mathcal{W}}^{(1)}}(n)<\infty, 
\end{eqnarray}
where we have used (\ref{boundedtildeVX2}).

\Section{Local approximation of $[\tilde{\mathcal{H}}_{0,Z},\tilde{D}(r)]$}

In this appendix, we consider local approximation of $[\tilde{\mathcal{H}}_{0,Z},\tilde{D}(r)]$ as a preparation for 
treating the potential $\tilde{\mathcal{W}}_Z^{(2)}(s)$ of (\ref{tildecalWZ2s}). 

For this purpose, we first recall the expression of $\tilde{D}(r)$ in \cite{BMNS}, 
\begin{equation}
\label{tildeDr}
\tilde{D}(r)=\sum_Y \sum_{m=0}^\infty \Delta_{\tilde{D}}^m(\tilde{V}_Y,r),
\end{equation}
where 
$$
\Delta_{\tilde{D}}^0(\tilde{V}_Y,s):=\int_{-\infty}^{+\infty}dt\; W_\gamma(t)\cdot\Pi_Y(\tau_t^{\tilde{\mathcal{H}}_s}(\tilde{V}_Y))
$$
and 
$$
\Delta_{\tilde{D}}^m(\tilde{V}_Y,s):=\int_{-\infty}^{+\infty}dt\; W_\gamma(t)\cdot\Pi_{Y_n}(\tau_t^{\tilde{\mathcal{H}}_s}(\tilde{V}_Y))
-\int_{-\infty}^{+\infty}dt\; W_\gamma(t)\cdot\Pi_{Y_{n-1}}(\tau_t^{\tilde{\mathcal{H}}_s}(\tilde{V}_Y))
$$
for $m\ge 1$; the weight function $W_\gamma(t)$ is given by the equation (2.12) in \cite{BMNS}, and 
$\tau_t^{\tilde{\mathcal{H}}_s}(\cdots)$ is the time evolution 
by the Hamiltonian $\tilde{\mathcal{H}}_s$ of (\ref{tildecalHs}), i.e., it is given by 
$$
\tau_t^{\tilde{\mathcal{H}}_s}(\cdots):=\exp[it\tilde{\mathcal{H}}_s](\cdots)\exp[-it\tilde{\mathcal{H}}_s].
$$
The Hamiltonian $\tilde{\mathcal{H}}_s=\tilde{\mathcal{H}}_0+s\tilde{V}$ satisfies 
the exponential decay bounds, (\ref{tildecalH0Zexpdecay}) and (\ref{boundedtildeVX}).  
Therefore, by Lemma~4.7 in \cite{BMNS}, the following bound holds: 
\begin{equation}
\label{DeltaDmbound}
\Vert\Delta_{\tilde{D}}^m(\tilde{V}_Y,r)\Vert\le C_{\tilde{D}} |Y|\Vert \tilde{V}_Y\Vert G_{\tilde{D}}(m)
\end{equation}
with the positive constant $C_{\tilde{D}}$ and a sub-exponentially decaying function $G_{\tilde{D}}(m)$. 
By using the expression of $\tilde{D}(r)$, one has 
\begin{eqnarray*}
[\tilde{\mathcal{H}}_{0,Z},\tilde{D}(r)]
&=&\sum_Y \sum_{m=0}^\infty \;[\tilde{\mathcal{H}}_{0,Z},\Delta_{\tilde{D}}^m(\tilde{V}_Y,r)]\\
&=&\sum_Y \sum_{m=0}^\infty \;{\rm Ind}[Z\cap Y_m\ne \emptyset]\cdot
[\tilde{\mathcal{H}}_{0,Z},\Delta_{\tilde{D}}^m(\tilde{V}_Y,r)].
\end{eqnarray*}
We write 
\begin{equation}
\label{commutildecalH0ZtildeDZn}
[\tilde{\mathcal{H}}_{0,Z},\tilde{D}(r)]_{Z_n}:=\sum_Y \sum_{m=0}^\infty \;{\rm Ind}[Z\cap Y_m\ne \emptyset]\cdot
{\rm Ind}[Y_m\subset Z_n]\cdot[\tilde{\mathcal{H}}_{0,Z},\Delta_{\tilde{D}}^m(\tilde{V}_Y,r)].
\end{equation}
When the summand in the right-hand side is non-vanishing, $Z\cap Y_m\ne \emptyset$. Therefore, there exists $z\in Z\cap Y_m$ 
and $y\in Y$ such that ${\rm dist}(z,y)={\rm dist}(z,Y)\le m$. 
By using this and the above inequality (\ref{DeltaDmbound}), the norm of 
$[\tilde{\mathcal{H}}_{0,Z},\tilde{D}(r)]_{Z_n}$ is estimated as follows: 
\begin{eqnarray}
\label{commutildecalH0ZtildeDZnfinite}
\Vert[\tilde{\mathcal{H}}_{0,Z},\tilde{D}(r)]_{Z_n}\Vert&\le&\sum_{z\in Z}\sum_y
\sum_{Y\ni y} \sum_{m=0}^\infty \;
{\rm Ind}[{\rm dist}(z,y)\le m]\cdot\Vert[\tilde{\mathcal{H}}_{0,Z},\Delta_{\tilde{D}}^m(\tilde{V}_Y,r)]\Vert\nonumber\\
&\le& 2C_{\tilde{D}}\Vert\tilde{\mathcal{H}}_{0,Z}\Vert 
\sum_{z\in Z}\sum_y \sum_{Y\ni y} \sum_{m=0}^\infty \;
{\rm Ind}[{\rm dist}(z,y)\le m]|Y|\Vert \tilde{V}_Y\Vert G_{\tilde{D}}(m)\nonumber\\
&\le&2C_{\tilde{D}}\tilde{C}_{h,V}\Vert\tilde{\mathcal{H}}_{0,Z}\Vert\sum_{z\in Z}\sum_y \sum_{m=0}^\infty \;{\rm Ind}[{\rm dist}(z,y)\le m] G_{\tilde{D}}(m)\nonumber\\
&\le&2C_{\tilde{D}}\tilde{C}_{h,V}\kappa_d\Vert\tilde{\mathcal{H}}_{0,Z}\Vert\sum_{z\in Z}\sum_{m=0}^\infty \;m^d G_{\tilde{D}}(m)\nonumber\\
&\le&4C_{\tilde{D}}\tilde{C}_{h,V}\kappa_d\Vert\tilde{\mathcal{H}}_{0,Z}\Vert\sum_{m=0}^\infty \;m^d G_{\tilde{D}}(m)
\le C_{[\tilde{\mathcal{H}}_{0,Z},\tilde{D}]}\Vert\tilde{\mathcal{H}}_{0,Z}\Vert
\end{eqnarray}
with the positive constant $C_{[\tilde{\mathcal{H}}_{0,Z},\tilde{D}]}$, 
where we have used (\ref{boundedtildeVX2}). 

{From} the definition of (\ref{commutildecalH0ZtildeDZn}), one has 
\begin{eqnarray}
\label{difcommuH0ZtildeDr}
& &\left\Vert[\tilde{\mathcal{H}}_{0,Z},\tilde{D}(r)]-[\tilde{\mathcal{H}}_{0,Z},\tilde{D}(r)]_{Z_n}\right\Vert\nonumber\\
&\le& \sum_Y \sum_{m=0}^\infty \;{\rm Ind}[Z\cap Y_m\ne \emptyset]\cdot
{\rm Ind}[Y_m\not\subset Z_n]\cdot\Vert[\tilde{\mathcal{H}}_{0,Z},\Delta_{\tilde{D}}^m(\tilde{V}_Y,r)]\Vert\nonumber\\
&\le& 2\Vert\tilde{\mathcal{H}}_{0,Z}\Vert\sum_Y \sum_{m=0}^\infty \;{\rm Ind}[Z\cap Y_m\ne \emptyset]\cdot
{\rm Ind}[Y_m\not\subset Z_n]\cdot\Vert \Delta_{\tilde{D}}^m(\tilde{V}_Y,r)\Vert\nonumber\\
&\le& 2C_{\tilde{D}}\Vert\tilde{\mathcal{H}}_{0,Z}\Vert \sum_Y \sum_{m=0}^\infty \;{\rm Ind}[Z\cap Y_m\ne \emptyset]\cdot
{\rm Ind}[Y_m\not\subset Z_n]\cdot|Y|\Vert \tilde{V}_Y\Vert G_{\tilde{D}}(m). 
\end{eqnarray}
Since the set $Z$ at most consists of two sites, we write $Z=\{x,y\}$. From $Z\cap Y_m\ne \emptyset$, we can assume $x\in Y_m$. 
Then, there exists $u\in Y$ such that ${\rm dist}(x,u)={\rm dist}(x,Y)\le m$. 
On the other hand, from $Y_m\not\subset Z_n$, there exists $z\in Y_m$ such that $z\not\in Z_n$. 
Therefore, ${\rm dist}(x,z)>n$. In addition, 
there exists $v\in Y$ such that ${\rm dist}(v,z)={\rm dist}(z,Y)\le m$. From these observations, we have 
$$
n<{\rm dist}(x,z)\le {\rm dist}(x,u)+{\rm dist}(u,v)+{\rm dist}(v,z)\le 2m +{\rm dist}(u,v).
$$ 
The quantity in the right-hand side of (\ref{difcommuH0ZtildeDr}) can be estimated as  
\begin{eqnarray}
\label{difcommuH0ZtildeDr2}
& &\sum_Y \sum_{m=0}^\infty \;{\rm Ind}[Z\cap Y_m\ne \emptyset]\cdot
{\rm Ind}[Y_m\not\subset Z_n]\cdot|Y|\Vert \tilde{V}_Y\Vert G_{\tilde{D}}(m)\nonumber\\
&\le &\sum_{u,v,z}\sum_{Y\ni u,v} \sum_{m=0}^\infty \;{\rm Ind}[2m +{\rm dist}(u,v)>n,{\rm dist}(x,u)\le m, {\rm dist}(z,v)\le m]
|Y|\Vert \tilde{V}_Y\Vert G_{\tilde{D}}(m)\nonumber\\
&\le& \kappa_d\tilde{C}_{h,V} \sum_{u,v}\sum_{m=0}^\infty\; m^d\cdot {\rm Ind}[2m +{\rm dist}(u,v)>n,{\rm dist}(x,u)\le m]\cdot
e^{-\tilde{m}_{h,V}{\rm dist}(u,v)} G_{\tilde{D}}(m),\nonumber\\
\end{eqnarray}
where we have used the inequality (\ref{boundedtildeVX}). From Lemmas~2.6 and 4.7 in \cite{BMNS}, one has 
$$
G_{\tilde{D}}(m)\le\mathcal{C}_5 u_{\tilde{\mu}}(m/r_1)\le \mathcal{C}_5u_{\tilde{\mu}/2}(m/r_1)u_{\tilde{\mu}/2}(m/r_1) 
$$
with the positive constants, $\tilde{\mu}\in (0,2/7)$, $\mathcal{C}_5$ and $r_1$. In addition, the following bound holds 
under the condition $2m +{\rm dist}(u,v)>n$:
$$
e^{-\tilde{m}_{h,V}{\rm dist}(u,v)/2}u_{\tilde{\mu}/2}(m/r_1)
\le \mathcal{C}_6u_{\tilde{\mu}/2}(n/2r_1)
$$
with a positive constant $\mathcal{C}_6$. 
Using this, the right-hand side of (\ref{difcommuH0ZtildeDr2}) is estimated as 
\begin{eqnarray}
\label{difcommuH0ZtildeDr3}
& &\sum_{u,v}\sum_{m=0}^\infty\; m^d\cdot {\rm Ind}[2m +{\rm dist}(u,v)>n,{\rm dist}(x,u)\le m]\cdot
e^{-\tilde{m}_{h,V}{\rm dist}(u,v)} G_{\tilde{D}}(m)\nonumber\\
&\le& \mathcal{C}_5 \sum_{u,v}\sum_{m=0}^\infty\; m^d\cdot u_{\tilde{\mu}/2}(m/r_1) 
e^{-\tilde{m}_{h,V}{\rm dist}(u,v)/2}\cdot{\rm Ind}[{\rm dist}(x,u)\le m]\nonumber\\
& &\times {\rm Ind}[2m +{\rm dist}(u,v)>n]\cdot e^{-\tilde{m}_{h,V}{\rm dist}(u,v)/2}u_{\tilde{\mu}/2}(m/r_1)\nonumber\\
&\le&\mathcal{C}_5 \mathcal{C}_6 \sum_{u,v}\sum_{m=0}^\infty\; m^d\cdot u_{\tilde{\mu}/2}(m/r_1) 
e^{-\tilde{m}_{h,V}{\rm dist}(u,v)/2}\cdot{\rm Ind}[{\rm dist}(x,u)\le m]\cdot u_{\tilde{\mu}/2}(n/2r_1)\nonumber\\
&\le&\mathcal{C}_5\mathcal{C}_6\hat{K}_{h,V}  \sum_{u}\sum_{m=0}^\infty\; m^d\cdot u_{\tilde{\mu}/2}(m/r_1) 
\cdot{\rm Ind}[{\rm dist}(x,u)\le m]\cdot u_{\tilde{\mu}/2}(n/2r_1)\nonumber\\
&\le&\mathcal{C}_5\mathcal{C}_6 \hat{K}_{h,V}\kappa_d \left[\sum_{m=0}^\infty\; m^{2d}u_{\tilde{\mu}/2}(m/r_1)\right] 
u_{\tilde{\mu}/2}(n/2r_1), 
\end{eqnarray}
where the positive constant $\hat{K}_{h,V}$ is given by   
\begin{equation}
\label{hatKhV}
\hat{K}_{h,V}:=\sum_{v\in\ze^d}e^{-\tilde{m}_{h,V}{\rm dist}(0,v)/2}.
\end{equation}
Substituting (\ref{difcommuH0ZtildeDr2}) and (\ref{difcommuH0ZtildeDr3}) into (\ref{difcommuH0ZtildeDr}), we obtain 
\begin{equation}
\label{diftildecalH0ZtildeDZn}
\left\Vert[\tilde{\mathcal{H}}_{0,Z},\tilde{D}(r)]-[\tilde{\mathcal{H}}_{0,Z},\tilde{D}(r)]_{Z_n}\right\Vert
\le C_{[\tilde{\mathcal{H}}_{0,Z},\tilde{D}]}'\Vert\tilde{\mathcal{H}}_{0,Z}\Vert u_{\tilde{\mu}/2}(n/2r_1)
\end{equation}
with the positive constant $C_{[\tilde{\mathcal{H}}_{0,Z},\tilde{D}]}'$.

\Section{Local boundedness of the interaction $\tilde{\mathcal{W}}^{(2)}(s)$}

We recall the expression, 
$$
\tilde{\mathcal{W}}_Z^{(2)}(s)=\int_0^s ds'\mathcal{R}_s(:\tilde{U}^\ast(s')i[\tilde{\mathcal{H}}_{0,Z},\tilde{D}(s')]
\tilde{U}(s'):).
$$
In order to prove the inequality (\ref{tildecalWZnsboundedness}) for $\tilde{\mathcal{W}}_Z^{(2)}(s)$, 
it is sufficient to prove the inequality (\ref{difcalRstildeUtildecalH0ZtildeDtildeU}) below which is an analogue 
of (\ref{RUVZPidifbound}) in the case of $\tilde{\mathcal{W}}_{Z,n}^{(1)}(s)$. 
Actually, we can construct $\tilde{\mathcal{W}}_{Z,n}^{(2)}(s)$ which satisfies the desired properties 
in the same way as in the case of $\tilde{\mathcal{W}}_{Z,n}^{(1)}(s)$. 

For $Z\subset{\Lambda}$, we define 
$$
\mathcal{Z}_n:=\{z\;|\; {\rm dist}(z,Z_n)\le n\}
$$
and 
$$
\mathcal{Z}_n':=\{z\;|\; {\rm dist}(z,\mathcal{Z}_n)\le n\}.
$$
Note that 
\begin{eqnarray}
& &\left\Vert \mathcal{R}_s(\tilde{U}^\ast(r)[\tilde{\mathcal{H}}_{0,Z},\tilde{D}(r)]\tilde{U}(r))
-\Pi_{\mathcal{Z}_n'}(\mathcal{R}_s(\tilde{U}^\ast(r)[\tilde{\mathcal{H}}_{0,Z},\tilde{D}(r)]\tilde{U}(r)))\right\Vert
\nonumber\\
&\le&\left\Vert \mathcal{R}_s(\tilde{U}^\ast(r)[\tilde{\mathcal{H}}_{0,Z},\tilde{D}(r)]\tilde{U}(r))
-\mathcal{R}_s(\tilde{U}^\ast(r)[\tilde{\mathcal{H}}_{0,Z},\tilde{D}(r)]_{Z_n}\tilde{U}(r))\right\Vert\nonumber\\
&+&\left\Vert \mathcal{R}_s(\tilde{U}^\ast(r)[\tilde{\mathcal{H}}_{0,Z},\tilde{D}(r)]_{Z_n}\tilde{U}(r))
-\mathcal{R}_s(\Pi_{\mathcal{Z}_n}(\tilde{U}^\ast(r)[\tilde{\mathcal{H}}_{0,Z},\tilde{D}(r)]_{Z_n}\tilde{U}(r)))\right\Vert
\nonumber\\
&+&\left\Vert \mathcal{R}_s(\Pi_{\mathcal{Z}_n}(\tilde{U}^\ast(r)[\tilde{\mathcal{H}}_{0,Z},\tilde{D}(r)]_{Z_n}\tilde{U}(r)))
-\Pi_{\mathcal{Z}_n'}(\mathcal{R}_s(\Pi_{\mathcal{Z}_n}(\tilde{U}^\ast(r)[\tilde{\mathcal{H}}_{0,Z},
\tilde{D}(r)]_{Z_n}\tilde{U}(r))))\right\Vert\nonumber\\
&+&\left\Vert \Pi_{\mathcal{Z}_n'}(\mathcal{R}_s(\Pi_{\mathcal{Z}_n}(\tilde{U}^\ast(r)[\tilde{\mathcal{H}}_{0,Z},
\tilde{D}(r)]_{Z_n}\tilde{U}(r))))
-\Pi_{\mathcal{Z}_n'}(\mathcal{R}_s(\tilde{U}^\ast(r)[\tilde{\mathcal{H}}_{0,Z},
\tilde{D}(r)]_{Z_n}\tilde{U}(r)))\right\Vert\nonumber\\
&+&\left\Vert \Pi_{\mathcal{Z}_n'}(\mathcal{R}_s(\tilde{U}^\ast(r)[\tilde{\mathcal{H}}_{0,Z},
\tilde{D}(r)]_{Z_n}\tilde{U}(r)))
-\Pi_{\mathcal{Z}_n'}(\mathcal{R}_s(\tilde{U}^\ast(r)[\tilde{\mathcal{H}}_{0,Z},
\tilde{D}(r)]\tilde{U}(r)))\right\Vert.
\end{eqnarray}
By using the inequality (\ref{diftildecalH0ZtildeDZn}), the first term in the right-hand side is estimated as follows: 
\begin{eqnarray}
& &\left\Vert \mathcal{R}_s(\tilde{U}^\ast(r)[\tilde{\mathcal{H}}_{0,Z},\tilde{D}(r)]\tilde{U}(r))
-\mathcal{R}_s(\tilde{U}^\ast(r)[\tilde{\mathcal{H}}_{0,Z},\tilde{D}(r)]_{Z_n}\tilde{U}(r))\right\Vert\nonumber\\
&\le& \left\Vert \tilde{U}^\ast(r)[\tilde{\mathcal{H}}_{0,Z},\tilde{D}(r)]\tilde{U}(r)
-\tilde{U}^\ast(r)[\tilde{\mathcal{H}}_{0,Z},\tilde{D}(r)]_{Z_n}\tilde{U}(r)\right\Vert\nonumber\\
&\le& \left\Vert [\tilde{\mathcal{H}}_{0,Z},\tilde{D}(r)]
-[\tilde{\mathcal{H}}_{0,Z},\tilde{D}(r)]_{Z_n}\right\Vert\le C_{[\tilde{\mathcal{H}}_{0,Z},\tilde{D}]}'
\Vert \tilde{\mathcal{H}}_{0,Z}\Vert u_{\tilde{\mu}/2}(n/2r_1).
\end{eqnarray}
Similarly, the fifth term satisfies 
\begin{eqnarray}
& &\left\Vert \Pi_{\mathcal{Z}_n'}(\mathcal{R}_s(\tilde{U}^\ast(r)[\tilde{\mathcal{H}}_{0,Z},
\tilde{D}(r)]_{Z_n}\tilde{U}(r)))
-\Pi_{\mathcal{Z}_n'}(\mathcal{R}_s(\tilde{U}^\ast(r)[\tilde{\mathcal{H}}_{0,Z},
\tilde{D}(r)]\tilde{U}(r)))\right\Vert\nonumber\\
&\le&\left\Vert [\tilde{\mathcal{H}}_{0,Z},\tilde{D}(r)]_{Z_n}-[\tilde{\mathcal{H}}_{0,Z},\tilde{D}(r)]\right\Vert
\le C_{[\tilde{\mathcal{H}}_{0,Z},\tilde{D}]}'\Vert \tilde{\mathcal{H}}_{0,Z}\Vert u_{\tilde{\mu}/2}(n/2r_1).
\end{eqnarray}
For the second and fourth terms, we have 
\begin{eqnarray}
& &\left\Vert \mathcal{R}_s(\tilde{U}^\ast(r)[\tilde{\mathcal{H}}_{0,Z},\tilde{D}(r)]_{Z_n}\tilde{U}(r))
-\mathcal{R}_s(\Pi_{\mathcal{Z}_n}(\tilde{U}^\ast(r)[\tilde{\mathcal{H}}_{0,Z},\tilde{D}(r)]_{Z_n}\tilde{U}(r)))\right\Vert\nonumber\\
&\le&\left\Vert \tilde{U}^\ast(r)[\tilde{\mathcal{H}}_{0,Z},\tilde{D}(r)]_{Z_n}\tilde{U}(r)
-\Pi_{\mathcal{Z}_n}(\tilde{U}^\ast(r)[\tilde{\mathcal{H}}_{0,Z},\tilde{D}(r)]_{Z_n}\tilde{U}(r))\right\Vert\nonumber\\
&\le& 2C_{\tilde{U}}\mathcal{C}_1\kappa_d\Vert [\tilde{\mathcal{H}}_{0,Z},\tilde{D}(r)]_{Z_n}\Vert\cdot n^d F_{\mu_1,r_0}(n)
\end{eqnarray}
and  
\begin{eqnarray}
& &\left\Vert \Pi_{\mathcal{Z}_n'}(\mathcal{R}_s(\Pi_{\mathcal{Z}_n}(\tilde{U}^\ast(r)[\tilde{\mathcal{H}}_{0,Z},
\tilde{D}(r)]_{Z_n}\tilde{U}(r))))
-\Pi_{\mathcal{Z}_n'}(\mathcal{R}_s(\tilde{U}^\ast(r)[\tilde{\mathcal{H}}_{0,Z},
\tilde{D}(r)]_{Z_n}\tilde{U}(r)))\right\Vert\nonumber\\
&\le&\left\Vert \Pi_{\mathcal{Z}_n}(\tilde{U}^\ast(r)[\tilde{\mathcal{H}}_{0,Z},\tilde{D}(r)]_{Z_n}\tilde{U}(r))
-\tilde{U}^\ast(r)[\tilde{\mathcal{H}}_{0,Z},\tilde{D}(r)]_{Z_n}\tilde{U}(r)\right\Vert\nonumber\\
&\le& 2C_{\tilde{U}}\mathcal{C}_1\kappa_d\Vert [\tilde{\mathcal{H}}_{0,Z},\tilde{D}(r)]_{Z_n}\Vert\cdot n^d F_{\mu_1,r_0}(n)
\end{eqnarray}
in the same way as that for deriving the inequality (\ref{UVUPibound}), where we have used $|Z_n|\le \kappa_d|Z|n^d=2\kappa_d n^d$.  
Further, by using the inequality (\ref{calRsPibound}) in the same way as that for deriving the inequality (\ref{calRsPiUVZUdif}), 
the remaining third term is evaluated as follows: 
\begin{eqnarray}
& &\left\Vert \mathcal{R}_s(\Pi_{\mathcal{Z}_n}(\tilde{U}^\ast(r)[\tilde{\mathcal{H}}_{0,Z},\tilde{D}(r)]_{Z_n}\tilde{U}(r)))
-\Pi_{\mathcal{Z}_n'}(\mathcal{R}_s(\Pi_{\mathcal{Z}_n}(\tilde{U}^\ast(r)[\tilde{\mathcal{H}}_{0,Z},
\tilde{D}(r)]_{Z_n}\tilde{U}(r))))\right\Vert\nonumber\\
&\le&8\Vert [\tilde{\mathcal{H}}_{0,Z},\tilde{D}(r)]_{Z_n}\Vert
\left[W_\gamma(T(n))+\mathcal{C}_4\kappa_dn^{2d}\cdot F_{\mu_3/2,2r_0}(n)\right], 
\end{eqnarray}
where we have used $|\mathcal{Z}_n|\le \kappa_d|Z_n|n^d\le (\kappa_d)^2 |Z|n^{2d}=2(\kappa_d)^2 n^{2d}$. 
Combining these bounds with (\ref{commutildecalH0ZtildeDZnfinite}), we obtain 
\begin{equation}
\label{difcalRstildeUtildecalH0ZtildeDtildeU}
\left\Vert \mathcal{R}_s(\tilde{U}^\ast(r)[\tilde{\mathcal{H}}_{0,Z},\tilde{D}(r)]\tilde{U}(r))
-\Pi_{\mathcal{Z}_n'}(\mathcal{R}_s(\tilde{U}^\ast(r)[\tilde{\mathcal{H}}_{0,Z},\tilde{D}(r)]\tilde{U}(r)))\right\Vert
\le \Vert\tilde{\mathcal{H}}_{0,Z}\Vert G_3(n)
\end{equation}
with a sub-exponentially decaying function $G_3(n)$.

\Section{Local approximation of $[\tilde{U}^\ast(r)[\tilde{\mathcal{H}}_0,\tilde{D}(r)]\tilde{U}(r),\tilde{\mathcal{H}}_{0,Z}]$}

In order to deal with $\tilde{\mathcal{W}}_Z^{(3)}(s)$, we have to consider local approximation of 
the two operators, $[\tilde{U}^\ast(s)\tilde{V}\tilde{U}(s),\tilde{\mathcal{H}}_{0,Z}]$ and 
$[\tilde{U}^\ast(r)[\tilde{\mathcal{H}}_0,\tilde{D}(r)]\tilde{U}(r),\tilde{\mathcal{H}}_{0,Z}]$ 
in the expression (\ref{tildecalWZ3s}) of $\tilde{\mathcal{W}}_Z^{(3)}(s)$ with $\tilde{\mathcal{M}}_{t,Z}(s)$ 
of (\ref{tildecalMtZs}). In this appendix, we will treat the latter operator. 

To begin with, we recall the facts \cite{BMNS} that for any operator $\mathcal{A}_Z$ with a compact support $Z$, 
the following decomposition and the inequality hold: 
\begin{equation}
\label{decomptildeUscalAZtildeUs}
\tilde{U}^\ast(s)\mathcal{A}_Z\tilde{U}(s)=\sum_{n=0}^\infty \Delta_{\tilde{U}}^n(\mathcal{A}_Z,s),
\end{equation}
where the support of $\Delta_{\tilde{U}}^n(\mathcal{A}_Z,s)$ is contained in $Z_n$, and 
the operator $\Delta_{\tilde{U}}^n(\mathcal{A}_Z,s)$ satisfies 
\begin{equation}
\label{Deltanbound2}
\Vert \Delta_{\tilde{U}}^n(\mathcal{A}_Z,s)\Vert \le \tilde{C}_{\tilde{U}}|Z|\Vert\mathcal{A}_Z\Vert F_{\mu_1,r_0}(n-1). 
\end{equation}
Here, $\tilde{C}_{\tilde{U}}$ is the positive constant, and $F_{\mu_1,r_0}(n)$ is the function in (\ref{Deltanbound}); 
for simplicity, we take $F_{\mu_1,r_0}(-1)$ to be some positive constant. 
These facts were already explained in Appendix~\ref{Sec:UVZUDelta} for $\mathcal{A}_Z=\tilde{V}_Z$. 
Actually, these hold for  \cite{BMNS} any operator $\mathcal{A}_Z$. 
By using the above decomposition (\ref{decomptildeUscalAZtildeUs}) and the decomposition (\ref{tildeDr}) of $\tilde{D}(r)$, one has  
\begin{eqnarray}
[\tilde{U}^\ast(r)[\tilde{\mathcal{H}}_0,\tilde{D}(r)]\tilde{U}(r),\tilde{\mathcal{H}}_{0,Z}]
&=&\sum_{X}\sum_Y \sum_\ell [\tilde{U}^\ast(r)[\tilde{\mathcal{H}}_{0,X},\Delta_{\tilde{D}}^\ell(\tilde{V}_Y,r)]
\tilde{U}(r),\tilde{\mathcal{H}}_{0,Z}]\nonumber\\
&=&\sum_{X}\sum_Y \sum_\ell\sum_k [\Delta_{\tilde{U}}^k([\tilde{\mathcal{H}}_{0,X},\Delta_{\tilde{D}}^\ell(\tilde{V}_Y,r)],r),
\tilde{\mathcal{H}}_{0,Z}],\nonumber\\
\end{eqnarray}
where we have written 
\begin{equation}
\tilde{U}^\ast(r)[\tilde{\mathcal{H}}_{0,X},\Delta_{\tilde{D}}^\ell(\tilde{V}_Y,r)]
\tilde{U}(r)=\sum_{k=0}^\infty 
\Delta_{\tilde{U}}^k([\tilde{\mathcal{H}}_{0,X},\Delta_{\tilde{D}}^\ell(\tilde{V}_Y,r)],r).
\end{equation}
The support of the summand $\Delta_{\tilde{U}}^k([\tilde{\mathcal{H}}_{0,X},\Delta_{\tilde{D}}^\ell(\tilde{V}_Y,r)],r)$ 
in the right-hand side is contained in $(X\cup Y_\ell)_k$ if $X\cap Y_\ell\ne \emptyset$. In fact, if $X\cap Y_\ell=\emptyset$, 
then the summand is vanishing. 

We write 
$$
\Delta^{k,\ell}(X,Y,Z)=[\Delta_{\tilde{U}}^k([\tilde{\mathcal{H}}_{0,X},\Delta_{\tilde{D}}^\ell(\tilde{V}_Y,r)],r),
\tilde{\mathcal{H}}_{0,Z}]
$$
for short. Then, the above quantity can be written
\begin{equation}
[\tilde{U}^\ast(r)[\tilde{\mathcal{H}}_0,\tilde{D}]\tilde{U}(r),\tilde{\mathcal{H}}_{0,Z}]
=\sum_{X}\sum_Y \sum_\ell\sum_k\Delta^{k,\ell}(X,Y,Z).
\end{equation}
We also define 
\begin{equation}
\label{UH_0DUH0Zn}
[\tilde{U}^\ast(r)[\tilde{\mathcal{H}}_0,\tilde{D}]\tilde{U}(r),\tilde{\mathcal{H}}_{0,Z}]_{Z_n}:=
\sum_{X}\sum_Y \sum_\ell\sum_k\Delta^{k,\ell}(X,Y,Z)
{\rm Ind}[{\rm supp}\; \Delta^{k,\ell}(X,Y,Z)\subset Z_n].
\end{equation}
Then, one has 
\begin{eqnarray}
\label{UH_0DUH0dif}
& &\Vert [\tilde{U}^\ast(r)[\tilde{\mathcal{H}}_0,\tilde{D}]\tilde{U}(r),\tilde{\mathcal{H}}_{0,Z}]
-[\tilde{U}^\ast(r)[\tilde{\mathcal{H}}_0,\tilde{D}]\tilde{U}(r),\tilde{\mathcal{H}}_{0,Z}]_{Z_n}\Vert\nonumber\\
&\le&\sum_{X}\sum_Y \sum_\ell\sum_k \Vert \Delta^{k,\ell}(X,Y,Z)\Vert\cdot
{\rm Ind}[{\rm supp}\; \Delta^{k,\ell}(X,Y,Z)\not\subset {Z}_n].\nonumber\\
\end{eqnarray}
As mentioned above, 
${\rm supp}\;\Delta_{\tilde{U}}^k([\tilde{\mathcal{H}}_{0,X},\Delta_{\tilde{D}}^\ell(\tilde{V}_Y,r)])=(X\cup Y_\ell)_k$ 
{from} their definitions.  
When the summand in the right-hand of (\ref{UH_0DUH0dif}) is non-vanishing, there exist sites, $z$ and $\zeta$, such that 
$z\in Z\cap(X\cup Y_\ell)_k$ and $\zeta\in (X\cup Y_\ell)_k\backslash {Z}_n$. 
Then, one has ${\rm dist}(z,\zeta)>n$. Further, there exists $x\in X\cup Y_\ell$ 
such that ${\rm dist}(z,x)={\rm dist}(z,X\cup Y_\ell)\le k$. 
Similarly, there exists $y\in X\cup Y_\ell$ such that ${\rm dist}(\zeta,y)={\rm dist}(\zeta, X\cup Y_\ell)\le k$. 
Clearly, there are the following four cases: 
\begin{description}
\item{(i)}\quad $x\in X\quad\mbox{and} \quad y\in Y_\ell$
\item{(ii)}\quad $x\in Y_\ell\quad\mbox{and} \quad y\in X$
\item{(iii)}\quad $x\in X\quad\mbox{and} \quad y\in X$
\item{(iv)}\quad $x\in Y_\ell\quad\mbox{and} \quad y\in Y_\ell$
\end{description}

On the other hand, if $[\tilde{\mathcal{H}}_{0,X},\Delta_{\tilde{D}}^\ell(\tilde{V}_Y,r)]$ is vanishing, 
then the summand in the right-hand side of (\ref{UH_0DUH0dif}) is vanishing as well. 
Therefore, we can assume that there exists $u\in X\cap Y_\ell$  in all the cases.  
\smallskip

Let us consider first Case~(i). Since $u,y\in Y_\ell$, there exist $v,w\in Y$ 
such that ${\rm dist}(u,v)={\rm dist}(u,Y)\le \ell$ and ${\rm dist}(y,w)={\rm dist}(y,Y)\le \ell$. 
In this case, we have 
\begin{eqnarray*}
n&<&{\rm dist}(z,\zeta)\\
&\le& {\rm dist}(z,x)+{\rm dist}(x,u)+{\rm dist}(u,v)+{\rm dist}(v,w)+{\rm dist}(w,y)+{\rm dist}(y,\zeta)\nonumber\\
&\le& 2k+2\ell+{\rm dist}(x,u)+{\rm dist}(v,w).
\end{eqnarray*}
Note that 
\begin{eqnarray}
\label{normDeltakellXYZ}
\Vert \Delta^{k,\ell}(X,Y,Z)\Vert
&=&\Vert[\Delta_{\tilde{U}}^k([\tilde{\mathcal{H}}_{0,X},\Delta_{\tilde{D}}^\ell(\tilde{V}_Y,r)]),
\tilde{\mathcal{H}}_{0,Z}]\Vert\nonumber\\
&\le& 2\Vert\tilde{\mathcal{H}}_{0,Z}\Vert\cdot\Vert 
\Delta_{\tilde{U}}^k([\tilde{\mathcal{H}}_{0,X},\Delta_{\tilde{D}}^\ell(\tilde{V}_Y,r)])\Vert\nonumber\\
&\le& 2\tilde{C}_{\tilde{U}}\Vert\tilde{\mathcal{H}}_{0,Z}\Vert\cdot|X\cup Y_\ell|\cdot
\Vert[\tilde{\mathcal{H}}_{0,X},\Delta_{\tilde{D}}^\ell(\tilde{V}_Y,r)]\Vert \cdot F_{\mu_1,r_0}(k-1)\nonumber\\
&\le&8\tilde{C}_{\tilde{U}}\kappa_d\Vert\tilde{\mathcal{H}}_{0,Z}\Vert\cdot|Y|\ell^d\cdot \Vert \tilde{\mathcal{H}}_{0,X}\Vert\cdot
\Vert\Delta_{\tilde{D}}^\ell(\tilde{V}_Y,r)\Vert \cdot F_{\mu_1,r_0}(k-1)\nonumber\\
&\le&8\tilde{C}_{\tilde{U}}C_{\tilde{D}}\kappa_d\Vert\tilde{\mathcal{H}}_{0,Z}\Vert\cdot|Y|^2\ell^d\cdot 
\Vert \tilde{\mathcal{H}}_{0,X}\Vert\cdot\Vert \tilde{V}_Y\Vert \cdot F_{\mu_1,r_0}(k-1)G_{\tilde{D}}(\ell),\nonumber\\
\end{eqnarray}
where we have used (\ref{Deltanbound2}) and (\ref{DeltaDmbound}). By using this inequality, we have 
\begin{eqnarray}
\label{DeltakellXYZbound1}
& &\sum_{X}\sum_Y \sum_\ell\sum_k \Vert \Delta^{k,\ell}(X,Y,Z)\Vert\cdot
{\rm Ind}[{\rm supp}\; \Delta^{k,\ell}(X,Y,Z)\not\subset{Z}_n]\cdot{\rm Ind}[\mbox{Case~(i) occurs}]\nonumber\\
&\le& \sum_{z,x,u,v,w,y,\zeta}\sum_{X\ni x,u}\sum_{Y\ni v,w}
 \sum_\ell\sum_k \Vert \Delta^{k,\ell}(X,Y,Z)\Vert\cdot{\rm Ind}[z\in Z]\nonumber\\
&\times&{\rm Ind}[{\rm dist}(x,z)\le k]\cdot{\rm Ind}[{\rm dist}(\zeta,y)\le k]\cdot
{\rm Ind}[{\rm dist}(u,v)\le \ell]\cdot{\rm Ind}[{\rm dist}(y,w)\le \ell]\nonumber\\
&\times&{\rm Ind}[n<2k+2\ell+{\rm dist}(x,u)+{\rm dist}(v,w)]\nonumber\\
&\le&8\tilde{C}_{\tilde{U}}C_{\tilde{D}}\kappa_d\Vert\tilde{\mathcal{H}}_{0,Z}\Vert
\sum_{z,x,u,v,w,y,\zeta}\sum_{X\ni x,u}\sum_{Y\ni v,w}
 \sum_{\ell,k} \ell^d\cdot\Vert \tilde{\mathcal{H}}_{0,X}\Vert|Y|^2
\Vert \tilde{V}_Y\Vert F_{\mu_1,r_0}(k-1)G_{\tilde{D}}(\ell)\nonumber\\
&\times&{\rm Ind}[z\in Z]\cdot{\rm Ind}[{\rm dist}(x,z)\le k]\cdot{\rm Ind}[{\rm dist}(\zeta,y)\le k]\nonumber\\
&\times&{\rm Ind}[{\rm dist}(u,v)\le \ell]\cdot
{\rm Ind}[{\rm dist}(y,w)\le \ell]\cdot{\rm Ind}[n<2k+2\ell+{\rm dist}(x,u)+{\rm dist}(v,w)].\nonumber\\
\end{eqnarray}
Further, by using (\ref{tildecalH0Zexpdecay}) and (\ref{boundedtildeVX}) for the summand in the above right-hand side, we obtain 
\begin{eqnarray}
\label{DeltakellXYZbound2}
& &\sum_{x,u,v,w,y,\zeta} \sum_{X\ni x,u}\sum_{Y\ni v,w}
 \sum_{\ell,k} \ell^d\cdot\Vert \tilde{\mathcal{H}}_{0,X}\Vert\cdot|Y|^2
\Vert \tilde{V}_Y\Vert F_{\mu_1,r_0}(k-1)G_{\tilde{D}}(\ell)\nonumber\\
&\times&{\rm Ind}[{\rm dist}(x,z)\le k]\cdot{\rm Ind}[{\rm dist}(\zeta,y)\le k]\cdot{\rm Ind}[{\rm dist}(u,v)\le \ell]\cdot
{\rm Ind}[{\rm dist}(y,w)\le \ell]\nonumber\\
&\times&{\rm Ind}[n<2k+2\ell+{\rm dist}(x,u)+{\rm dist}(v,w)]\nonumber\\
&\le& C_{t,h}\tilde{C}_{h,V}\sum_{x,u,v,w,y,\zeta} 
 \sum_{\ell,k} \ell^d e^{-m_{t,h}{\rm dist}(x,u)}e^{-\tilde{m}_{h,V}{\rm dist}(v,w)}
F_{\mu_1,r_0}(k-1)G_{\tilde{D}}(\ell)\nonumber\\
&\times&{\rm Ind}[{\rm dist}(x,z)\le k]\cdot
{\rm Ind}[{\rm dist}(\zeta,y)\le k]\cdot{\rm Ind}[{\rm dist}(u,v)\le \ell]\cdot
{\rm Ind}[{\rm dist}(y,w)\le \ell]\cdot\nonumber\\
&\times&{\rm Ind}[n<2k+2\ell+{\rm dist}(x,u)+{\rm dist}(v,w)]\nonumber\\
&\le& C_{t,h}\tilde{C}_{h,V}(\kappa_d)^2\sum_{x,u,v,w}
 \sum_{\ell,k} \ell^{2d}k^d e^{-m_{t,h}{\rm dist}(x,u)}e^{-\tilde{m}_{h,V}{\rm dist}(v,w)}
F_{\mu_1,r_0}(k-1)G_{\tilde{D}}(\ell)\nonumber\\
&\times&{\rm Ind}[{\rm dist}(x,z)\le k]\cdot{\rm Ind}[{\rm dist}(u,v)\le \ell]\cdot
{\rm Ind}[n<2k+2\ell+{\rm dist}(x,u)+{\rm dist}(v,w)].
\nonumber\\
\end{eqnarray}
In order to further estimate the right-hand side, we recall the follow facts:  
For $a,b,\mu>0$, 
$$
e^{-\mu a/\log^2 a}e^{-\mu b/\log^2 b}\le \exp[-\mu(a+b)/\log^2(a+b)].
$$
By using this inequality, one can show that there exist sub-exponentially decaying functions, $\tilde{F}_1(k)$, 
$\tilde{G}_1(\ell)$ and $\tilde{G}_2(n)$, such that they satisfy   
\begin{eqnarray}
\label{additiveinequ}
& &e^{-m_{t,h}{\rm dist}(x,u)}e^{-\tilde{m}_{h,V}{\rm dist}(v,w)}\cdot
F_{\mu_1,r_0}(k-1)G_{\tilde{D}}(\ell)\nonumber\\
&\le& 
e^{-m_{t,h}{\rm dist}(x,u)/2}e^{-\tilde{m}_{h,V}{\rm dist}(v,w)/2}\cdot\tilde{F}_1(k)\tilde{G}_1(\ell)\tilde{G}_2(n)
\end{eqnarray}
under the condition, $n<2k+2\ell+{\rm dist}(x,u)+{\rm dist}(v,w)$. 
By relying on this bound, we have 
\begin{eqnarray}
\label{DeltakellXYZbound3}
& &\sum_{x,u,v,w}\sum_{\ell,k} \ell^{2d} k^d e^{-m_{t,h}{\rm dist}(x,u)}e^{-\tilde{m}_{h,V}{\rm dist}(v,w)}\cdot
F_{\mu_1,r_0}(k-1)G_{\tilde{D}}(\ell)\nonumber\\
&\times&{\rm Ind}[{\rm dist}(x,z)\le k]\cdot{\rm Ind}[{\rm dist}(u,v)\le \ell]\cdot
{\rm Ind}[n<2k+2\ell+{\rm dist}(x,u)+{\rm dist}(v,w)]\nonumber\\
&\le& \sum_{x,u,v,w} 
 \sum_{\ell,k} \ell^{2d}k^d e^{-m_{t,h}{\rm dist}(x,u)/2}e^{-\tilde{m}_{h,V}{\rm dist}(v,w)/2}\cdot
\tilde{F}_1(k)\tilde{G}_1(\ell)\cdot\tilde{G}_2(n)\nonumber\\
&\times&{\rm Ind}[{\rm dist}(x,z)\le k]\cdot{\rm Ind}[{\rm dist}(u,v)\le \ell]\nonumber\\
&\le&\hat{K}_{h,V}\sum_{x,u,v}
 \sum_{\ell,k} \ell^{2d}k^d e^{-m_{t,h}{\rm dist}(x,u)/2}\cdot\tilde{F}_1(k)\tilde{G}_1(\ell)\cdot\tilde{G}_2(n)\cdot
{\rm Ind}[{\rm dist}(x,z)\le k]\nonumber\\
&\times&{\rm Ind}[{\rm dist}(u,v)\le \ell]\nonumber\\
&\le& \kappa_d\hat{K}_{h,V}\sum_{x,u}
 \sum_{\ell,k} \ell^{3d}k^d e^{-m_{t,h}{\rm dist}(x,u)/2}\cdot\tilde{F}_1(k)\tilde{G}_1(\ell)\cdot\tilde{G}_2(n)\cdot
{\rm Ind}[{\rm dist}(x,z)\le k]\nonumber\\
&\le& \kappa_d\hat{K}_{h,V}\hat{K}_{t,h}\sum_{x}
 \sum_{\ell,k} \ell^{3d}k^d \cdot\tilde{F}_1(k)\tilde{G}_1(\ell)\cdot\tilde{G}_2(n)\cdot
{\rm Ind}[{\rm dist}(x,z)\le k]\nonumber\\
&\le& (\kappa_d)^2\hat{K}_{h,V}\hat{K}_{t,h}
\sum_{\ell,k} \ell^{3d}k^{2d} \cdot\tilde{F}_1(k)\tilde{G}_1(\ell)\cdot\tilde{G}_2(n)\le \hat{K}_{t,h,V}\tilde{G}_2(n),
\end{eqnarray}
where the positive constant $\hat{K}_{h,V}$ is given by (\ref{hatKhV}), and 
$$
\hat{K}_{t,h}:=\sum_{u\in\ze^d}e^{-m_{t,h}{\rm dist}(0,u)/2},
$$
and 
$$
\hat{K}_{t,h,V}:=(\kappa_d)^2\hat{K}_{h,V}\hat{K}_{t,h}
\sum_{\ell,k} \ell^{3d}k^{2d} \cdot\tilde{F}_1(k)\tilde{G}_1(\ell).
$$
Combining (\ref{DeltakellXYZbound1}), (\ref{DeltakellXYZbound2}) and (\ref{DeltakellXYZbound3}), we obtain 
\begin{eqnarray}
\label{DeltakellXYZboundi}
& &\sum_{X,Y}\sum_{\ell,k}\Vert \Delta^{k,\ell}(X,Y,Z)\Vert\cdot
{\rm Ind}[{\rm supp}\; \Delta^{k,\ell}(X,Y,Z)\not\subset{Z}_n]\cdot{\rm Ind}[\mbox{Case~(i) occurs}]\nonumber\\
& &\le C_{\Delta}^{\rm (i)}\Vert \tilde{\mathcal{H}}_{0,Z}\Vert \tilde{G}_2(n)
\end{eqnarray}
with the positive constant 
$C_{\Delta}^{\rm (i)}:=16\tilde{C}_{\tilde{U}}C_{\tilde{D}} C_{t,h}\tilde{C}_{h,V}(\kappa_d)^2 \hat{K}_{t,h,V}$. 
\medskip

Case~(ii) can be treated in the same way.
\medskip 

Let us consider Case~(iii). Since the set $X$ at most consists two sites, if $x\ne y$, then $x$ or $y$ is contained in $Y_\ell$ 
because of the condition $[\tilde{\mathcal{H}}_{0,X},\Delta_{\tilde{D}}^\ell(\tilde{V}_Y,r)]\ne 0$. 
This case is reduced to Case~(i) or (ii). Therefore, we assume $x=y$. For short, we call Case~(iii) with $x=y$ Case~(iii)$'$. 
For $u\in X\cap Y_\ell$, there exists $v\in Y$ such that ${\rm dist}(u,v)={\rm dist}(u,Y)\le \ell$.
Note that 
$$
n<{\rm dist}(z,\zeta)\le {\rm dist}(z,x)+{\rm dist}(x,\zeta)\le 2k.
$$
{From} these observations, in the same way as in Case~(i), we have 
\begin{eqnarray}
\label{DeltakellXYZbound4}
& &\sum_{X,Y} \sum_{\ell,k} \Vert \Delta^{k,\ell}(X,Y,Z)\Vert\cdot
{\rm Ind}[{\rm supp}\; \Delta^{k,\ell}(X,Y,Z)\not\subset{Z}_n]\cdot{\rm Ind}[\mbox{Case~(iii)$'$ occurs}]\nonumber\\
&\le&\sum_{z,x,u,v,\zeta}\sum_{X\ni x,u}\sum_{Y\ni v}
 \sum_{\ell,k} \Vert \Delta^{k,\ell}(X,Y,Z)\Vert\cdot{\rm Ind}[z\in Z]\nonumber\\
&\times&{\rm Ind}[{\rm dist}(x,z)\le k]\cdot{\rm Ind}[{\rm dist}(\zeta,x)\le k]\cdot
{\rm Ind}[{\rm dist}(u,v)\le \ell]\cdot{\rm Ind}[n<2k]\nonumber\\
&\le&8\tilde{C}_{\tilde{U}}C_{\tilde{D}}\Vert\tilde{\mathcal{H}}_{0,Z}\Vert\sum_{z,x,u,v,\zeta}\sum_{X\ni x,u}\sum_{Y\ni v}
 \sum_{\ell,k}\; \ell^d\cdot\Vert \tilde{\mathcal{H}}_{0,X}\Vert\cdot|Y|^2
\Vert \tilde{V}_Y\Vert F_{\mu_1,r_0}(k-1)G_{\tilde{D}}(\ell)\nonumber\\
&\times&{\rm Ind}[z\in Z]\cdot{\rm Ind}[{\rm dist}(x,z)\le k]\cdot{\rm Ind}[{\rm dist}(\zeta,x)\le k]\cdot
{\rm Ind}[{\rm dist}(u,v)\le \ell]\cdot{\rm Ind}[n<2k].\nonumber\\
\end{eqnarray}
Further, by using (\ref{tildecalH0Zexpdecay}) and (\ref{boundedtildeVX2}), one has 
\begin{eqnarray}
\label{DeltakellXYZbound5}
& &\sum_{x,u,v,\zeta}\sum_{X\ni x,u}\sum_{Y\ni v}
 \sum_{\ell,k} \ell^d\cdot\Vert \tilde{\mathcal{H}}_{0,X}\Vert\cdot|Y|^2
\Vert \tilde{V}_Y\Vert F_{\mu_1,r_0}(k-1)G_{\tilde{D}}(\ell)\nonumber\\
&\times&{\rm Ind}[{\rm dist}(x,z)\le k]\cdot{\rm Ind}[{\rm dist}(\zeta,x)\le k]\cdot
{\rm Ind}[{\rm dist}(u,v)\le \ell]\cdot{\rm Ind}[n<2k]\nonumber\\
&\le&C_{t,h}\tilde{C}_{h,V}\sum_{x,u,v,\zeta}\sum_{\ell,k} \ell^d \cdot e^{-m_{t,h}{\rm dist}(x,u)}
\cdot F_{\mu_1,r_0}(k-1)G_{\tilde{D}}(\ell)\nonumber\\
&\times&{\rm Ind}[{\rm dist}(x,z)\le k]\cdot{\rm Ind}[{\rm dist}(\zeta,x)\le k]\cdot
{\rm Ind}[{\rm dist}(u,v)\le \ell]\cdot{\rm Ind}[n<2k].\nonumber\\
\end{eqnarray}
The quantity in the right-hand side can be easily calculated as follows: 
\begin{eqnarray}
& &\sum_{x,u,v,\zeta}\sum_{\ell,k} \ell^d \cdot e^{-m_{t,h}{\rm dist}(x,u)}
\cdot F_{\mu_1,r_0}(k-1)G_{\tilde{D}}(\ell)\nonumber\\
&\times&{\rm Ind}[{\rm dist}(x,z)\le k]\cdot{\rm Ind}[{\rm dist}(\zeta,x)\le k]\cdot
{\rm Ind}[{\rm dist}(u,v)\le \ell]\cdot{\rm Ind}[n<2k].\nonumber\\
&\le&\kappa_d \sum_{x,u,v}\sum_{\ell,k} \ell^d k^d \cdot e^{-m_{t,h}{\rm dist}(x,u)}
\cdot F_{\mu_1,r_0}(k-1)G_{\tilde{D}}(\ell)\nonumber\\
&\times&{\rm Ind}[{\rm dist}(x,z)\le k]\cdot{\rm Ind}[{\rm dist}(u,v)\le \ell]\cdot{\rm Ind}[n<2k].\nonumber\\
&\le& (\kappa_d)^2 \sum_{x,u}\sum_{\ell,k} \ell^{2d} k^d e^{-m_{t,h}{\rm dist}(x,u)}
F_{\mu_1,r_0}(k-1)G_{\tilde{D}}(\ell)\cdot{\rm Ind}[{\rm dist}(x,z)\le k]\cdot{\rm Ind}[n<2k]\nonumber\\
&\le& (\kappa_d)^2 \hat{K}_{t,h}'\sum_{x}\sum_{\ell,k} \ell^{2d} k^d 
F_{\mu_1,r_0}(k-1)G_{\tilde{D}}(\ell)\cdot {\rm Ind}[{\rm dist}(x,z)\le k]\cdot{\rm Ind}[n<2k]\nonumber\\
&\le& (\kappa_d)^3 \hat{K}_{t,h}'\sum_{\ell,k} \ell^{2d} k^{2d} 
F_{\mu_1,r_0}(k-1)G_{\tilde{D}}(\ell)\cdot {\rm Ind}[n<2k]\le \tilde{F}_2(n), 
\end{eqnarray}
where $\tilde{F}_2(n)$ is a sub-exponentially decaying function, and the positive constant $\hat{K}_{t,h}'$ is given by 
$$
\hat{K}_{t,h}':=\sum_{u\in\ze^d}e^{-m_{t,h}{\rm dist}(0,u)}. 
$$
Combining this bound, (\ref{DeltakellXYZbound4}) and (\ref{DeltakellXYZbound5}), we obtain 
\begin{eqnarray}
\label{DeltakellXYZboundiiip}
& &\sum_{X,Y} \sum_{\ell,k} \Vert \Delta^{k,\ell}(X,Y,Z)\Vert\cdot
{\rm Ind}[{\rm supp}\; \Delta^{k,\ell}(X,Y,Z)\not\subset{Z}_n]\cdot{\rm Ind}[\mbox{Case~(iii)$'$ occurs}]\nonumber\\
& &\le C_\Delta^{{\rm (iii)}'}\Vert\tilde{\mathcal{H}}_{0,Z}\Vert \tilde{F}_2(n)
\end{eqnarray}
with a positive constant $C_\Delta^{{\rm (iii)}'}$
\smallskip 

Consider the remaining Case~(iv). Similarly, there exist $v,w\in Y$ such that ${\rm dist}(x,v)={\rm dist}(x,Y)\le \ell$ and 
${\rm dist}(w,y)={\rm dist}(y,Y)\le\ell$. In addition, there exist $u\in X$ and $u'\in Y$ such that 
${\rm dist}(u,u')={\rm dist}(u,Y)\le \ell$. Therefore, 
\begin{eqnarray*}
n&<&{\rm dist}(z,\zeta)\le {\rm dist}(z,x)+{\rm dist}(x,v)+{\rm dist}(v,w)+{\rm dist}(w,y)+{\rm dist}(y,\zeta)\\
&\le& 2k+2\ell+{\rm dist}(v,w).
\end{eqnarray*}
In the same way, we have 
\begin{eqnarray}
\label{DeltakellXYZbound6}
& &\sum_{X,Y} \sum_{\ell,k} \Vert \Delta^{k,\ell}(X,Y,Z)\Vert\cdot
{\rm Ind}[{\rm supp}\; \Delta^{k,\ell}(X,Y,Z)\not\subset{Z}_n]\cdot{\rm Ind}[\mbox{Case~(iv) occurs}]\nonumber\\
&\le&\sum_{z,x,v,w,y,\zeta}\sum_{Y\ni v,w}\sum_{u,u'}\sum_{X\ni u}
\sum_{\ell,k} \Vert \Delta^{k,\ell}(X,Y,Z)\Vert\cdot{\rm Ind}[z\in Z]\nonumber\\
&\times&{\rm Ind}[{\rm dist}(z,x)\le k]\cdot{\rm Ind}[{\rm dist}(x,v)\le \ell]\cdot{\rm Ind}[{\rm dist}(w,y)\le \ell]\cdot
{\rm Ind}[{\rm dist}(y,\zeta)\le k]\nonumber\\
&\times&{\rm Ind}[u'\in Y]\cdot{\rm Ind}[{\rm dist}(u,u')\le \ell]\cdot{\rm Ind}[n<2k+2\ell+{\rm dist}(v,w)]\nonumber\\
&\le&8\tilde{C}_{\tilde{U}}C_{\tilde{D}}\Vert\tilde{\mathcal{H}}_{0,Z}\Vert
\sum_{z,x,v,w,y,\zeta}\sum_{Y\ni v,w}\sum_{u,u'}\sum_{X\ni u}\sum_{\ell,k}\;
\ell^d\Vert \tilde{\mathcal{H}}_{0,X}\Vert\cdot|Y|^2
\Vert \tilde{V}_Y\Vert F_{\mu_1,r_0}(k-1)G_{\tilde{D}}(\ell)\nonumber\\
&\times&{\rm Ind}[z\in Z]\cdot {\rm Ind}[{\rm dist}(z,x)\le k]\cdot{\rm Ind}[{\rm dist}(x,v)\le \ell]\cdot
{\rm Ind}[{\rm dist}(w,y)\le \ell]\nonumber\\
&\times&{\rm Ind}[{\rm dist}(y,\zeta)\le k]\cdot{\rm Ind}[u'\in Y]\cdot{\rm Ind}[{\rm dist}(u,u')\le \ell]\cdot
{\rm Ind}[n<2k+2\ell+{\rm dist}(v,w)]. \nonumber\\
\end{eqnarray}
Note that
\begin{eqnarray*}
& &\sum_{u,u'}\sum_{X\ni u}{\rm Ind}[u'\in Y]\cdot{\rm Ind}[{\rm dist}(u,u')\le \ell]\cdot\Vert \tilde{\mathcal{H}}_{0,X}\Vert\\
&\le&C_{t,h} {K}_{t,h}\sum_{u,u'}{\rm Ind}[u'\in Y]\cdot{\rm Ind}[{\rm dist}(u,u')\le \ell]\\
&\le&C_{t,h}{K}_{t,h}\kappa_d\ell^d\sum_{u'}{\rm Ind}[u'\in Y]=C_{t,h}{K}_{t,h}\kappa_d\ell^d|Y|, 
\end{eqnarray*}
where we have used (\ref{boundednessH0Zx}). 
Using this estimate, the quantity in the right-hand side of (\ref{DeltakellXYZbound6}) can be evaluated as 
\begin{eqnarray}
\label{DeltakellXYZbound7}
& &\sum_{x,v,w,y,\zeta}\sum_{Y\ni v,w}\sum_{u,u'}\sum_{X\ni u}\sum_{\ell,k}
\ell^d\cdot\Vert \tilde{\mathcal{H}}_{0,X}\Vert\cdot|Y|^2
\Vert \tilde{V}_Y\Vert F_{\mu_1,r_0}(k-1)G_{\tilde{D}}(\ell)\nonumber\\
&\times& {\rm Ind}[{\rm dist}(z,x)\le k]\cdot{\rm Ind}[{\rm dist}(x,v)\le \ell]\cdot
{\rm Ind}[{\rm dist}(w,y)\le \ell]\nonumber\\
&\times&{\rm Ind}[{\rm dist}(y,\zeta)\le k]\cdot{\rm Ind}[u'\in Y]\cdot{\rm Ind}[{\rm dist}(u,u')\le \ell]\cdot
{\rm Ind}[n<2k+2\ell+{\rm dist}(v,w)]\nonumber\\
&\le& C_{t,h}{K}_{t,h}\kappa_d \sum_{x,v,w,y,\zeta}\sum_{Y\ni v,w}\sum_{\ell,k}
\ell^{2d}\cdot|Y|^3\Vert \tilde{V}_Y\Vert F_{\mu_1,r_0}(k-1)G_{\tilde{D}}(\ell)\nonumber\\
&\times&{\rm Ind}[{\rm dist}(z,x)\le k]\cdot{\rm Ind}[{\rm dist}(x,v)\le \ell]\cdot
{\rm Ind}[{\rm dist}(w,y)\le \ell]\cdot{\rm Ind}[{\rm dist}(y,\zeta)\le k]\nonumber\\
&\times&{\rm Ind}[n<2k+2\ell+{\rm dist}(v,w)]\nonumber\\
&\le& C_{t,h}{K}_{t,h}\kappa_d \sum_{x,v,w,y,\zeta}\sum_{\ell,k}\ell^{2d}e^{-\tilde{m}_{h,V}{\rm dist}(v,w)}
F_{\mu_1,r_0}(k-1)G_{\tilde{D}}(\ell)\nonumber\\
&\times&{\rm Ind}[{\rm dist}(z,x)\le k]\cdot{\rm Ind}[{\rm dist}(x,v)\le \ell]\cdot
{\rm Ind}[{\rm dist}(w,y)\le \ell]\cdot{\rm Ind}[{\rm dist}(y,\zeta)\le k]\nonumber\\
&\times&{\rm Ind}[n<2k+2\ell+{\rm dist}(v,w)]\nonumber\\
&\le& C_{t,h}{K}_{t,h}(\kappa_d)^3 \sum_{x,v,w}\sum_{\ell,k}\ell^{3d}k^d e^{-\tilde{m}_{h,V}{\rm dist}(v,w)}
F_{\mu_1,r_0}(k-1)G_{\tilde{D}}(\ell)\nonumber\\
&\times&{\rm Ind}[{\rm dist}(z,x)\le k]\cdot{\rm Ind}[{\rm dist}(x,v)\le \ell]\cdot{\rm Ind}[n<2k+2\ell+{\rm dist}(v,w)].
\end{eqnarray}
Similarly to Case~(i), we have 
$$
e^{-\tilde{m}_{h,V}{\rm dist}(v,w)/2}F_{\mu_1,r_0}(k-1)G_{\tilde{D}}(\ell)
\le \tilde{F}_3(k)\tilde{G}_3(\ell)\tilde{G}_4(n)
$$
with sub-exponentially decaying functions, $\tilde{F}_3(k)$, $\tilde{G}_3(\ell)$ and $\tilde{G}_4(n)$, 
under the condition $n<2k+2\ell+{\rm dist}(v,w)$. The quantity in the above right-hand side can be estimated as 
\begin{eqnarray*}
& &\sum_{x,v,w}\sum_{\ell,k}\ell^{3d}k^d e^{-\tilde{m}_{h,V}{\rm dist}(v,w)}
F_{\mu_1,r_0}(k-1)G_{\tilde{D}}(\ell)\nonumber\\
&\times&{\rm Ind}[{\rm dist}(z,x)\le k]\cdot{\rm Ind}[{\rm dist}(x,v)\le \ell]\cdot{\rm Ind}[n<2k+2\ell+{\rm dist}(v,w)]\\
&\le& \sum_{x,v,w}\sum_{\ell,k}\ell^{3d}k^d e^{-\tilde{m}_{h,V}{\rm dist}(v,w)/2}\tilde{F}_3(k)\tilde{G}_3(\ell)\tilde{G}_4(n)
{\rm Ind}[{\rm dist}(z,x)\le k]\cdot{\rm Ind}[{\rm dist}(x,v)\le \ell]\\
&\le& \hat{K}_{h,V} \sum_{x,v}\sum_{\ell,k}\ell^{3d}k^d \tilde{F}_3(k)\tilde{G}_3(\ell)\tilde{G}_4(n)\cdot
{\rm Ind}[{\rm dist}(z,x)\le k]\cdot{\rm Ind}[{\rm dist}(x,v)\le \ell]\\
&\le& \hat{K}_{h,V} \kappa_d \sum_{x}\sum_{\ell,k}\ell^{4d}k^d \tilde{F}_3(k)\tilde{G}_3(\ell)\tilde{G}_4(n)\cdot
{\rm Ind}[{\rm dist}(z,x)\le k]\\
&\le& \hat{K}_{h,V} (\kappa_d)^2 \sum_{\ell,k}\ell^{4d}k^{2d} \tilde{F}_3(k)\tilde{G}_3(\ell)\tilde{G}_4(n)
\le \hat{K}_{h,V}'\tilde{G}_4(n), 
\end{eqnarray*}
where the positive constant $\hat{K}_{h,V}$ is given by  (\ref{hatKhV}), and 
\begin{equation}
\hat{K}_{h,V}'=\hat{K}_{h,V} (\kappa_d)^2 \sum_{\ell,k}\ell^{4d}k^{2d} \tilde{F}_3(k)\tilde{G}_3(\ell).
\end{equation}
Combining this, (\ref{DeltakellXYZbound6}) and (\ref{DeltakellXYZbound7}), we obtain 
\begin{eqnarray}
\label{DeltakellXYZboundiv}
& &\sum_{X,Y} \sum_{\ell,k} \Vert \Delta^{k,\ell}(X,Y,Z)\Vert\cdot
{\rm Ind}[{\rm supp}\; \Delta^{k,\ell}(X,Y,Z)\not\subset{Z}_n]\cdot{\rm Ind}[\mbox{Case~(iv) occurs}]\nonumber\\
& &\le C_\Delta^{\rm (iv)}\Vert\tilde{\mathcal{H}}_{0,Z}\Vert \tilde{G}_4(n)
\end{eqnarray}
with the positive constant $C_\Delta^{\rm (iv)}$. 

Consequently, we obtain 
\begin{equation}
\label{commuUH0DUH0Znorm}
\Vert [\tilde{U}^\ast(r)[\tilde{\mathcal{H}}_0,\tilde{D}]\tilde{U}(r),\tilde{\mathcal{H}}_{0,Z}]
-[\tilde{U}^\ast(r)[\tilde{\mathcal{H}}_0,\tilde{D}]\tilde{U}(r),\tilde{\mathcal{H}}_{0,Z}]_{Z_n}\Vert
\le \Vert\tilde{\mathcal{H}}_{0,Z}\Vert\mathcal{F}_1(n)
\end{equation}
{from} (\ref{UH_0DUH0dif}), (\ref{DeltakellXYZboundi}), (\ref{DeltakellXYZboundiiip}) and (\ref{DeltakellXYZboundiv}), 
where $\mathcal{F}_1$ is the sub-exponentially decaying function.
\medskip 

In the same way, we can estimate the norm of (\ref{UH_0DUH0Zn}) as follows: 
When the summand in the right-hand side of (\ref{UH_0DUH0Zn}) is non-vanishing, there exists $z\in Z\cap (X\cup Y_\ell)_k$. 
Therefore, there exists $x\in X\cup Y_\ell$ such that ${\rm dist}(x,z)={\rm dist}(x,Z)\le k$. 
Clearly, the following two cases are possible: 
\begin{description}
\item{(a)}\quad $x\in X$
\item{(b)}\quad $x\in Y_\ell$
\end{description}
In both of the cases, there exists $u\in X\cap Y_\ell$ when $\Delta^{k,\ell}(X,Y,Z)$ is non-vanishing. 
Therefore, there exists $v\in Y$ such that ${\rm dist}(u,v)={\rm dist}(u,Y)\le \ell$.  

Consider first Case~(a). It is enough to estimate  
\begin{equation}
\sum_{z\in Z}\sum_{x,u,v}
\sum_{X\ni x,u}\sum_{Y\ni v} \sum_{\ell,k}\Vert\Delta^{k,\ell}(X,Y,Z)\Vert\cdot{\rm Ind}[{\rm dist}(z,x)\le k]\cdot
{\rm Ind}[{\rm dist}(u,v)\le \ell].
\end{equation}
For this purpose, we recall the inequality (\ref{normDeltakellXYZ}),   
\begin{equation*}
\Vert\Delta^{k,\ell}(X,Y,Z)\Vert\le 
8\tilde{C}_{\tilde{U}}C_{\tilde{D}}\kappa_d\Vert\tilde{\mathcal{H}}_{0,Z}\Vert\cdot\ell^d\cdot 
\Vert \tilde{\mathcal{H}}_{0,X}\Vert\cdot|Y|^2\Vert \tilde{V}_Y\Vert \cdot F_{\mu_1,r_0}(k-1)G_{\tilde{D}}(\ell). 
\end{equation*}
Therefore, it is sufficient to estimate 
\begin{eqnarray}
J_a&:=&\sum_{x,u,v}
\sum_{X\ni x,u}\sum_{Y\ni v} \sum_{\ell,k}\; \ell^d\cdot 
\Vert \tilde{\mathcal{H}}_{0,X}\Vert\cdot|Y|^2\Vert \tilde{V}_Y\Vert \cdot F_{\mu_1,r_0}(k-1)G_{\tilde{D}}(\ell)\nonumber\\
& &\times{\rm Ind}[{\rm dist}(z,x)\le k]\cdot{\rm Ind}[{\rm dist}(u,v)\le \ell].
\end{eqnarray}
By using (\ref{boundednessH0Zxy}) and (\ref{boundedtildeVX2}), we have 
\begin{eqnarray}
J_a&\le&C_{t,h}K_{t,h}\tilde{C}_{h,V}\sum_{x,u,v}\sum_{\ell,k}\; \ell^d\cdot
e^{-m_{t,h}{\rm dist}(x,u)}\cdot F_{\mu_1,r_0}(k-1)G_{\tilde{D}}(\ell)
\nonumber\\& &
\times{\rm Ind}[{\rm dist}(z,x)\le k]\cdot{\rm Ind}[{\rm dist}(u,v)\le \ell]\nonumber\\
&\le&C_{t,h}K_{t,h}\tilde{C}_{h,V}\kappa_d\sum_{x,u}\sum_{\ell,k}\; \ell^{2d}\cdot
e^{-m_{t,h}{\rm dist}(x,u)}\cdot F_{\mu_1,r_0}(k-1)G_{\tilde{D}}(\ell)\cdot
{\rm Ind}[{\rm dist}(z,x)\le k]\nonumber\\
&\le&C_{t,h}(K_{t,h})^2\tilde{C}_{h,V}\kappa_d \sum_{x}\sum_{\ell,k}\; \ell^{2d}\cdot F_{\mu_1,r_0}(k-1)G_{\tilde{D}}(\ell)\cdot
{\rm Ind}[{\rm dist}(z,x)\le k]\nonumber\\
&\le&C_{t,h}(K_{t,h})^2\tilde{C}_{h,V}(\kappa_d)^2 \sum_{\ell,k}\; \ell^{2d}\cdot k^d\cdot 
F_{\mu_1,r_0}(k-1)G_{\tilde{D}}(\ell)<\infty.
\end{eqnarray}
{From} these observations, we obtain 
\begin{equation}
\label{DeltakellXYZCaseabound}
\sum_{X,Y} \sum_{\ell,k}\Vert\Delta^{k,\ell}(X,Y,Z)\Vert\cdot
{\rm Ind}[\mbox{Case (a) occurs}]\le C_a \Vert \tilde{\mathcal{H}}_{0,Z}\Vert.
\end{equation}
with the positive constant $C_a$. 

Next, consider Case~(b). In this case, since $x\in Y_\ell$, 
there exists $w\in Y$ such that ${\rm dist}(x,w)={\rm dist}(x,Y)\le \ell$. Therefore, it is sufficient to estimate 
\begin{equation}
\sum_{z\in Z}\sum_{x,u,v,w}
\sum_{X\ni u}\sum_{Y\ni v,w} \sum_{\ell,k}\Vert\Delta^{k,\ell}(X,Y,Z)\Vert
\cdot{\rm Ind}[{\rm dist}(z,x)\le k,{\rm dist}(x,w)\le\ell,{\rm dist}(u,v)\le \ell].
\end{equation}
Similarly to Case~(a), we have 
\begin{eqnarray}
& &\sum_{x,u,v,w}
\sum_{X\ni u}\sum_{Y\ni v,w} \sum_{\ell,k}\; \ell^d\cdot 
\Vert \tilde{\mathcal{H}}_{0,X}\Vert\cdot|Y|^2\Vert \tilde{V}_Y\Vert \cdot F_{\mu_1,r_0}(k-1)G_{\tilde{D}}(\ell)\nonumber\\
& &\times{\rm Ind}[{\rm dist}(z,x)\le k,{\rm dist}(x,w)\le\ell]\cdot{\rm Ind}[{\rm dist}(u,v)\le \ell]\nonumber\\
&\le& C_{t,h}K_{t,h}\tilde{C}_{h,V}\sum_{x,u,v,w}\sum_{\ell,k}\; \ell^d\cdot e^{-\tilde{m}_{h,V}{\rm dist}(v,w)}\cdot
F_{\mu_1,r_0}(k-1)G_{\tilde{D}}(\ell)\nonumber\\
& &\times{\rm Ind}[{\rm dist}(z,x)\le k,{\rm dist}(x,w)\le\ell]\cdot{\rm Ind}[{\rm dist}(u,v)\le \ell]\nonumber\\
&\le&C_{t,h}K_{t,h}\tilde{C}_{h,V}\kappa_d \sum_{x,v,w}\sum_{\ell,k}\; \ell^{2d}\cdot e^{-\tilde{m}_{h,V}{\rm dist}(v,w)}\cdot
F_{\mu_1,r_0}(k-1)G_{\tilde{D}}(\ell)\nonumber\\
& &\times{\rm Ind}[{\rm dist}(z,x)\le k,{\rm dist}(x,w)\le\ell]\nonumber\\
&\le&C_{t,h}K_{t,h}\tilde{C}_{h,V}\kappa_d \hat{K}_{h,V}'' \sum_{x,w}\sum_{\ell,k}\; \ell^{2d}\cdot 
F_{\mu_1,r_0}(k-1)G_{\tilde{D}}(\ell)
\nonumber\\& &\times
{\rm Ind}[{\rm dist}(z,x)\le k,{\rm dist}(x,w)\le\ell]\nonumber\\
&\le&C_{t,h}K_{t,h}\tilde{C}_{h,V}(\kappa_d)^3 \hat{K}_{h,V}'' \sum_{\ell,k}\;
 \ell^{3d}\cdot k^d \cdot F_{\mu_1,r_0}(k-1)G_{\tilde{D}}(\ell)<\infty, 
\end{eqnarray}
where 
$$
\hat{K}_{h,V}'':=\sum_{v\in\ze^d}e^{-\tilde{m}_{h,V}{\rm dist}(v,0)}. 
$$
As a result, we obtain 
$$ 
\sum_{X,Y} \sum_{\ell,k}\Vert\Delta^{k,\ell}(X,Y,Z)\Vert\cdot{\rm Ind}[\mbox{Case~(b) occurs}]
\le C_b\Vert\tilde{\mathcal{H}}_{0,Z}\Vert
$$
with the positive constant $C_b$. Combining this, (\ref{UH_0DUH0Zn}) and (\ref{DeltakellXYZCaseabound}), we obtain 
\begin{equation}
\label{normcommuUH0DUH0ZZn}
\Vert[\tilde{U}^\ast(r)[\tilde{\mathcal{H}}_0,\tilde{D}]\tilde{U}(r),\tilde{\mathcal{H}}_{0,Z}]_{Z_n}\Vert
\le (C_a+C_b)\Vert\tilde{\mathcal{H}}_{0,Z}\Vert. 
\end{equation}

\Section{Local approximation of $[\tilde{U}^\ast(s)\tilde{V}\tilde{U}(s),\tilde{\mathcal{H}}_{0,Z}]$}

For the purpose of dealing with $\tilde{\mathcal{W}}_Z^{(3)}(s)$, in this appendix, we consider local approximation of 
the other operator, $[\tilde{U}^\ast(s)\tilde{V}\tilde{U}(s),\tilde{\mathcal{H}}_{0,Z}]$,   
in the expression (\ref{tildecalWZ3s}) of $\tilde{\mathcal{W}}_Z^{(3)}(s)$ with $\tilde{\mathcal{M}}_{t,Z}(s)$ 
of (\ref{tildecalMtZs}).

To begin with, we recall  
$$
\tilde{U}^\ast(s)\tilde{V}\tilde{U}(s)=\sum_X \sum_m \Delta_{\tilde{U}}^m(\tilde{V}_X,s).
$$
We define 
$$
[\tilde{U}^\ast(s)\tilde{V}\tilde{U}(s),\tilde{\mathcal{H}}_{0,Z}]_{Z_n}
:=\sum_X \sum_m \;[\Delta_{\tilde{U}}^m(\tilde{V}_X,s),\tilde{\mathcal{H}}_{0,Z}]\cdot{\rm Ind}[X_m\subset Z_n].
$$
Then, one has 
\begin{equation}
\Vert [\tilde{U}^\ast(s)\tilde{V}\tilde{U}(s),\tilde{\mathcal{H}}_{0,Z}]
- [\tilde{U}^\ast(s)\tilde{V}\tilde{U}(s),\tilde{\mathcal{H}}_{0,Z}]_{Z_n}\Vert
\le \sum_X \sum_m \Vert[\Delta_{\tilde{U}}^m(\tilde{V}_X,s),\tilde{\mathcal{H}}_{0,Z}]\Vert{\rm Ind}[X_m\not\subset Z_n].  
\end{equation}
When the summand in the right-hand side is non-vanishing, there exists $z\in Z\cap X_m$. 
In addition, for such a site $z$, there exists $x\in X$ such that ${\rm dist}(z,x)={\rm dist}(z,X)\le m$. 
{From} the condition $X_m\not\subset Z_n$, there exists $\zeta\in X_n\backslash Z_n$. 
Clearly, ${\rm dist}(z,\zeta)>n$. For this site $\zeta$, 
there exists $y\in X$ such that ${\rm dist}(\zeta,y)={\rm dist}(\zeta,X)\le m$. From these observations, one has 
$$ 
n<{\rm dist}(z,\zeta)\le {\rm dist}(z,x)+{\rm dist}(x,y)+{\rm dist}(y,\zeta)
\le 2m+{\rm dist}(x,y).
$$
By using these observations, (\ref{Deltanbound2}) and (\ref{boundedtildeVX}), we have 
\begin{eqnarray}
\label{commuUVUH0Zbound}
& &\Vert [\tilde{U}^\ast(s)\tilde{V}\tilde{U}(s),\tilde{\mathcal{H}}_{0,Z}]
- [\tilde{U}^\ast(s)\tilde{V}\tilde{U}(s),\tilde{\mathcal{H}}_{0,Z}]_{Z_n}\Vert\nonumber\\
&\le& 2\tilde{C}_{\tilde{U}}\Vert \tilde{\mathcal{H}}_{0,Z}\Vert \sum_{z\in Z}\sum_{x,y,\zeta} \sum_{X\ni x,y} \sum_m
|X|\Vert\tilde{V}_X\Vert\cdot F_{\mu_1,r_0}(m-1)\nonumber\\
& &\times{\rm Ind}[{\rm dist}(z,x)\le m,{\rm dist}(\zeta,y)\le m, n<2m+{\rm dist}(x,y)]\nonumber\\
&\le& 2\tilde{C}_{\tilde{U}}\tilde{C}_{h,V}\Vert \tilde{\mathcal{H}}_{0,Z}\Vert\sum_{z\in Z}\sum_{x,y,\zeta} \sum_m
e^{-\tilde{m}_{h,V}{\rm dist}(x,y)}\cdot F_{\mu_1,r_0}(m-1)\nonumber\\
& &\times{\rm Ind}[{\rm dist}(z,x)\le m,{\rm dist}(\zeta,y)\le m, n<2m+{\rm dist}(x,y)].
\end{eqnarray}
In order to estimate the right-hand side, we use the inequality, 
\begin{equation}
e^{-\tilde{m}_{h,V}{\rm dist}(x,y)/2}F_{\mu_1,r_0}(m-1)\le \tilde{F}_4(m)\tilde{G}_5(n)
\end{equation}
under the condition $n<2m+{\rm dist}(x,y)$. Here, $\tilde{F}_4(m)$ and $\tilde{G}_5(n)$ are sub-exponentially 
decaying functions. This type of inequality was already used as (\ref{additiveinequ}). As a result, we have 
\begin{eqnarray}
& &\sum_{x,y,\zeta} \sum_m
e^{-\tilde{m}_{h,V}{\rm dist}(x,y)}\cdot F_{\mu_1,r_0}(m-1)
\nonumber\\& &\times
{\rm Ind}[{\rm dist}(z,x)\le m,{\rm dist}(\zeta,y)\le m, n<2m+{\rm dist}(x,y)]\nonumber\\
&\le&\sum_{x,y,\zeta} \sum_m
e^{-\tilde{m}_{h,V}{\rm dist}(x,y)/2}\cdot \tilde{F}_4(m)\tilde{G}_5(n)\cdot
{\rm Ind}[{\rm dist}(z,x)\le m,{\rm dist}(\zeta,y)\le m]\nonumber\\
&\le& (\kappa_d)^2\hat{K}_{h,V}\left[\sum_m m^{2d}\cdot \tilde{F}_4(m)\right]\tilde{G}_5(n), 
\end{eqnarray}
where the positive constant $\hat{K}_{h,V}$ is given by  (\ref{hatKhV}).  
Substituting this into the right-hand side of (\ref{commuUVUH0Zbound}), we obtain 
\begin{equation}
\label{commuUVUH0Znorm}
\Vert [\tilde{U}^\ast(s)\tilde{V}\tilde{U}(s),\tilde{\mathcal{H}}_{0,Z}]
- [\tilde{U}^\ast(s)\tilde{V}\tilde{U}(s),\tilde{\mathcal{H}}_{0,Z}]_{Z_n}\Vert\le \Vert\tilde{\mathcal{H}}_{0,Z}\Vert\mathcal{F}_2(n)
\end{equation}
with the sub-exponentially decaying function $\mathcal{F}_2(n)$. 

In the same way, we have 
\begin{eqnarray}
\label{normcommuUVUH0ZZn}
& &\Vert[\tilde{U}^\ast(s)\tilde{V}\tilde{U}(s),\tilde{\mathcal{H}}_{0,Z}]_{Z_n}\Vert\nonumber\\
&\le& 2\sum_{z\in Z}\sum_x \sum_{X\ni x}\sum_m \Vert \tilde{\mathcal{H}}_{0,Z}\Vert\cdot\Vert \Delta_{\tilde{U}}^m(\tilde{V}_X,s)\Vert
\cdot{\rm Ind}[{\rm dist}(z,x)\le m] \nonumber\\
&\le& 2\tilde{C}_{\tilde{U}}\Vert \tilde{\mathcal{H}}_{0,Z}\Vert\sum_{z\in Z}\sum_x \sum_{X\ni x}\sum_m|X|
\Vert\tilde{V}_X\Vert \cdot F_{\mu_1,r_0}(m-1)\cdot{\rm Ind}[{\rm dist}(z,x)\le m] \nonumber\\
&\le&2\tilde{C}_{\tilde{U}}\tilde{C}_{h,V}\Vert \tilde{\mathcal{H}}_{0,Z}\Vert\sum_{z\in Z}\sum_x\sum_m
F_{\mu_1,r_0}(m-1)\cdot{\rm Ind}[{\rm dist}(z,x)\le m] \nonumber\\
&\le& 4\tilde{C}_{\tilde{U}}\tilde{C}_{h,V}\kappa_d \Vert \tilde{\mathcal{H}}_{0,Z}\Vert
\sum_m m^d F_{\mu_1,r_0}(m-1)\le C_{[\tilde{U}\tilde{V}\tilde{U},\tilde{\mathcal{H}}_{0,Z}]}\Vert \tilde{\mathcal{H}}_{0,Z}\Vert 
\end{eqnarray}
with the positive constant $C_{[\tilde{U}\tilde{V}\tilde{U},\tilde{\mathcal{H}}_{0,Z}]}$. 

\Section{Local boundedness of the interaction $\tilde{\mathcal{W}}^{(3)}(s)$}
\label{Sec:BoundtildecalWZ3s}

Similarly to the previous two cases of $\tilde{\mathcal{W}}_Z^{(1)}(s)$ and $\tilde{\mathcal{W}}_Z^{(2)}(s)$, 
it is enough to prove the inequality (\ref{diftildecalWZ3sPi}) below. 

We write 
\begin{equation}
\tilde{\mathcal{L}}_Z(s):=is[\tilde{U}^\ast(s)\tilde{V}\tilde{U}(s),\tilde{\mathcal{H}}_{0,Z}]
-\int_0^s dr [\tilde{U}^\ast(r)[\tilde{\mathcal{H}}_0,\tilde{D}(r)]\tilde{U}(r),\tilde{\mathcal{H}}_{0,Z}]
\end{equation}
for short. Then, the interaction $\tilde{\mathcal{W}}_Z^{(3)}(s)$ is written 
\begin{equation}
\tilde{\mathcal{W}}_Z^{(3)}(s)=\int_{-\infty}^{+\infty}dt\; w_\gamma(t)\int_0^t dr\; \tilde{\tau}_{s,r}(\tilde{\mathcal{L}}_Z(s))
\end{equation}
{from} (\ref{tildecalMtZs}) and (\ref{tildecalWZ3s}). Therefore, one has 
\begin{equation}
\tilde{\mathcal{W}}_Z^{(3)}(s)-\Pi_{\mathcal{Z}_n}(\tilde{\mathcal{W}}_Z^{(3)}(s))
=\int_{-\infty}^{+\infty}dt\; w_\gamma(t)\int_0^t dr\; \left[\tilde{\tau}_{s,r}(\tilde{\mathcal{L}}_Z(s))
-\Pi_{\mathcal{Z}_n}(\tilde{\tau}_{s,r}(\tilde{\mathcal{L}}_Z(s)))\right].
\end{equation}
The norm is estimated by 
\begin{equation}
\label{diftildecalWZ3s}
\Vert\tilde{\mathcal{W}}_Z^{(3)}(s)-\Pi_{\mathcal{Z}_n}(\tilde{\mathcal{W}}_Z^{(3)}(s))\Vert
\le \int_{-\infty}^{+\infty}dt\; w_\gamma(t)\int_0^{|t|} dr\; \left\Vert\tilde{\tau}_{s,r}(\tilde{\mathcal{L}}_Z(s))
-\Pi_{\mathcal{Z}_n}(\tilde{\tau}_{s,r}(\tilde{\mathcal{L}}_Z(s)))\right\Vert.
\end{equation}
In order to estimate the right-hand side, we introduce   
\begin{equation}
\tilde{\mathcal{L}}_{Z,Z_n}(s):=is[\tilde{U}^\ast(s)\tilde{V}\tilde{U}(s),\tilde{\mathcal{H}}_{0,Z}]_{Z_n}
-\int_0^s dr [\tilde{U}^\ast(r)[\tilde{\mathcal{H}}_0,\tilde{D}(r)]\tilde{U}(r),\tilde{\mathcal{H}}_{0,Z}]_{Z_n}. 
\end{equation}
By using (\ref{commuUH0DUH0Znorm}) and (\ref{commuUVUH0Znorm}), we have 
\begin{eqnarray}
\label{tildecalLZsdifnorm}
& &\left\Vert\tilde{\mathcal{L}}_Z(s)- \tilde{\mathcal{L}}_{Z,Z_n}(s)\right\Vert\nonumber\\
&\le& |s|\left\Vert [\tilde{U}^\ast(s)\tilde{V}\tilde{U}(s),\tilde{\mathcal{H}}_{0,Z}]
- [\tilde{U}^\ast(s)\tilde{V}\tilde{U}(s),\tilde{\mathcal{H}}_{0,Z}]_{Z_n}\right\Vert\nonumber\\
&+&\int_0^{|s|}dr \left\Vert[\tilde{U}^\ast(r)[\tilde{\mathcal{H}}_0,\tilde{D}(r)]\tilde{U}(r),\tilde{\mathcal{H}}_{0,Z}]
- [\tilde{U}^\ast(r)[\tilde{\mathcal{H}}_0,\tilde{D}(r)]\tilde{U}(r),\tilde{\mathcal{H}}_{0,Z}]_{Z_n} \right\Vert\nonumber\\
&\le&|s|\Vert\tilde{\mathcal{H}}_{0,Z}\Vert[\mathcal{F}_1(n)+\mathcal{F}_2(n)].
\end{eqnarray}
Further, by using (\ref{normcommuUH0DUH0ZZn}) and (\ref{normcommuUVUH0ZZn}), we obtain 
\begin{eqnarray}
\label{normtildecalLZZns}
\Vert\tilde{\mathcal{L}}_{Z,Z_n}(s)\Vert&\le& |s|\Vert[\tilde{U}^\ast(s)\tilde{V}\tilde{U}(s),\tilde{\mathcal{H}}_{0,Z}]_{Z_n}\Vert
+\int_0^{|s|}dr \Vert [\tilde{U}^\ast(r)[\tilde{\mathcal{H}}_0,\tilde{D}(r)]\tilde{U}(r),\tilde{\mathcal{H}}_{0,Z}]_{Z_n}\Vert
\nonumber\\
&\le&(C_a+C_b+C_{[\tilde{U}\tilde{V}\tilde{U},\tilde{\mathcal{H}}_{0,Z}]})|s|\cdot\Vert \tilde{\mathcal{H}}_{0,Z}\Vert. 
\end{eqnarray}

Note that 
\begin{eqnarray}
\label{tildetautildecalLZsPiZndif}
\left\Vert \tilde{\tau}_{s,t'}(\tilde{\mathcal{L}}_Z(s))-\Pi_{\mathcal{Z}_n}(\tilde{\tau}_{s,t'}(\tilde{\mathcal{L}}_Z(s)))
\right\Vert&\le&\left\Vert \tilde{\tau}_{s,t'}(\tilde{\mathcal{L}}_Z(s))-\tilde{\tau}_{s,t'}(\tilde{\mathcal{L}}_{Z,Z_n}(s))\right\Vert
\nonumber\\
&+&\left\Vert \tilde{\tau}_{s,t'}(\tilde{\mathcal{L}}_{Z,Z_n}(s))
- \Pi_{\mathcal{Z}_n}(\tilde{\tau}_{s,t'}(\tilde{\mathcal{L}}_{Z,Z_n}(s)))\right\Vert\nonumber\\
&+&\left\Vert \Pi_{\mathcal{Z}_n}(\tilde{\tau}_{s,t'}(\tilde{\mathcal{L}}_{Z,Z_n}(s)))-
\Pi_{\mathcal{Z}_n}(\tilde{\tau}_{s,t'}(\tilde{\mathcal{L}}_{Z}(s)))\right\Vert.\nonumber\\
\end{eqnarray}
As to the first and third terms in the right-hand side, we have 
\begin{eqnarray}
\left\Vert \tilde{\tau}_{s,t'}(\tilde{\mathcal{L}}_Z(s))-\tilde{\tau}_{s,t'}(\tilde{\mathcal{L}}_{Z,Z_n}(s))\right\Vert
&\le& \left\Vert \tilde{\mathcal{L}}_Z(s)- \tilde{\mathcal{L}}_{Z,Z_n}(s)\right\Vert\nonumber\\
&\le& |s|\Vert\tilde{\mathcal{H}}_{0,Z}\Vert[\mathcal{F}_1(n)+\mathcal{F}_2(n)] 
\end{eqnarray}
and 
\begin{eqnarray}
\left\Vert \Pi_{\mathcal{Z}_n}(\tilde{\tau}_{s,t'}(\tilde{\mathcal{L}}_{Z,Z_n}(s)))-
\Pi_{\mathcal{Z}_n}(\tilde{\tau}_{s,t'}(\tilde{\mathcal{L}}_{Z}(s)))\right\Vert
&\le& \left\Vert \tilde{\mathcal{L}}_{Z,Z_n}(s)-\tilde{\mathcal{L}}_{Z}(s) \right\Vert\nonumber\\ 
&\le& |s|\Vert\tilde{\mathcal{H}}_{0,Z}\Vert[\mathcal{F}_1(n)+\mathcal{F}_2(n)], 
\end{eqnarray}
where we have used the inequality (\ref{tildecalLZsdifnorm}).
Substituting these into the right-hand side of (\ref{diftildecalWZ3s}), we have 
\begin{eqnarray}
\label{diftildecalWZ3Pi}
& &\Vert\tilde{\mathcal{W}}_Z^{(3)}(s)-\Pi_{\mathcal{Z}_n}(\tilde{\mathcal{W}}_Z^{(3)}(s))\Vert\nonumber\\
&\le& 2|s|\Vert\tilde{\mathcal{H}}_{0,Z}\Vert[\mathcal{F}_1(n)+\mathcal{F}_2(n)]
\int_{-\infty}^{+\infty}dt\; w_\gamma(t)|t|\nonumber\\ 
&+&\int_{-\infty}^{+\infty}dt\; w_\gamma(t)\int_0^{|t|} dr\left\Vert \tilde{\tau}_{s,r}(\tilde{\mathcal{L}}_{Z,Z_n}(s))
- \Pi_{\mathcal{Z}_n}(\tilde{\tau}_{s,r}(\tilde{\mathcal{L}}_{Z,Z_n}(s)))\right\Vert.\nonumber\\
\end{eqnarray}
The second term in the right-hand side is estimated as 
\begin{eqnarray}
\label{diftildecalWZ3Pi2}
& &\int_{-\infty}^{+\infty}dt\; w_\gamma(t)\int_0^{|t|} dr\;\left\Vert \tilde{\tau}_{s,r}(\tilde{\mathcal{L}}_{Z,Z_n}(s))
- \Pi_{\mathcal{Z}_n}(\tilde{\tau}_{s,r}(\tilde{\mathcal{L}}_{Z,Z_n}(s)))\right\Vert\nonumber\\
&\le& 4\Vert \tilde{\mathcal{L}}_{Z,Z_n}(s)\Vert \tilde{W}_\gamma(T)
+\int_{-T}^{+T}dt\; w_\gamma(t)\int_0^{|t|} dr\left\Vert \tilde{\tau}_{s,r}(\tilde{\mathcal{L}}_{Z,Z_n}(s))
- \Pi_{\mathcal{Z}_n}(\tilde{\tau}_{s,r}(\tilde{\mathcal{L}}_{Z,Z_n}(s)))\right\Vert\nonumber\\
\end{eqnarray}
for any positive $T$, where we have written 
\begin{equation}
\tilde{W}_\gamma(T):=\int_T^\infty dt\;w_\gamma(t)|t|.
\end{equation}
Further, in order to estimate the integrand in the second term in the right-hand side, 
we use the inequality (\ref{taudifPi}) that is obtained from 
the Lieb-Robinson bound. As result, we have 
\begin{eqnarray}
& &\left\Vert \tilde{\tau}_{s,r}(\tilde{\mathcal{L}}_{Z,Z_n}(s))
- \Pi_{\mathcal{Z}_n}(\tilde{\tau}_{s,r}(\tilde{\mathcal{L}}_{Z,Z_n}(s)))\right\Vert
\nonumber\\&\le& 
2\kappa_d C_{\tilde{U},\tilde{\mathcal{H}}}\mathcal{C}_3\Vert \tilde{\mathcal{L}}_{Z,Z_n}(s)\Vert \exp[v_{\rm LR}|r|]\cdot
n^d F_{\mu_3,2r_0}(n),
\end{eqnarray}
where we have also used the inequality (\ref{sumFmu2bound}) and $|Z_n|\le \kappa_d|Z|n^d=2\kappa_d n^d$. 
Note that
\begin{eqnarray}
\int_{-T}^Tdt\;w_\gamma(t)\int_0^{|t|}dr\; e^{v_{\rm LR}r}&\le&
\frac{2}{v_{\rm LR}}\int_{0}^Tdt\;w_\gamma(t)e^{v_{\rm LR}t}\nonumber\\
&\le&\frac{2c_\gamma}{v_{\rm LR}}\int_{0}^Tdt\;e^{v_{\rm LR}t}\le \frac{2c_\gamma}{(v_{\rm LR})^2}e^{v_{\rm LR}T},
\end{eqnarray}
where we have used the bound $w_\gamma(t)\le c_\gamma$ with a positive constant $c_\gamma$ for all $t$ \cite{BMNS}. 
These two inequalities yield 
\begin{eqnarray}
\label{diftildecalWZ3Pi22}
& &\int_{-T}^{+T}dt\; w_\gamma(t)\int_0^{|t|} dr\left\Vert \tilde{\tau}_{s,r}(\tilde{\mathcal{L}}_{Z,Z_n}(s))
- \Pi_{\mathcal{Z}_n}(\tilde{\tau}_{s,r}(\tilde{\mathcal{L}}_{Z,Z_n}(s)))\right\Vert\nonumber\\
&\le& \frac{4\kappa_d C_{\tilde{U},\tilde{\mathcal{H}}}\mathcal{C}_3c_\gamma}{(v_{\rm LR})^2}
\Vert \tilde{\mathcal{L}}_{Z,Z_n}(s)\Vert \cdot e^{v_{\rm LR}T}n^d F_{\mu_3,2r_0}(n). 
\end{eqnarray}
We take $T=T(n)$ which is the function of $n$ given by (\ref{T(n)}). Then, 
$$
\exp[v_{\rm LR}T(n)]\cdot F_{\mu_3,2r_0}(n)=F_{\mu_3/2,2r_0}(n)
$$
as in (\ref{FexpTnF}). By combining this, (\ref{diftildecalWZ3Pi}), (\ref{diftildecalWZ3Pi2}) and (\ref{diftildecalWZ3Pi22}), 
we obtain 
\begin{eqnarray}
\Vert\tilde{\mathcal{W}}_Z^{(3)}(s)-\Pi_{\mathcal{Z}_n}(\tilde{\mathcal{W}}_Z^{(3)}(s))\Vert
&\le& 4\tilde{W}_\gamma(0)|s|\Vert\tilde{\mathcal{H}}_{0,Z}\Vert[\mathcal{F}_1(n)+\mathcal{F}_2(n)]
+4\Vert \tilde{\mathcal{L}}_{Z,Z_n}(s)\Vert \tilde{W}_\gamma(T(n))\nonumber\\
&+&\frac{4\kappa_d C_{\tilde{U},\tilde{\mathcal{H}}}\mathcal{C}_3c_\gamma}{(v_{\rm LR})^2}
\Vert \tilde{\mathcal{L}}_{Z,Z_n}(s)\Vert \cdot n^d F_{\mu_3/2,2r_0}(n).
\end{eqnarray}
Since the quantity $\Vert \tilde{\mathcal{L}}_{Z,Z_n}(s)\Vert$ satisfies the inequality (\ref{normtildecalLZZns}), 
we obtain the desired bound, 
\begin{equation} 
\label{diftildecalWZ3sPi}
\Vert\tilde{\mathcal{W}}_Z^{(3)}(s)-\Pi_{\mathcal{Z}_n}(\tilde{\mathcal{W}}_Z^{(3)}(s))\Vert
\le |s|\Vert\tilde{\mathcal{H}}_{0,Z}\Vert\mathcal{F}_3(n),
\end{equation}
with the sub-exponentially decaying function $\mathcal{F}_3(n)$. 

\bigskip\bigskip\bigskip

\noindent
{\bf Acknowledgements:} I would like to thank Shu Nakamura for helpful discussions. 


\begin{thebibliography}{99}

\bibitem{Hastings} M. B. Hastings, {\it The stability of free fermi Hamiltonians}, 
J. Math. Phys. {\bf 60}, 042201 (2019). 
 
\bibitem{BravHast} S. Bravyi, and M. B. Hastings, {\it A short proof of stability of 
topological order under local perturbations}, 
Commun. Math. Phys. {\bf 307}, 609 (2011), arXiv:1001.4363. 

\bibitem{GMP} A. Giuliani, V. Mastropietro, and M. Porta, 
{\it Universality of the Hall Conductivity in Interacting Electron Systems}, 
Commun. Math. Phys. {\bf 349}, 1107--1161 (2017). 

\bibitem{KennedyTasaki} T. Kennedy, and H. Tasaki, {\it Hidden symmetry breaking and the Haldane phase in $S=1$ 
quantum spin chains}, Commun. Math. Phys. {\bf 147}, 431--484 (1992). 

\bibitem{Yarotsky} D. Yarotsky, {\it Ground states in relatively bounded quantum perturbations of 
classical lattice systems}, Commun. Math. Phys. {\bf 261}, 799--819 (2006). 

\bibitem{RS} W. De Roeck, and M. Salmhofer, {\it Persistence of exponential decay and spectral gaps for interacting fermions}, 
Commun. Math. Phys. {\bf 365}, 773--796 (2019). 

\bibitem{BraRobi} O. Bratteli, and D. W. Robinson, {\it Operator Algebras and 
Quantum Statistical Mechanics 1}, 2nd Ed. Springer-Verlag, New York (1987). 

\bibitem{Kato} T. Kato, {\it Perturbation Theory for Linear Operators}, 2nd edn. 
(Springer, Berlin, Heidelberg, New York, 1980). 

\bibitem{BMNS} S. Bachmann, S. Michalakis, B. Nachtergaele, and R. Sims, 
{\it Automorphic Equivalence within Gapped Phases of Quantum Lattice Systems}, 
Commun. Math. Phys. {\bf 309}, 835--871 (2012). 

\bibitem{NSY} B. Nachtergaele, R. Sims, and A. Young,
{\it Lieb-Robinson bounds, the spectral flow, and stability of the spectral gap for lattice fermion systems}, arXiv:1705.08553.

\bibitem{BHV} S. Bravyi, M. Hastings, and F. Verstraete, 
{\it Lieb-Robinson bounds and the generation of correlations and Topological quantum order}, 
Phys. Rev. Lett. {\bf 97}, 050401 (2006). 

\bibitem{HK} M. B. Hastings, and T. Koma, {\it Spectral Gap and Exponential Decay of Correlations}, 
Commun. Math. Phys. {\bf 265}, 781--804 (2006). 

\end{thebibliography}
\end{document}